\colorlet{darkgreen}{green!45!black}
\newtheorem{theorem}{Theorem}[section]
\newtheorem{definition}[theorem]{Definition}
\newtheorem{conjecture}[theorem]{Conjecture}
\newtheorem{lemma}[theorem]{Lemma}
\newtheorem{observation}[theorem]{Observation}
\newtheorem{claim}[theorem]{Claim}
\newtheorem{corollary}[theorem]{Corollary}
\newcommand{\calV}{\mathcal{V}}
\newcommand{\calS}{\mathcal{S}}
\newcommand{\R}{\mathbb{R}}
\newcommand{\N}{\mathbb{N}}
\newcommand{\dist}{\text{dist}}
\newcommand{\eps}{\varepsilon}
\newcommand{\opt}{\text{OPT}\xspace}
\newcommand{\calC}{\mathcal{C}}
\newcommand{\calT}{\mathcal{T}}
\newcommand{\ind}{\mathbb{I}}
\newcommand{\Inf}{\mathsf{Inf}}
\newcommand{\Ind}{\mathbf{1}}
\newcommand{\Valug}{\mathsf{Val_{UG}}}
\newcommand{\Optug}{\mathsf{Opt_{UG}}}
\newcommand{\E}{\mathbb{E}}
\newcommand{\mvd}{\textsc{Metric Violation Distance}\xspace}
\newcommand{\maxmvd}{\textsc{Max-Metric Violation Distance}\xspace}
\newcommand{\umvd}{\textsc{Ultrametric Violation 
Distance}\xspace}
\newcommand{\maxumvd}{\textsc{Max-Ultrametric Violation Distance}\xspace}
\newcommand{\kmedian}{\textsc{$k$-median}\xspace}
\newcommand{\correlationclustering}{\textsc{Correlation Clustering}\xspace}
\newcommand{\gz}{\geq 0}
\newcommand{\nct}{\binom{[n]}{2}}
\newcommand{\tsp}{\textsc{Traveling Salesman Problem}\xspace}
\newcommand*{\inlineequation}[2][]{%
  \begingroup
    % Put \refstepcounter at the beginning, because
    % package `hyperref' sets the anchor here.
    \refstepcounter{equation}%
    \ifx\\#1\\%
    \else
      \label{#1}%
    \fi
    % prevent line breaks inside equation
    \relpenalty=10000 %
    \binoppenalty=10000 %
    \ensuremath{%
      % \displaystyle % larger fractions, ...
      #2%
    }%
    ~\@eqnnum
  \endgroup
}
  \def\rem#1{{\marginpar{\raggedright\scriptsize #1}}}
 \newcommand{\vca}[1]{\rem{\textcolor{blue}{$\bullet$ #1}}}
\newcommand{\vi}[1]{(V: \textcolor{blue}{#1})}
\newcommand{\el}[1]{\rem{\textcolor{gray}{$\bullet$ #1}}}
\newcommand{\eui}[1]{\textcolor{gray}{#1}}
\newcommand{\ar}[1]{\rem{\textcolor{red}{$\bullet$ #1}}}
 \newcommand{\vi}[1]{}
 \newcommand{\vca}[1]{}
 \newcommand{\eui}[1]{}
 \newcommand{\el}[1]{}
 \newcommand{\ar}[1]{}
\title{Fitting Metrics and Ultrametrics with Minimum Disagreements}
\author{Vincent Cohen-Addad \thanks{Google Z\"urich, Switzerland.} \and Chenglin Fan \thanks{Sorbonne Université, Paris, France.} \and Euiwoong Lee\thanks{University of Michigan, Ann Arbor, Michigan, USA. Partially supported by a gift from Google.} \and Arnaud de Mesmay\thanks{LIGM, CNRS, Univ. Gustave Eiffel, ESIEE Paris, F-77454 Marne-la-Vallée, France.}}
\date{}
\begin{document}

\maketitle
\begin{abstract}
        Given $x \in (\R_{\geq 0})^{\binom{[n]}{2}}$ recording pairwise distances, the \mvd problem asks to compute the $\ell_0$ distance between $x$ and the metric cone; i.e., modify the minimum number of entries of $x$ to make it a metric. Due to its large number of applications in various data analysis and optimization tasks, this problem has been actively studied recently.
    
    We present an $O(\log n)$-approximation algorithm for \mvd, exponentially improving the previous best approximation ratio of $O(OPT^{1/3})$ of Fan, Raichel, and Van Buskirk [\textit{SODA}, 2018]. Furthermore, a major strength of our algorithm is its simplicity and running time. 
    %It is a simple pivot-based algorithm%~\cite{ailon2008aggregating, ailon2011fitting}
    %, where in each iteration, the algorithm chooses a random pivot $i \in [n]$, {\em minimally} changes $\{ x(j, k) \}_{j, k \in [n] \setminus \{ i \}}$ to ensure every triangle inequality involving $i$ is satisfied, and recursively solves the smaller instance $[n] \setminus \{ i \}$. %We also show the tightness of our analysis by proving that {\em any} pivot-based algorithm cannot yield better than an $O(\log^{1-\eps} n)$-approximation for any $\eps > 0$. 
    We also study the related problem of \umvd, where the goal is to compute the $\ell_0$ distance to the cone of ultrametrics, and achieve a constant factor approximation algorithm.
    The  \umvd problem can be regarded as an extension of the problem of fitting ultrametrics studied by Ailon and Charikar [\textit{SIAM J. Computing}, 2011] and by  Cohen-Addad, Das,  Kipouridis,   Parotsidis, and  Thorup [\textit{FOCS}, 2021]
    from $\ell_1$ norm to $\ell_0$ norm. We show that this problem can be favorably interpreted as an instance of 
    \textsc{Correlation Clustering} with an additional hierarchical structure, which we solve using a new $O(1)$-approximation algorithm for correlation clustering
    that has the structural property that it outputs a refinement of the optimum clusters. An algorithm satisfying such a property can be considered of independent interest.
    We also provide an $O(\log n \log \log n)$ approximation algorithm for weighted instances. 
    Finally, we investigate the complementary version of these problems where one aims at choosing a maximum number of entries of $x$ forming an (ultra-)metric. In stark contrast with the minimization versions, we prove that these maximization versions are hard to approximate within any constant factor assuming 
    the Unique Games Conjecture.

    %\el{ONE MORE SENTENCE ON HOW WE DO IT?}
   % \ar{I removed a few references and made the journals explicit.}
\end{abstract}
%\thispagestyle{empty}
%\newpage
%\clearpage
%\setcounter{page}{1}
\section{Introduction}
Numerous tasks arising in data analysis and optimization deal with data given as {\em distances} between pairs of objects. The triangle inequality, which simply states that ``the distance from $i$ to $j$ cannot be strictly greater than the distance from $i$ to $k$ plus the distance from $k$ to $i$,'' is arguably the most natural constraint one can expect from the set of pairwise distances. Formally in this paper, $x \in (\R_{\geq 0})^{\binom{[n]}{2}}$ is called a {\em metric} if for every distinct $i, j, k \in [n]$, $x(i, j) \leq x(i, k) + x(j, k)$. Note that we allow distances to be zero. The metric structure is desirable in most machine learning and data analysis tasks where the input
models dissimilarity between objects, e.g.: nearest neighbors, metric learning, and clustering.

%Given its ubiquity, given any data containing pairwise distances, it is natural to expect that the distances form a metric. 

The metric structure also allows for the existence of efficient algorithms with provable theoretical guarantees. For example, many central optimization problems in clustering and network design including \kmedian and \tsp have no polynomial time algorithms with finite approximation ratios without the assumption that the given distances form a metric, but they admit constant factor approximation algorithms with the assumption.
Furthermore, the metric structure has also helped develop faster algorithms (e.g., \cite{Elkan2003} uses the triangle inequality to accelerate \textsc{$k$-means}).
% that provide guarantees on the quality of	the output in comparison with the optimal clustering, for example using the triangle inequality to accelerate \textsc{$k$-means} in~\cite{Elkan2003}.
%Even for \textsc{Constraint Satisfaction Problems} (CSPs) that do not have an immediate interpretation in terms of distances, the metric structure has been frequently assumed to show better algorithms~\cite{?}. \el{Who started metric CSP and why?} 

However, when the distance data $x \in \R_{\geq 0}^{\binom{[n]}{2}}$ is obtained by measurements, it may be the case that the data does not form a metric due to noise of the measurement, missing data, and other corruptions. In this scenario, it is natural to find the metric $y$ {\em closest to} the given data. Assuming that the uncorrupted data should form a metric, $y$ can be considered as the {\em denoised} version of $x$ on which one can perform various tasks using the metric structure. 
 For example, the
 mPAM matrices~\cite{Dayhoff78chapter22}, which represent a certain measure of dissimilarity in protein sequencing, tend  not to be distance matrices. However, the  query can be  accelerated  in   biological databases once  a metric-based data indexing scheme can be constructed.

%\vca{
%I add one from biology, please check  whether that is OK.}
%\el{Looks good to me.}

The measure of {\em closeness} we study in this paper is the $\ell_0$ distance $\| x - y \|_0 = | \{ (i, j) \in \binom{[n]}{2} : x(i, j) \neq y(i, j) \} |$. 
This choice seems natural in practice because often pairwise distances are obtained by different (human) classifiers, and while most classifiers will do a good job, if one classifier makes an error, it may make a large amount; if $x$ originally forms a metric but one entry $x(i, j)$ is increased to a very large number, the $\ell_0$ objective will return the original metric where convex objectives like $\ell_1$ or $\ell_2$ may likely change every distance. 

%Theoretically, it belongs to an interesting class of problems ``given a convex set $S \subseteq \R^m$ and $x \in \R^m$, compute the $\ell_0$ distance from $x$ to $S$.'' When $S$ is a fixed set (given $m$) and $x$ is the only input, we are not aware of any previous work that addressed this question; the closest one is the very recent work of Austrin et al.~\cite{austrin2021optimal} studying the case $S \subseteq \F_q^m$ is a fixed subspace and $x \in \F_q^m$ is the input. \el{seems out of context.. can remove this para.}

\paragraph*{Metric Violation Distance.} Hence we study the following \mvd problem: Given $x \in \R_{\geq 0}^{\binom{[n]}{2}}$, find $y \in M_n$ that minimizes $\|x - y\|_0$, where $M_n \subseteq \R^{\binom{n}{2}}$ is the set of all metrics on $n$ points.   
Brickell, Dhillon, Sra, and Tropp~\cite{brickell2008metric} first formulated the problem with $\ell_p$ objectives for $1 \leq p \leq \infty$, where the problem can be solved exactly via linear or convex programming. The $\ell_0$ version has been recently introduced and actively studied~\cite{gilbert2017if, fan2018metric, fan2020generalized}. 
Motivated by resolving inconsistencies in biological measurements, the maximization variant of the $\ell_0$ objective (i.e., maximize $\binom{n}{2} - \| x - y \|_0$) was also studied~\cite{duggal2013resolving}. 

The best approximation ratio for \mvd is  $O(OPT^{1/3})$, where $OPT$ is the minimum $\ell_0$ distance which can be as big as $\Omega(n^2)$, and the corresponding algorithm runs in time $O(n^6)$~\cite{fan2018metric}. The best hardness of approximation ratio is $2$ assuming the Unique Games Conjecture~\cite{fan2018metric}.

%Instead of repair distances,  the  embeddings into Euclidean space,  trees,  and  ultrametrics by removing minimum of points had been studied in~\cite{DBLP:conf/soda/SidiropoulosWW17}. Compared to their problem of revmoing outliers, the \mvd problem looks harder since removing points would never create any new distance violations.

\iffalse
%Other Related Works. 
In particular, our problem is closely related to \textsc{Multicut}, %a generalization of standard $s$-$t$ cut problem, 
where given a weighted graph $G$ with $k$ marked $(s_i,t_i)$ vertex pairs, the goal is to find a minimum
weight subset of edges $S$ such that there is no path in $G\setminus S$ between $s_i$ and $t_i$ for any $1\leq i\leq k$. \textsc{Multicut} has been extensively studied, both for directed and undirected graphs.  For undirected graphs, the problem captures vertex cover even when $G$ is a tree and hence is APX-Hard.  Moreover assuming UGC~\cite{Khot02a} there is no constant factor approximation~\cite{ChawlaKKRS06}. In general the best known approximation factor is $O(\log k)$~\cite{GargVY96}, which improves to an $O(r)$-approximation  when $G$ excludes $K_r$ as  a  minor~\cite{AbrahamGGNT14}.
Another closely related problem is \textsc{Length Bounded Cut} (LB-CUT), where given a value $L$ and an $(s,t)$ pair in a graph,  
the goal is to delete the minimum number of edges such that there is no path between $s$ and $t$ with length $\leq L$.  For any fixed $L$, Lee~\cite{l-ihcifp-17} showed that it is hard to approximate within a factor of $\Omega(\sqrt{L})$ in undirected graphs, 
and within a factor of $\Omega(L)$ for directed graphs.
\fi

Our first result is the following $O(\log n)$ approximation algorithm for \mvd.
This is an exponential improvement over the previous best ratio of $O(n^{2/3})$ which 
is also significantly faster. 

\begin{theorem}
There exists a randomized $O(\log n)$-approximation algorithm for \mvd that runs 
in time $O(n^3)$. 
\label{thm:mvd}
\end{theorem}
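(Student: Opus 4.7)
The plan is to reformulate \mvd as a fractional covering LP over ``bad cycles'' in the complete graph and round it by region growing, in the spirit of the Garg--Vazirani--Yannakakis (GVY) algorithm for \textsc{Multicut}. The starting point is the following characterization of feasibility: a set $S\subseteq\nct$ admits an extension $y$ with $y_e=x_e$ for every $e\notin S$ and $y$ a metric on $[n]$ if and only if, for every $(i,j)\notin S$, the shortest $i$--$j$ path in $([n],\overline{S},x)$ has length at least $x_{ij}$. Equivalently, every ``bad cycle''---a cycle of the complete graph in which some edge strictly exceeds the sum of the other edges---must contain at least one edge of $S$. When this condition holds, the extension is produced by assigning $y_e$ for $e\in S$ the shortest-path distance in $([n],\overline{S},x)$, which in fact agrees with $x_e$ on $e\notin S$ by the condition itself. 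Note that covering only bad \emph{triangles} does not suffice: one can construct short instances ($n=4$ already works) where no $\overline{S}$-triangle is violated, yet a $4$-cycle in $\overline{S}$ still witnesses infeasibility.

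Based on this characterization, one writes the covering LP
\begin{equation*}
\min \sum_{e\in\nct} z_e \quad\text{s.t.}\quad \sum_{e\in C} z_e \ge 1 \ \forall\text{ bad cycle } C,\qquad z_e\ge 0,
\end{equation*}
whose optimum lower-bounds $\opt$. Though the LP has exponentially many constraints, a polynomial-time separation oracle is available via shortest-path computation: for each pair $(i,j)$, find the minimum-$z$-weight $i$--$j$ path whose $x$-length is below $x_{ij}$, and check whether its $z$-weight plus $z_{ij}$ is below $1$. The fractional optimum $z^*$ is then rounded by a GVY-style region-growing routine: iteratively pick a vertex $v$, grow a ball $B(v,R)$ with random radius $R\le 1/2$ drawn from the GVY density on lengths $z^*$, place all boundary edges of $B$ into $S$, and contract $B$ before continuing. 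Any bad cycle has total $z^*$-length at least $1$, so a ball of radius below $1/2$ cannot contain it entirely, forcing the cycle to contribute at least one boundary edge to $S$. A standard volume-accounting argument then yields $\E[|S|]=O(\log n)\cdot\sum_e z^*_e = O(\log n)\cdot\opt$, and the metric $y$ is recovered by setting $y_e$ for $e\in S$ to the shortest-path distance in $([n],\overline{S},x)$.

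The main obstacle is to carry out the region-growing analysis for these cycle constraints rather than the classical source--sink demand pairs, and in particular to argue that each bad cycle---irrespective of its length---is hit with the right probability in expectation; the key point is that the cycle constraint $\sum_{e\in C}z^*_e\ge 1$ plays exactly the role that the separation-length lower bound plays in standard multicut. For the $O(n^3)$ running time claimed in the theorem, one bypasses a generic LP solver: the bottleneck is an all-pairs shortest-path computation (Floyd--Warshall, $O(n^3)$) that produces a shortest-path witness for each violating pair, and a combinatorial implementation of region growing on these witnesses realizes both the LP approximation and the rounding within the stated time bound.
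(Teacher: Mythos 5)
Your LP characterization of feasibility is correct (it matches Lemma~5.2 of Fan, Raichel, and Van Buskirk), but the region-growing rounding has a genuine gap, and it is not the paper's route. The paper sidesteps the cycle LP entirely with a combinatorial pivot algorithm: pick a random vertex $i$, minimally repair every triangle through $i$ by changing the edge opposite $i$, recurse on $[n]\setminus\{i\}$. The $O(\log n)$ bound comes from showing each edge is modified $O(\log n)$ times in expectation (Lemma~\ref{L:betterlemma} shows the set of unbalanced triangles through an edge roughly halves per modification), and the lower bound uses a fractional packing of unbalanced \emph{triangles}, not cycles. That dual packing is the only place an LP appears, and it is never solved.

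The flaw in your rounding: the claim ``a ball of radius below $1/2$ cannot contain [a bad cycle] entirely, forcing the cycle to contribute at least one boundary edge'' does not follow from $\sum_{e\in C} z^*_e\ge 1$. In GVY multicut, the constraint $d_{z^*}(s_i,t_i)\ge 1$ is a \emph{shortest-path} constraint between two designated vertices, so two points in a ball of radius $<1/2$ are at distance $<1$; here $\sum_{e\in C} z^*_e$ is a sum of \emph{edge} values along $C$, which can greatly exceed the induced shortest-path distances, and every vertex of $C$ can sit within distance, say, $0.3$ of a common center while $\sum_{e\in C}z^*_e\ge 1$ (take a $4$-cycle with $z^*=0.3$ on each edge). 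Such a cycle is contracted away by your routine without placing any of its edges into $S$, so the output $S$ is not a valid hitting set. Two further, secondary problems: (i) the separation oracle you describe---minimize $z$-weight over $i$--$j$ paths of $x$-length strictly below $x_{ij}$---is the restricted shortest-path problem, which is NP-hard, so solving the LP is not immediate; (ii) the $O(n^3)$ bound is asserted, not derived, since your pipeline still conceptually involves an LP with exponentially many constraints. The pivot algorithm avoids all three issues at once, which is a large part of why it is the approach the paper takes.
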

%\el{derandomization? if not, expectation sign?}

Besides the improved approximation ratio, the strength of our algorithm is its simplicity. Note that the $O(n^3)$ complexity is actually not cubic since the input has size $\Omega(n^2)$. Furthermore, since in particular the algorithm can be used to verify whether the input data is already a metric (deciding whether the objective is 0 or not), the $O(n^3)$ complexity is tight by a result of Williams and Williams~\cite{williams2018subcubic} (assuming a standard fine-grained complexity conjecture). 

Our algorithm is a simple pivot-based algorithm (see Algorithm~\ref{algo:mvd}) which, at each iteration, chooses a random pivot $i\in[n]$ and minimally changes the values of $\{x(j,k)\}$ so as to ensure that all the triangle inequalities involving $i$ are satisfied. Then it recursively solves the problem on the smaller instance $[n] \setminus \{i\}$. While this algorithm is very similar in spirit to the pivot-based algorithms first introduced by Ailon, Charikar and Newman~\cite{ailon2008aggregating} for various ranking and clustering problems, our main innovation lies in the analysis: compared to the aforementioned problems, here the value of some distances might be modified multiple times throughout the algorithm, and thus the analysis requires different techniques, leading to a slightly worse approximation ratio (logarithmic vs. constant). We also provide an example showing that our analysis is tight.

%Our entire algorithm is the following.
%\ar{Add complexity of the algorithm.}

%\vca{I am not convinced we shall keep the algorithm here. It kind of break the whole
%flow by requiring the reader to stop and understand it. I agree that it is short but maybe 
%we can just say that our algorithm is 10-line long and refer to the full description.}

% While pivot-based algorithms have been widely used subsequently (e.g., \cite{ailon2011fitting}), we are not aware of any previous algorithm that has an $O(\log n)$-approximation ratio. 
%Note that our approximation ratio $O(\log n)$ is different from $O(1)$ that Ailon et al.~\cite{ailon2008aggregating} obtained for their clustering and ranking problems. 
%we show that this difference between the two analyses is inherent; 
%However, we show our analysis is tight by proving that Algorithm~\ref{algo:mvd} cannot give better than an $O(\log n)$-approximation {\em not only for random choices of pivots, but for any possible choice of a sequence of pivots}. 
%\vca{Shall we say that our analysis is tight?}

\paragraph*{Ultrametric Violation Distance.} It is also natural to ask the same question for special classes of distances. One special class we study in this paper is the class of {\em ultrametrics}; $x \in \R_{\geq 0}^{\binom{n}{2}}$ is called an {\em ultrametric} if every distinct $i, j, k \in [n]$, $x(i, j) \leq \max(x(i, k), x(k, j))$. Equivalently, $y \in \R^{\binom{[n]}{2}}_{\gz}$ is an ultrametric if and only if there exists a tree such that (1) each edge has a distance, (2) the distance from the root to every leaf is the same, (3) the leaves are labeled by $[n]$, and $y$ is given by the tree distances between the leaves. Let $U_n$ be the set of all ultrametrics on $n$ points, and \umvd be the problem, where given $x$, the goal is to find $y \in U_n$ such that $\| y - x \|_0$ is minimized. 
Since ultrametrics are closely related to {\em hierarchical clustering}, they lie
at the heart of a large number of unsupervised learning approaches, such 
as for example the classic linkage algorithms, see the survey of~\cite{carlsson2010characterization}. There has thus been several results recently
on computing ultrametric embeddings of input points minimizing various objectives
(\cite{MW17,RP16,CC17,Cohen-AddadKM17,CCN19,CCN18,Cohen-AddadKMM19,AlonAV20,ChatziafratisYL20,Balcan08,ChamiGCR20,cochez2015twister,AbboudCH19, sonthalia2020tree,DBLP:conf/aistats/VainsteinCCRMA21}).

\ar{Reviewer 2: They mentioned that the ultrametric case is interesting due to the fact that in general ultrametrics and strongly linked to hierarchical clusterings. However, I do feel this specific problem (of violation distance) should be better motivated.}

% \el{More background here!}

We are not aware of any previous work on this \umvd problem with a $\ell_0$ objective, except that its APX-hardness can be easily deduced from the APX-hardness of \correlationclustering~\cite{charikar2005clustering}. Ailon and Charikar~\cite{ailon2011fitting} studied the $\ell_1$ objective
version of the problem and gave a $\min(T+2, O(\log n \log \log n))$-approximation where $T$ is the height of the tree, 
improving on the $O(T \log n)$-approximation of Harb, Kannan, and McGregor~\cite{Harb05approximatingthe}.
Recently Cohen-Addad, Das,  Kipouridis,   Parotsidis, and  Thorup~\cite{cohenaddad2021fitting} improved the approximation ratio to $O(1)$ hence giving the first constant factor for the unweighted version
of the problem in its full generality.
There is also a large body of work on minimizing the maximum distortion, namely for the 
$\ell_{\infty}$ objective, for which a 3-approximation is known since the 90s~\cite{DBLP:journals/siamcomp/AgarwalaBFPT99} and 
recent progress has been made when the input lies in a high-dimensional Euclidean space~\cite{Cohen-AddadJL21,Cohen-AddadSL20}.
Sidiropoulos, Wang and Wang~\cite{DBLP:conf/soda/SidiropoulosWW17} studied the {\em outlier deletion} version of the problem where the goal is to find the smallest $S \subseteq [n]$ such that $x$ induced by $[n] \setminus S$ becomes a ultrametric (among others), and gave a $3$-approximation algorithm.

Our pivot-based approach for \mvd is robust, and as we will see, a minor modification of 
Theorem~\ref{thm:mvd} (see Algorithm~\ref{algo:umvd}) yields a simple $O(\log n)$-approximation algorithm for \umvd as well. Furthermore, for this ultrametric variant, we show that our analysis is tight by proving that Algorithm~\ref{algo:umvd} cannot give better than an $O(\log n)$-approximation {\em not only for random choices of pivots, but for any possible choice of a sequence of pivots}.  

\begin{theorem}
For infinitely many $n$, 
there is an instance $x \in \R_{\gz}^{\nct}$ such that for any choices of pivots, Algorithm~\ref{algo:umvd} outputs an
$\Omega(\log n)$-approximate solution. 
\label{thm:mvd-lower}
\end{theorem}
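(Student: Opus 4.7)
The plan is to exhibit, for infinitely many $n$, an input $x \in \R_{\gz}^{\nct}$ such that $\opt(x) = O(n)$ while Algorithm~\ref{algo:umvd} incurs cost $\Omega(n \log n)$ for any pivot sequence, yielding the desired $\Omega(\log n)$-approximation ratio. Since the theorem demands a lower bound that holds against \emph{every} choice of pivots (not merely random ones), the construction must be symmetric enough that no pivot can act as a ``good root.''

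First, I would construct $x$ hierarchically on $n = 2^L$ points, starting from the natural ultrametric $u^\star$ associated with a balanced binary tree $T$ of depth $L$ (say $u^\star(i,j) = L - \mathrm{depth}(\mathrm{LCA}(i,j))$), and introducing a small constant number of carefully chosen corruptions---entries whose values deviate from $u^\star$---at each of the $n-1$ internal nodes of $T$, for $O(n)$ corruptions in total. Since $u^\star$ is itself an ultrametric at Hamming distance $O(n)$ from $x$, this gives $\opt(x) \leq O(n)$.

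Second, the core of the proof is a pivot-independent lower bound on the algorithm's output cost. I would use a charging/potential argument: assign each internal node $v$ of $T$ a potential $\Phi_v$ whose initial value is $\Omega(|v|)$, where $|v|$ is the size of $v$'s subtree, so that $\Phi = \sum_v \Phi_v = \Omega \bigl( \sum_{\ell=0}^{L-1} 2^{\ell} \cdot 2^{L-\ell} \bigr) = \Omega(n L)$. The key invariant to establish is that at every pivot step, the drop in $\Phi$ is at most a constant multiple of the number of entry modifications performed. Coupled with the fact that $\Phi$ must be reduced to $o(nL)$ for the final output to be a valid ultrametric (which exploits the specific corruption structure), the algorithm is forced to make $\Omega(n L) = \Omega(n \log n)$ modifications regardless of the pivot sequence.

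The main obstacle is calibrating the corruptions and the potential $\Phi_v$ so that a single modification at a pair $(j,k)$ whose LCA in $T$ is $v$ cannot simultaneously decrease $\Phi_{v'}$ for any $v' \neq v$. This requires that the perturbations at different levels of $T$ use distinct magnitudes and exploit the algorithm's update rule (setting $x(j,k) \leftarrow \max(x(i,j), x(i,k))$ whenever pivoting on $i$ violates ultrametricity) so that a modification at level $v$ cannot absorb violations higher up in $T$. A careful case analysis over the possible pivot choices and over the hierarchical level of each modified pair is required; once this invariant is established, combining the potential argument with $\opt(x) = O(n)$ yields the $\Omega(\log n)$-approximation lower bound for Algorithm~\ref{algo:umvd}, completing the proof.
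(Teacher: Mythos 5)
There is a genuine gap. Your proposal is a sketch of a \emph{plan} rather than a proof: the instance is never actually specified (you say ``a small constant number of carefully chosen corruptions'' per internal node without saying which entries are corrupted or to what values), the potential $\Phi_v$ is never defined, and the two load-bearing claims---that $\Phi$ drops by at most a constant per modification of $x$, and that $\Phi$ must end up at $o(nL)$ for any output ultrametric---are asserted but not argued. You yourself flag the key obstacle (a single modification at a pair with LCA $v$ must not be able to drop $\Phi_{v'}$ for $v' \neq v$) and defer it to ``a careful case analysis,'' which is precisely where the proof would have to live.

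More importantly, the parameter regime you target ($O(n)$ corruptions, hence $\opt = O(n)$, while trying to force the algorithm to $\Omega(n \log n)$) is substantively more aggressive than what the paper does and there is no evidence it is achievable. The paper's construction corrupts $\Theta(n \log n)$ entries: on vertex set $\{0,1\}^d$ it sets $x(i,j) = \infty$ for \emph{every} Hamming-distance-one pair, so that each vertex has exactly one noised incident edge of each tree distance $1,\ldots,d$. This hypercube structure is vertex-transitive, which is exactly what makes the ``any pivot is as bad as any other'' claim trivial and lets the paper argue by a clean recursion $T(d) \geq \Omega(2^d d) + \sum_{t<d} T(t) - (\text{correction}) = \Omega(2^d d^2)$ against the baseline $\opt \leq nd/2$. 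If you use only $O(n)$ corruptions, i.e.\ $O(1)$ noised edges per vertex rather than $d$ of them, the symmetry you invoke (``the construction must be symmetric enough that no pivot can act as a good root'') largely disappears, and you would have to rule out, for every possible pivot sequence, that some clever order repairs the instance with $o(n \log n)$ changes---this is precisely the hard part and your potential argument does not supply it. To make your proposal a proof you would need to (i) give the explicit corruption set, (ii) define $\Phi$ and prove both the per-step bounded-drop invariant and the endpoint bound, and (iii) verify $\opt(x)=O(n)$; none of these is done, and it is unclear that (ii) can be done at all in the $O(n)$-corruption regime. I would recommend revisiting the paper's hypercube-noising construction, which sidesteps all of these issues by building in enough symmetry that the pivot choice is immaterial.
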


Our next result shows how to overcome this logarithmic barrier using different techniques.  By seeing the problem as a hierarchical variant of \correlationclustering, we are able to develop a new constant-factor approximation for \correlationclustering, and leverage it to obtain an $O(1)$-approximation algorithm ofr \umvd. 

\begin{theorem}
There exists a polynomial time deterministic $O(1)$-approximation algorithm for \umvd.
\label{thm:umvd_noweight}
\end{theorem}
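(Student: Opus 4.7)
The plan is to reduce \umvd to a hierarchical variant of \correlationclustering and then invoke the new constant-factor \correlationclustering algorithm that outputs a refinement of the optimum clusters. Recall that an ultrametric $y$ on $[n]$ corresponds to a rooted tree with weighted internal nodes whose weights strictly decrease from root to leaves, so if $r$ denotes the root weight of the unknown optimum ultrametric $y^*$, the root of its tree induces a partition $P^*$ of $[n]$ satisfying $y^*(i,j) = r$ for $i,j$ across $P^*$ and $y^*(i,j) < r$ for $i,j$ in the same part. Choosing this top-level partition is exactly an instance of \correlationclustering in which the ``$+$'' edges are $\{(i,j) : x(i,j) < r\}$ and the ``$-$'' edges are $\{(i,j) : x(i,j) = r\}$, and the \correlationclustering cost of $P^*$ equals the number of $\ell_0$-disagreements of $y^*$ that involve the value $r$ on exactly one side. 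Within each part of $P^*$, the remaining problem is an \umvd sub-instance with strictly smaller maximum distance, giving a clean top-down recursion.

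The algorithm is then straightforward: set $r$ to the largest value appearing in the current sub-instance, run the new \correlationclustering algorithm to produce a partition $Q$, and recurse inside each cluster of $Q$. Because $P^*$ is a feasible \correlationclustering solution whose cost is upper bounded by the level-$r$ disagreements of $y^*$, the cost of $Q$ is within $O(1)$ of that quantity. The refinement guarantee of the subroutine is the crucial ingredient that allows the recursion to close: it ensures that $Q$ does not spuriously merge parts that should remain separated in the optimum, so within each cluster $C$ of $Q$ a canonical reference sub-ultrametric (essentially $y^*$ restricted to $C$) exists whose cost matches the remainder of $\opt$ below level $r$. Summing the per-level costs and invoking induction on the number of distinct distance values in $x$ telescopes to an $O(1)$-approximation of $\opt$.

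The main obstacle lies in designing the new \correlationclustering subroutine: one must construct and analyze an $O(1)$-approximation whose output provably refines some optimum clustering, a guarantee highlighted as being of independent interest in the abstract. A closely related subtlety is the charging argument that makes the top-down recursion close cleanly: one must verify that disagreements of $y^*$ at strictly smaller heights remain available as a budget for the recursive calls, and that per-level approximation losses sum rather than compound. Once these two pieces are in place, induction on the number of distinct distances in $x$ together with the per-level constant-factor guarantee completes the proof of Theorem~\ref{thm:umvd_noweight}.
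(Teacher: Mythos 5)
Your high-level plan matches the paper's: cast \umvd as a nested sequence of \correlationclustering instances, solve them top-down with a specially structured $O(1)$-approximation, and recurse inside each output cluster. However, the analytical core of your sketch has a genuine gap.

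You assert that ``per-level approximation losses sum rather than compound,'' treating this as a bookkeeping step to verify. This is precisely the naive argument that the paper explicitly flags as \emph{not} working: a ``$-$'' edge (say with $x(i,j)=w_a$ but $y^*(i,j)=w_b$ for $b<a$) contributes to the \correlationclustering cost of $y^*$ at \emph{every} level $\ell\in(b,a]$, so the sum of per-level $\opt$-costs can exceed the $\ell_0$-cost of $\opt$ by a factor as large as the number of distinct distance values. Bounding each level's cost by $O(1)$ times $y^*$'s per-level cost and summing therefore yields only an $O(L)$-approximation, not $O(1)$. Relatedly, your claim that $y^*$ restricted to a cluster $C$ of $Q$ ``matches the remainder of $\opt$ below level $r$'' fails: once $x$ is truncated to $w_{r-1}$ inside $C$, any pair inside $C$ that $y^*$ assigns distance $w_r$ can no longer be reproduced, so those separations are silently converted into new disagreements in the subinstance.

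The paper closes this gap by a more elaborate, genuinely global argument. It first transforms $\opt$ into a structured solution $\opt'$ whose clusters at every level are either singletons or ``important groups'' (dense, well-isolated subsets), incurring only a constant-factor loss (Lemma~\ref{lem:costofopt}). The guarantee of the agreement correlation clustering subroutine (Theorem~\ref{thm:CC}) is also subtler than ``outputs a refinement of the optimum'': it says each important group of $\opt'$ is fully contained in exactly one output cluster, no output cluster intersects two important groups, and every output cluster is everywhere dense. Crucially, output clusters may still absorb arbitrary $\opt'$-singletons, so they are not a refinement of $\opt'$. The approximation bound is then obtained by a pair-by-pair global charging scheme (Lemma~\ref{lem:approxopt}), not a per-level sum: disagreements of the algorithm are charged to edges that $\opt'$ already pays for, each receiving total charge at most $3$, exploiting the everywhere-dense property. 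Your proposal needs both the $\opt\to\opt'$ transformation and this global charging argument to replace the telescoping claim.
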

\vca{make running time explicit?}

\paragraph*{Violation Distance for Weighted Instances.} For both \mvd and \umvd, it is also natural to study the extension of the problems to {\em weighted instances}: the input consists of distances $x \in \R_{\gz}^{\binom{[n]}{2}}$ and weights 
$w \in \R_{\gz}^{\binom{[n]}{2}}$, and the goal is to find a metric or ultrametric $y \in \R_{\gz}^{\binom{[n]}{2}}$ to minimize $\sum_{(i,j)}w(i,j)\cdot\ind(x(i,j)\neq y(i,j))$.
%\footnote{This setting also captures the {\em weighted version} of the problem, by creating multiple copies of an edge according to the weight and removing parallel edges by subdividing them.}

These more general versions may be harder to approximate than the unweighted version; there exist polynomial-time approximation-preserving reductions from both \textsc{Multicut} and \textsc{Length-Bounded Cut} to \mvd on weighted instances, see Fan, Gilbert,  Raichel,  Sonthalia,  and  Van Buskirk~\cite{fan2020generalized}.
The best approximation ratios for them are $O(\log n)$~\cite{GargVY96} and $O(n^{2/3})$~\cite{baier2010length} respectively, and both problems are hard to approximate within any constant factor assuming the Unique Games Conjecture~\cite{ChawlaKKRS06, l-ihcifp-17}. The standard LP relaxation for \textsc{Length-Bounded Cut} has an integrality gap of $\Omega(n^{2/3})$, so approximating the \mvd problem on weighted instances with a subpolynomial approximation ratio is already a major challenge.

For \umvd, the reduction from \textsc{Multicut} still holds, implying that it seems unlikely to have an $O(1)$-approximation for weighted graphs. However, we adapt the algorithm of Ailon and Charikar~\cite{ailon2011fitting} to design an $O(\log n \log \log n)$-approximation algorithm even for weighted instances.

%\ar{Reviewer 2:- The authors mention that the weighted problem can be simulated by adding multi edges, however they do not expand on how the algorithms are affected by this. I think that expanding on this point could also help the reader better understand the problem.}

\begin{theorem}
There exists a polynomial time $O(\log n \log \log n)$-approximation algorithm for \umvd on weighted instances. 
\end{theorem}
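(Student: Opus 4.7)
The plan is to adapt the LP-based hierarchical rounding scheme of Ailon and Charikar~\cite{ailon2011fitting} for $\ell_1$-ultrametric fitting to the weighted $\ell_0$ objective. Recall that any ultrametric $y$ on $[n]$ is encoded by a chain of nested partitions $P_1 \preceq P_2 \preceq \cdots \preceq P_L = \{[n]\}$, where $y(i,j) = d_\ell$ for $\ell$ the smallest index with $i,j$ in a common part of $P_\ell$. The distance values $d_\ell$ can be assumed, up to a constant-factor loss, to form a geometric sequence of length $L = O(\log \Delta)$ in the spread $\Delta$ of the input, which after a standard distance-bucketing preprocessing we can take to be $O(\log n)$.

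For each level $\ell$ and each pair $\{i,j\}$ we introduce a variable $z_{ij}^\ell \in [0,1]$ standing for the indicator that $i,j$ lie in different parts of $P_\ell$. Laminarity of the chain is captured by $z_{ij}^\ell \geq z_{ij}^{\ell+1}$, and the fact that the level-$\ell$ separation indicator is a cut metric is relaxed to the triangle inequalities $z_{ik}^\ell \leq z_{ij}^\ell + z_{jk}^\ell$. Writing $\ell^*(i,j)$ for the level of $x(i,j)$, a pair $(i,j)$ agrees with $x$ iff $z_{ij}^{\ell^*(i,j)} = 0$ and $z_{ij}^{\ell^*(i,j)-1} = 1$, so the $\ell_0$ disagreement cost is within a factor $2$ of $\sum_{\{i,j\}} w(i,j)\,\bigl(z_{ij}^{\ell^*(i,j)} + (1 - z_{ij}^{\ell^*(i,j)-1})\bigr)$, which defines the LP objective. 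To round, we solve the LP and build $y$ top-down: at level $\ell$, within each cluster of the already-constructed partition $P_{\ell+1}$, we apply a Garg--Vazirani--Yannakakis-style region-growing rounding to the fractional cut metric $(z_{ij}^\ell)$, obtaining a refined partition $P_\ell$ in which each pair is separated with probability at most $O(\log n)\cdot z_{ij}^\ell$. Summing naively gives an $O(L \log n)$ bound; to reach $O(\log n \cdot \log \log n)$ we follow Ailon and Charikar and, before rounding, combine adjacent levels by a dyadic grouping so that only $O(\log n / \log \log n)$ effective levels accrue rounding loss, at the price of a constant factor in the objective.

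The main obstacle in moving from $\ell_1$ to $\ell_0$ is that an $\ell_0$ disagreement is a hard binary indicator: a pair whose rounded level lands off by even one bucket pays the full weight $w(i,j)$, whereas in $\ell_1$ one only pays the level gap. The delicate part of the analysis will be verifying that the LP above is a constant-factor valid lower bound for the $\ell_0$ cost, and that the bucketing and region-growing losses compose multiplicatively rather than additively. In particular, one must argue that separating a pair at a wrong level can be charged against the LP contribution on that very level, and that the dyadic level grouping does not accidentally merge a pair's correct level $\ell^*(i,j)$ with an incompatible neighbor; handling this alignment carefully is where the $O(\log\log n)$ factor originates.
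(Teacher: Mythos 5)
Your LP is essentially the one the paper uses (variables $d_{ij}^t$, per-level triangle inequalities, level monotonicity, and the $\ell_0$-style objective $d_{ij}^{x(i,j)+1} + (1 - d_{ij}^{x(i,j)})$), and the top-down region-growing rounding within each already-formed cluster is the right high-level plan. But the path you describe from an $O(L \log n)$ bound to $O(\log n \log\log n)$ does not work, and it is not what Ailon--Charikar or the paper actually do.

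Two concrete problems. First, distance-bucketing to a geometric sequence so that $L = O(\log n)$ is unsound for the $\ell_0$ objective: an $\ell_0$ disagreement is a strict equality constraint, so if $x(i,j)=5$ and the allowed ultrametric values are $\{1,2,4,8,\dots\}$ then $y(i,j)\neq x(i,j)$ is forced and the pair is counted as a disagreement regardless of the rounding. Thus one cannot shrink $L$; the paper keeps $L$ equal to the number of distinct input distance values, which may be $\Theta(n^2)$. Second, the dyadic-grouping bookkeeping you propose does not produce the claimed bound even if bucketing were legal: $O(\log n / \log\log n)$ ``effective levels'' each paying the standard GVY loss of $O(\log n)$ gives $O((\log n)^2/\log\log n)$, which strictly exceeds $O(\log n \log\log n)$. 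In short, you have misidentified the source of the $\log\log n$ factor.

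The actual mechanism (Lemma~\ref{lem:region}, after Ailon--Charikar Lemma~4, which is Seymour-style) is different: the radius in the region-growing step is chosen so that $w(\delta^t_Z(c,r)) \le O(\log\log n)\, A^t_Z(c,r)\,\ln\bigl(A^t_Z / A^t_Z(c,r)\bigr)$, i.e.\ the $\log\log n$ factor is built into the \emph{per-cut} radius selection, not into any level grouping. The $O(\log n)$ factor then emerges from a telescoping sum over levels: because the clusters are nested and the volume quantities satisfy $A^t_{Z_t} \le A^{t+1}_{Z_{t+1}}(Z_t)$, the logs $\sum_t \ln(A^t_{Z_t}/A^t_{Z_t}(Z_{t-1}))$ collapse to $\ln(A^L_{Z_L}/A^{s-1}_{Z_{s-1}}) \le \ln(\text{LP}/\rho) = O(\log n)$, independent of $L$. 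You should drop both the bucketing and the dyadic grouping and replace the naive ``probability $O(\log n) z_{ij}^\ell$'' guarantee with this telescoping charging argument.
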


\paragraph*{Maximization version and hardness.}

Finally, we also investigate the complexity of the complementary (maximization) versions of \mvd and \umvd problems, which we call \maxmvd and \maxumvd. Here the problem consists of a set of vertices $[n]$ and a set of pairs $(i ,j) \in \binom{n}{2}$ with distance $x(i, j)$ and weight $w(i, j)$, and the goal is to compute an (ultra)metric $\dist : \binom{n}{2} \to \R$ so as to maximize the total weight of pairs $(i, j)$ with $x(i, j) = \dist(i, j)$. This maximization version has been studied in the context of molecular biology~\cite{duggal2013resolving}, where heuristic approximations algorithms have been provided. We prove that these problems are Unique Games-hard to approximate within any constant factor.

\begin{theorem}
Assuming the Unique Games Conjecture, it is NP-hard to approximate \maxmvd and \maxumvd within any constant factor.
\label{thm:max-hard-complete}
\end{theorem}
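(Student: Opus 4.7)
The plan is to prove the hardness of both \maxmvd and \maxumvd by a gap-preserving reduction from Unique Games. The Unique Games Conjecture asserts that for any constants $\eta, \delta > 0$, given a Unique Games instance it is NP-hard to distinguish whether its optimum satisfies at least a $(1-\eta)$ fraction of the constraints or at most a $\delta$ fraction. Unlike the minimization versions \mvd and \umvd, where even a factor-$2$ hardness can be extracted from elementary complementarity with \correlationclustering, the maximization setting genuinely requires this strong amplifiability of the gap: any-constant-factor hardness for maximum agreement needs YES-instances of value $\geq (1-O(\eta))\,W$ paired with NO-instances of value $\leq O(\delta)\,W$, where $W$ is the total input weight.

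I would construct, from a Unique Games instance with vertex set $V$, label set $[k]$, and edge constraints $\{\pi_{uv}\}$, a set of points equipped with weighted target pair-distances. Each UG vertex $v$ is replaced by a \emph{cloud} of $k$ points $\{v_1,\ldots,v_k\}$, one per possible label, and each UG edge $(u,v)$ contributes a collection of weighted pair-distance constraints across the two clouds that are simultaneously satisfiable by an (ultra)metric if and only if the labels chosen at $u$ and $v$ are consistent with $\pi_{uv}$. In the ultrametric case this is naturally encoded by asking $(u_l,v_{\pi(l)})$ to sit at the smallest ultrametric distance while $(u_l,v_{l'})$ for $l' \neq \pi(l)$ sits at a larger one; a satisfying UG labeling $\sigma$ then induces an ultrametric whose agreement with the input covers a $(1-O(\eta))$ fraction of the total weight. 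For the metric case we implement the analogous scheme by assigning target distance values spread over a wide scale and exploiting the triangle inequality to create local rigidity among triples of cloud points, adapting the gadget to the absence of a hierarchical tree structure.

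The main obstacle will be the soundness analysis: showing that any (ultra)metric matching more than a $\delta'$ fraction of the pair constraints can be decoded into a UG labeling satisfying more than the soundness threshold. Unlike in CSP-style reductions, (ultra)metrics enjoy substantial geometric freedom --- the metric cone is full-dimensional in $\R^{\binom{[n]}{2}}$ --- so the decoding must locally extract a (possibly randomized) label at each cloud $\{v_1,\ldots,v_k\}$ from the restricted (ultra)metric structure there, and then union-bound over edges via the UG consistency relations $\{\pi_{uv}\}$. In the ultrametric case the tree structure essentially forces each cloud to single out a dominant label, making the decoding relatively clean; in the metric case the decoding must instead rely on a thresholding argument reminiscent of those used in UG-hardness proofs for clustering problems. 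Tuning the gadget parameters so that the $(1-O(\eta))$ versus $O(\delta)$ agreement gap in \maxmvd or \maxumvd faithfully reflects the $(1-\eta)$ versus $\delta$ UG gap for arbitrarily small $\eta,\delta$ will then yield the claimed any-constant-factor NP-hardness.
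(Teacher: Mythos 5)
There is a genuine gap, and it is at the heart of the reduction. You replace each UG vertex $v$ by a \emph{cloud} of $k$ points $\{v_1,\ldots,v_k\}$ (one per label), and expect the gadget pair-constraints across two clouds to be satisfiable exactly when the two labels are $\pi$-consistent. That is an assignment-gadget in the style of reductions for vertex-cover-like or coloring-like problems, not a long-code test, and it has no soundness mechanism: an (ultra)metric solution is free to place all $k$ cloud points anywhere (all in one tree branch, spread across many branches, collapsed together, ...), and nothing in the structure forces the cloud to ``single out a dominant label.'' You yourself flag this as ``the main obstacle'' and gesture at a ``thresholding argument,'' but that step \emph{is} the theorem, and the cloud gadget simply does not give an analytic handle on it. The paper's construction is fundamentally different: each UG vertex is replaced by the entire hypercube $[L]^R$ (a long code over the tree's leaf alphabet $[L]$, $L=2^T$), with test edges drawn from a product distribution $\mu_t^{\otimes R}$ that mixes independent coordinates with tree-correlated ones. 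Soundness then proceeds by viewing a solution as an assignment of $[L]^R$ to leaves of a tree $\calT$, decomposing $\calT$'s internal nodes into heavy/light according to the mass $\nu(u)$ of their subtree, and bounding the satisfied weight level by level using Mossel's invariance principle (Theorem~\ref{thm:invariance}), the Gaussian bound $\Gamma_\rho$, and a hypercontractive estimate (Lemma~\ref{lem:hyper}). That is where the tree structure actually enters, and it is far subtler than the cloud picture suggests.

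A second, smaller mismatch: you propose separate gadgets for \maxumvd and \maxmvd, with the metric one hinging on ``local rigidity'' from the triangle inequality over a wide scale — this is too vague to close, and the metric cone is precisely the setting where local rigidity is hardest to argue. The paper only proves UG-hardness for weighted general-graph \maxumvd (Theorem~\ref{thm:max-hard}), then obtains unweighted instances by standard gadgets (Corollary~\ref{cor:unweighted}), completeness by adding large-distance filler edges (Corollary~\ref{cor:complete}), and the \maxmvd hardness by the exponential distance-blowup reduction $x \mapsto n^x$ (Corollary~\ref{cor:mvd}). Without the long-code/dictatorship-test structure and without this reduction chain, the proposal does not have a path to the claimed any-constant-factor hardness.
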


This hardness result also holds for the instances that are unweighted and complete (see Corollary~\ref{cor:complete} and~\ref{cor:mvd}), and therefore stands in contrast with the minimization version of \umvd, for which our Theorem~\ref{thm:umvd_noweight} provides a constant-factor approximation. For these maximization problems, we are not aware of any non-trivial approximation algorithms. Since these problems can be reformulated as choosing a maximum number of edges that avoid specific unbalanced cycles (see the summary of techniques below), a similarly-looking problem is the one of finding a subgraph with a maximum number of edges that does not contain a $k$-cycle for some fixed $k$. That problem has been studied by Kortsarz, Langberg and Zutov~\cite{kortsarz2010approximating}, and features similar huge gaps between the best known approximation algorithms and the lower bounds.

\subsection*{Challenges and New Techniques.} 

\begin{figure}
    \centering
    \def\svgwidth{12cm}
    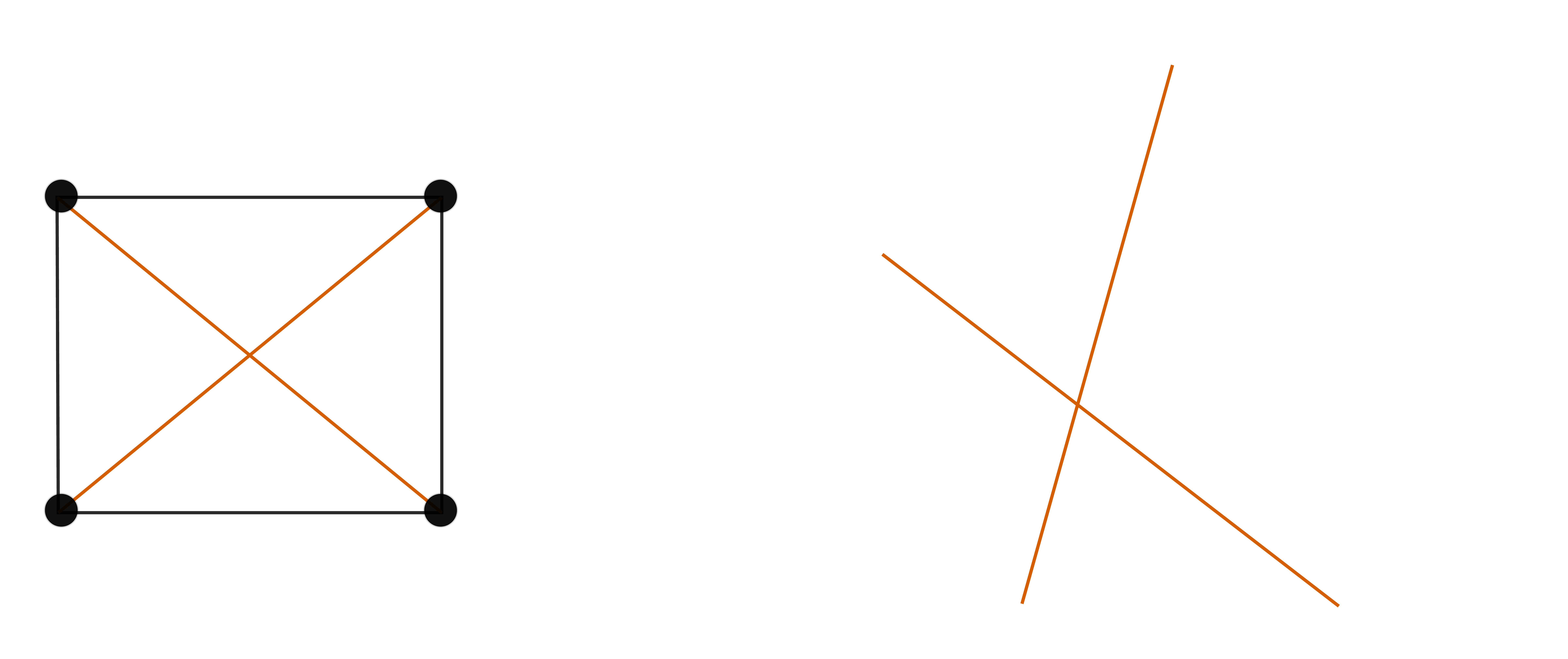
    \caption{Left: The two colored edges hit all the violated triples, but one cannot assign them new distance values without creating new violated triangles. Right: If we first pivot at D and remove the corresponding offending edges BE and AC, there are no violated triples anymore but no way to assign new distance values to AC and BE without creating new violated triangles.}
    \label{F:badexamples}
\end{figure}

\paragraph*{Pivot-based algorithms for (Ultra)Metric Violation Distance.} From the outset, it might look like \mvd can be phrased as a simple hitting set problem for a family of violated triples, namely, the triples of distances that do not satisfy the triangle inequality. This would suggest an easy constant-factor approximation, by simply approximating the corresponding hitting set instance. However, it might happen that even though one finds a family of edges that hits all the bad triples, there is no way of changing the value of the distances on this family of edges without creating new bad triples. Such a behavior is pictured on the left of Figure~\ref{F:badexamples} where we can remove two edges to hit all the violated triples, yet there is no way to choose corrected distances for them without inducing new metric violations.  The culprit here is easily found: there is a violated $4$-tuple, i.e., a cycle (when viewing the instance as a complete graph) where one distance is longer than the sum of the other distances on the cycle. This thus makes the \mvd problem considerably harder than a hitting set problem on bounded size cycles.

It was shown in~\cite{fan2018metric} that hitting all of the violated cycles is equivalent to the \mvd problem, but there are potentially exponentially many of those (i.e.: $2^{\Omega(\sqrt{n})}$ bad cycles of length $\sqrt n$, etc.), and thus no natural approximation algorithm exists. In~\cite{fan2018metric}, the authors provide an $O(OPT^{1/3})$ approximation algorithm by focusing primarily on bad cycles of length at most $6$, which in particular requires enumerating those and has a high complexity cost.

In contrast, our approach in Algorithm~\ref{algo:mvd} is to nevertheless focus on violated triples by using a pivot-based algorithm: We choose uniformly at random a pivot, freeze the values of its incident distances, and repair the potential violated triples incident to the pivot by changing the value of the third distance in such a triple. Then we recurse, i.e., pivot on another vertex\footnote{Throughout this paper, we use the words ``vertex'' and ``point'' interchangeably, since our algorithms are inspired by graph algorithms.}
until the entire metric has been repaired. The entire pseudocode is displayed in Algorithm~\ref{algo:mvd}.

\begin{algorithm}[t]
%\SetAlgoLined
    {\bf Input:} $n \in \N$, $x \in \R_{\gz}^{\nct}$, 
    {\bf Goal:} modify $x$ to a metric.\\
    \If{$n \leq 2$}{{\bf return}}
        
    Choose the pivot $i \in [n]$ uniformly at random\;
    \For{$j \neq k \in [n] \setminus \{ i \}$}{
    \If{$x(j, k) > x(i, j) + x(i, k)$}{
        $x(j, k) = x(i, j) + x(i, k)$;
    }
    \If{$x(j, k) < |x(i, j) - x(i, k)|$}{
    $x(j, k) = |x(i, j) - x(i, k)|$;
    }
    }
    \textsc{MVD-Pivot}$(n - 1, x|_{\binom{[n]\setminus i}{2}})$ \tcp*{$x|_{\binom{[n]\setminus i}{2}}$ is $x$ restricted to $\{ (j, k) \in \binom{[n] \setminus \{ i \} }{2} \}$}
    {\bf return}
 \caption{\textsc{MVD-Pivot}$(n, x)$}
 \label{algo:mvd}
\end{algorithm}

The crux of the argument is to define the appropriate reparation: as the example on the right of Figure~\ref{F:badexamples} shows, simply removing the offending edges from our graph is not enough, as it might not repair violated $k$-tuples for $k>3$. Our choice in Algorithm~\ref{algo:mvd} is to minimally modify the distance of the offending edge so as to satisfy the triangle inequality (thus making it an equality). As we will show in Section~2, this choice yields very desirable properties (Lemma~\ref{L:goodstaysgood} and~\ref{L:betterlemma}). However, an offending edge may be modified multiple times during the course of our algorithm, which gives it a different flavor compared to existing pivot algorithms. We bound this number of changes by a logarithmic quantity in Theorem~\ref{thm:mvd}, explain how it connects to the approximation ratio, and we also provide an example showing that this is tight.\ar{Feel free to reword, this is not very punchy.}
An interesting phenomenon here is that Ailon and Charikar~\cite{ailon2011fitting} in their work on fitting ultrametrics
under the $\ell_1$ objective only obtained an $O(T)$-approximation 
algorithm using pivot-based approach, where $T$ is the number of different input distances. Here, we manage to use the structure of the problem together with a refined pivot algorithm so as to obtain
an $O(\log n)$-approximation algorithm.

An almost identical algorithm and analysis also provides a logarithmic approximation ratio for \umvd. Furthermore, in the ultrametric setting, a very symmetric construction based on hypercubes allows us to pinpoint precisely the limitation of pivot-based techniques: we build an instance of \umvd where \emph{any} choice of a sequence of pivots yields a solution that differs from the optimal by a logarithmic factor. In order to overcome this gap, this leads us to designing a completely different approach to obtain a better approximation factor for \umvd.

\paragraph*{Ultrametric Violation Distance via Correlation Clustering.}%The \umvd problem can also be phrased as a hitting set problem for a set of violating cycles, and an algorithm similar to Algorithm~\ref{algo:mvd} can also be applied, yielding an $O(\log n)$-approximation. But 

\umvd offers additional structure compared to \mvd, as an ultrametric can favorably be interpreted as a hierarchical clustering:  for any $m$, the sets of points at distance smaller than $m$ form
natural equivalence classes and thus can be thought of as a clustering of the input. 

This allows us to define a natural correlation clustering instance for finding groups of points that are distance less than $m$ and groups of points that are at distance at least $m$: place a ``$+$ edge" for any 
edge that has value strictly less than $m$ and a ``$-$ edge" for any edge that has value at least $m$. This defines an instance of correlation clustering 
on complete unweighted graphs; indeed, the goal of correlation clustering
is to partition the input into groups such that the total number of ``$-$'' edges whose extremities are in the same group plus the total number of $+$ edges whose extremities are in different groups is minimized. 

Thus, for a given distance $m$, the optimum solution for the whole \umvd problem
induces a natural partitioning into groups (points at distance at most $m$) that implies a solution to the correlation clustering instance defined at level $m$. The cost of the two solutions are similar.
On the one hand, edges of length at least $m$ whose extremities are both in the same group will be set to a distance smaller than $m$ and so induce a cost of 1, and since they are ``$-$'' edges in the correlation clustering instance, they also induce a cost of 1. 
On the other hand, edges of length less than $m$ whose extremities
are in different groups, are at distance at least $m$ in the \umvd solution and thus incur a cost of 1, and are ``$+$'' edges across clusters in our correlation clustering instance and also incur a cost of 1.

Thus, the \umvd problem can be cast as a sequence of correlation clustering instances where the goal is to find nested correlation clustering solutions where one aims at minimizing the number of violated edges (an edge being violated if it is a ``$+$'' edge at some level and
its extremities are in different clusters at that level, or if it is a ``$-$'' edge at some level and its extremities are in the same cluster). 

To solve instances of the above problem resulting from \umvd instances, 
we use a new $O(1)$-approximation algorithm for correlation clustering.
It is worth noting that one cannot simply compute an $O(1)$-approximation to correlation clustering for the largest distance and recurse in each cluster. This is because ``$-$'' edges at the same distance are also minus edges at smaller distances and one may risk charging a ``$-$'' edge
that is paid by the optimum solution at each distance until the extremities
are separated by the algorithm.
To avoid the above, we describe a new algorithm for correlation clustering that satisfies the property
that the clusters it outputs do not mix the ``important'' clusters of the optimum solution and so essentially produce clusterings that are refinement of an (approximately) optimal solution. We can
thus compute the ultrametric top-down distance by distance. \ar{Reviewer 3 maybe has a point that we have no idea what the algorithm does after reading this.}

We also provide an $O(\log n \log \log n)$-approximation algorithm for \umvd on weighted instances. It relies on the rounding of a natural linear programming relaxation for the underlying hierarchical clustering problem. The algorithm and its Seymour-style~\cite{seymour1995packing} analysis are heavily inspired by a similar algorithm of Ailon and Charikar~\cite{ailon2011fitting} to solve the $\ell_1$ variant of the problem.

\paragraph*{Hardness of the maximization versions.} The previous algorithms (except the very last one) all rely extensively on the fact the input graph is complete. In contrast, when one studies the maximization versions of the \mvd and \umvd problems, the complete case is as hard as the general case. Indeed, as we prove in Corollary~\ref{cor:complete}, starting from an arbitrary instance, one can add edges of very high distance to make the graph complete while roughly preserving the value of the optimal solution. Furthermore, the \maxmvd problem can be encoded into a \maxumvd problem by blowing up the distances (Corollary~\ref{cor:mvd}). Our strategy to prove Theorem~\ref{thm:max-hard-complete} is therefore to prove hardness of the \maxumvd on general graphs, which is our main result in Theorem~\ref{thm:max-hard}. Here, the idea is to interpret the problem as a constraint satisfaction problem: an ultrametric distance can be interpreted as the distance between the leaves of a tree, and thus the \maxumvd problem can be thought of as assigning one such leaf to each vertex so that the distances induced by the tree match the input distances as well as possible. In order to do so, we employ standard techniques for proving Unique Games-hardness of Max-CSPs (based on dictatorship tests and invariance principles, see for instance~\cite{khot2007optimal,raghavendra2008optimal}) but the soundness of the dictatorship test requires a specific analysis that strongly relies on the tree structure of the alphabet.

% \ar{Your turn, Vincent}.

%\ar{Do we want to add a section explaining why the problems are hard and how we overcame it?}

%\paragraph{More related work.}
\paragraph{Roadmap.}
Section~\ref{sec:mvd} is dedicated to the proof of our $O(\log n)$-approximation algorithm for our pivot-based algorithms for \mvd and \umvd, and the proof of Theorem~\ref{thm:mvd-lower}. Section~\ref{sec:ultrametricconstant} describes our $O(1)$-approximation for \umvd on unweighted complete inputs, Section~\ref{sec:umvdweighted} presents our $O(\log n \log \log n)$-approximation algorithm for weighted instances of \umvd, and in Section~\ref{sec:hardness} we prove the hardness results for the maximization versions.

\section{An $O(\log n)$-approximation for \mvd and \umvd}
\label{sec:mvd}

In this section, we prove that Algorithm~\ref{algo:mvd} provides an $O(\log n)$ approximation for the \textsc{Metric Violation Distance} problem. We then show how to modify it to also obtain an $O(\log n)$ approximation for the \umvd problem, and provide an example where any choice of pivot leads to a logarithmic approximation ratio.

We first introduce the following intuitive terminology. We view the input distance data $x$ as a family of weights on the edges of a complete graph. A \textit{triangle} is a triple of vertices, or equivalently of edges. Let $ij$ be an edge belonging to a triangle $t=ijk$. If $x(ij) > x(ik)+ x(jk)$, we say that the edge $ij$ is in \textit{excess}. If $x(ij) < |x(ik) -x(jk)|$, we say that it is in \textit{deficit}. In both cases we call the triangle $t$ \textit{unbalanced}. After pivoting at a vertex $i$, the weights of the edges adjacent to $i$ never change in the remaining recursive calls, thus we say that they are \textit{frozen}.

The intuition for the $O(\log n)$ approximation factor is the following. At each recursive call, the algorithm chooses uniformly at random a pivot vertex $i$ and repairs the metric around it in the following way: for any $j$, the distances $x(ij)$ are kept as is, and for each triangle $ijk$, the distance $x(jk)$ is modified in the minimal way so that the triangle $ijk$ is no longer unbalanced. We first prove (Lemma~\ref{L:goodstaysgood}) that such repaired triangles never become unbalanced again further down in the algorithm. Then the key difference with previous pivot-based algorithms~\cite{ailon2008aggregating,ailon2011fitting} is that throughout recursive calls, a same edge may get modified multiple times. The main idea of the algorithm is that due to the random choice of pivots, the set of distances that an edge can get assigned roughly shrinks by a half whenever it is modified, and thus (in expectation), a given edge gets modified $O(\log n)$ times. We use this to obtain a lower bound on the fractional packing for unbalanced triangles, which by linear programming duality lower bounds the optimal value of the solution.

 We start with the following easy observation which directly follows from the description of Algorithm~\ref{algo:mvd}.
 
 \begin{observation}\label{O:balance}
 At the end of a recursive call of Algorithm~\ref{algo:mvd} where $i$ was chosen as a pivot, none of the triangles adjacent to $i$ are unbalanced.
 \end{observation}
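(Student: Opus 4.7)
This follows by a direct inspection of the pseudocode of Algorithm~\ref{algo:mvd}. My plan is to show that after the \textbf{for} loop corresponding to the pivot $i$ finishes, for every pair $j \neq k$ in $[n] \setminus \{i\}$ the triangle $ijk$ satisfies the combined two-sided inequality
\[
|x(i,j) - x(i,k)| \le x(j,k) \le x(i,j) + x(i,k),
\]
which is equivalent to all three triangle inequalities holding on $ijk$, and hence to $ijk$ being balanced (neither in excess nor in deficit on any of its three edges).

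First I would note that within the loop only the entries $x(j,k)$ with $j, k \in [n]\setminus\{i\}$ can ever be overwritten, while the pivot-incident entries $x(i,\cdot)$ remain untouched, and that the iterations for distinct pairs $(j,k)$ modify disjoint entries. It therefore suffices to analyze a single iteration corresponding to a pair $(j,k)$. In that iteration, the first \textbf{if} enforces the upper bound $x(j,k) \le x(i,j)+x(i,k)$ by resetting $x(j,k)$ to $x(i,j)+x(i,k)$ whenever it is violated, and the second \textbf{if} enforces the lower bound $x(j,k) \ge |x(i,j)-x(i,k)|$ analogously.

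The only subtle point, and the one I would address most carefully, is that the two \textbf{if}s are sequential so one could in principle undo the effect of the other. I would dispatch this using only the elementary inequality $a+b \ge |a-b|$ for $a,b \ge 0$: if the first \textbf{if} triggers, then afterwards $x(j,k) = x(i,j)+x(i,k) \ge |x(i,j)-x(i,k)|$, so the guard of the second \textbf{if} is false and the upper bound survives; if instead only the second triggers, then the new value $|x(i,j)-x(i,k)| \le x(i,j)+x(i,k)$ keeps the upper bound satisfied. In either case both inequalities hold at the end of the iteration, which establishes the claim for every triangle $ijk$. I do not anticipate any real obstacle beyond this two-line case check, as the whole argument is essentially a verification of the semantics of the two conditional updates.
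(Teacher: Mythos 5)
Your proof is correct. The paper itself gives no proof of Observation~\ref{O:balance}, asserting only that it ``directly follows from the description of Algorithm~\ref{algo:mvd},'' so your write-up supplies exactly the verification the paper leaves implicit: the edge $x(j,k)$ is the only entry touched in the iteration for the pair $(j,k)$, the pivot-incident entries are never modified, and the two conditional updates cannot interfere because $a+b\ge|a-b|$ for $a,b\ge 0$. Your observation that $|x(i,j)-x(i,k)|\le x(j,k)\le x(i,j)+x(i,k)$ is precisely equivalent to all three triangle inequalities on $ijk$ is also the right reformulation of ``$ijk$ is balanced.'' Nothing is missing; this is the intended argument, made explicit.
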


 We denote by $\mathcal{T}$ the set of all triangles, and by $\mathcal{T}'$ the set of all unbalanced triangles. We first prove that Algorithm~\ref{algo:mvd} makes progress, i.e., the set of unbalanced triangles shrinks as the algorithm progresses. The proof is a tedious but straightforward case analysis.

%The algorithm MVD-PIVOT is the following:

%\begin{enumerate}
%\item Choose a pivot $p$ at random.
%\item Fix the lengths of all the edges adjacent to $p$.
%\item For each triangle $pab$ which is unbalanced: if $\ell(ab)$ is too big, change it to $\ell(pa)+\ell(pb)$. If $\ell(ab)$ is too small, change it to $|\ell(pa)-\ell(pb)|$.
%\item Remove $p$, all the adjacent edges and recurse.
%\end{enumerate}

  \begin{lemma}\label{L:goodstaysgood}
At each execution of a pivoting step in Algorithm~\ref{algo:mvd}, no new unbalanced triangle is created.
    \end{lemma}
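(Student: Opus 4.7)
By Observation~\ref{O:balance}, every triangle containing the pivot $i$ is balanced at the end of the pivoting step, so it suffices to show that any triangle $jk\ell$ with $j,k,\ell \neq i$ that was balanced before the step remains balanced afterwards. Fix such a triangle, and, without loss of generality, order its vertices so that $a := x(i,j) \leq b := x(i,k) \leq c := x(i,\ell)$. Let $p, q, r$ be the original values of $x(j,k), x(k,\ell), x(j,\ell)$, and $p', q', r'$ their values after the step. Inspection of Algorithm~\ref{algo:mvd} shows that $p', q', r'$ are precisely the projections of $p, q, r$ onto the intervals $[b-a, a+b]$, $[c-b, b+c]$, $[c-a, a+c]$, respectively.

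I would then proceed by a case analysis on the status of each of the three edges: \emph{unchanged} (the original value already lay in its valid interval), \emph{decreased} (it exceeded the upper bound), or \emph{increased} (it was below the lower bound). If all three are unchanged, there is nothing to prove. Otherwise, each of the three triangle inequalities $p' \leq q'+r'$, $q' \leq p'+r'$, $r' \leq p'+q'$ is verified directly from three ingredients: (i) the closed-form values of $p',q',r'$ given by the projection, (ii) the original balance $|p-q| \leq r \leq p+q$ together with its cyclic permutations, and (iii) the ordering $a \leq b \leq c$ combined with nonnegativity. For instance, if all three edges are decreased then $p'=a+b$, $q'=b+c$, $r'=a+c$, and each inequality becomes of the form $x+y \leq (x+z)+(y+z)$ and holds trivially; whereas in the more delicate mixed case where $p$ is decreased and $q$ is increased ($p'=a+b$, $q'=c-b$), the inequality $p' \leq q'+r'$ reduces to $r' \geq a+2b-c$, which follows from the original bound $r \geq p-q > (a+b)-(c-b) = a+2b-c$ together with a short inspection of the three possibilities for $r'$; the subcase $r'=c-a$ in particular uses that $a+b < c$, itself derived from the same chain of inequalities.

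The main obstacle is bookkeeping: up to $3^3=27$ status combinations must be considered, and while some can be eliminated by symmetries such as swapping the roles of $k$ and $\ell$, the normalization $a \leq b \leq c$ prevents a full exploitation of symmetry. Each individual case reduces to elementary arithmetic, but one must be careful not to miss any configuration, since an offending edge's original value may lie arbitrarily far outside its valid interval; in particular the case where all three edges are simultaneously modified requires combining two of the original balance inequalities to rule out the potentially problematic configurations.
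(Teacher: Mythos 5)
Your plan is to characterize the modified distances as projections onto the metric intervals $[b-a, a+b]$, $[c-b, b+c]$, $[c-a, a+c]$ and then verify by case analysis that these projections preserve balance. The mathematical core is correct — the projections do preserve the triangle inequality — and your approach is essentially a more systematic version of the paper's argument. The paper's proof is slicker: rather than checking all three inequalities for all status combinations, it singles out the one edge left in excess (or deficit), reduces the other cases to this one, and derives an immediate contradiction from the lower (resp.\ upper) bounds on the remaining two edges, so only a handful of cases remain rather than the up-to-$27$ you identify.

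There is, however, a genuine gap in your justification of the projection characterization. ``Inspection of Algorithm~\ref{algo:mvd}'' shows that $p'$, $q'$, $r'$ are the stated projections only when $j$, $k$, $\ell$ all belong to the current recursive subinstance. If one of them, say $\ell$, has already been chosen as a pivot in an earlier recursive call, then $\ell$ is no longer in the active vertex set and the edges $j\ell$ and $k\ell$ are frozen: the \textbf{for} loop never touches them, so $q'=q$ and $r'=r$ regardless of whether $q$ and $r$ lie inside $[c-b, b+c]$ and $[c-a, a+c]$. If they do not, your projection identity is simply false, and so is the conclusion — for instance $a=b=1$, $c=10$, $p=10$, $q=20$, $r=10$ gives a balanced triangle $jk\ell$, yet clamping only $p$ to $p'=2$ while leaving $q,r$ untouched yields $q' = 20 > 12 = p' + r'$. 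What rules out such configurations is precisely the inductive claim that the triangles $ij\ell$ and $ik\ell$ have \emph{remained} balanced ever since $\ell$ was pivoted, so that the frozen values $q$ and $r$ already lie in their intervals and the projection acts as the identity on them. This is exactly where the paper invokes an induction, initialized by Observation~\ref{O:balance}, in the frozen-edge case; your plan omits it, and the argument does not go through without it.
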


In particular, combined with Observation~\ref{O:balance}, Lemma~\ref{L:goodstaysgood} shows that once a pivot has been made at a vertex $i$, its adjacent triangles are repaired and never become unbalanced again.

  \begin{proof}
    Let $i,j,k,m$ be a $4$-tuple of vertices, and say that we pivot at $i$ during the algorithm. 
    %We distinguish two cases depending on whether $j$, $k$ and $m$ had already been chosen as a pivot and removed from the graph or not. We first consider the case where $j$, $k$ and $m$ have not been removed yet. 
    Pivoting at $i$ might change the weights of the edges $jk$, $km$ and $jm$. We denote by $x$ and $x'$ the weights respectively before and after pivoting. By construction, the triangles containing $i$ are not unbalanced after the pivot. If $jkm$ is unbalanced after the pivot, if it was already unbalanced, there is nothing to prove. Otherwise, the value of at least one of its edges was modified and is now in excess or in deficit for this triangle. If one of the vertices of $jkm$ had already been chosen as a pivot, say $m$,  we argue inductively, i.e., we assume that none of the triangles adjacent to $m$ are unbalanced and prove that it is still the case with the new weights. The initialization of the induction is provided by Observation~\ref{O:balance}.

    If the weight of an edge got increased and is in deficit, then necessarily another edge got increased and is in excess for this triangle, we look at that one instead. So if $jk$ is an edge that got increased and is in excess, without loss of generality, its new value is $x'(jk)=x(ij)-x(ik)$. Note that since the weight of $jk$ has changed, neither $j$ nor $k$ had been chosen before as a pivot, as it would have frozen that edge. If neither $jm$ nor $km$ are frozen, the new weights of the edges $jm$ and $km$ satisfy $x'(jm) \geq x(ij)-x(im)$ and $x'(km)\geq x(im)-x(ik)$. If both were frozen, i.e., there had been a pivot at $m$, then the weights of the edges $jm$ and $km$ satisfy $x'(jm)=x(jm) \geq x(ij)-x(im)$ and $x'(km)=x(km)\geq x(im)-x(ik)$ because the triangles $ijm$ and $ikm$ are not unbalanced by the induction hypothesis. Therefore $x'(jk) \leq x'(jm)+x'(km)$ and $jk$ is actually not in excess.

    Likewise, if the weight of some edge got reduced and is in excess for this triangle, some other edge got reduced and is in deficit. So if $jk$ is an edge that got reduced and is in deficit, its new weight is $x'(jk)=x(ij)+x(ik)$. Again, neither $j$ not $k$ had been chosen before as a pivot. If neither $jm$ nor $km$ were frozen, the new weights of the edges $jm$ and $km$ satisfy $x'(jm) \leq x(ij)+x(im)$ and $x'(km)\geq x(im) - x(ik)$, and thus $x'(jk) \geq x'(jm) -x'(km)$. If both were frozen, i.e., there had been a pivot at $m$, then the weights of the edges $jm$ and $km$ satisfy $x'(jm) \leq x(ij)+x(im)$ and $x'(km)\geq x(im) - x(ik)$, because by the induction hypothesis, $ijm$ and $ikm$ are not unbalanced. Thus $x'(jk) \geq x'(jm) -x'(km)$. Similarly, the other inequality $x'(jk) \geq x'(km) -x'(jm)$ is satisfied, and thus $jk$ is actually not in deficit.
      \end{proof}
    %For the other case, if at least two vertices out of $j$, $k$ and $m$ had already been chosen as a pivot, then all the edges in the triangle $jkm$ are frozen, and thus it cannot be made unbalanced. The last case to consider is if $m$ (without loss of generality) had already been chosen as a pivot before, and pivoting at $i$ changed the weight of the edge $jk$, making the triangle $jkm$ unbalanced. We argue inductively, i.e., we assume that none of the triangles adjacent to $m$ are unbalanced and prove that it is still the case with the new weights. The initialization of the induction is provided by Observation~\ref{O:balance}. If the weight of $jk$ got increased and is now in excess, then $x'(jk)=x(ij)-x(ik)$. Furthermore, by the induction hypothesis, $ijm$ was balanced before pivoting at $i$, and thus $x'(jm)=x(jm)\geq x(ij)-x(im)$. Likewise, $x'(km)=x(km)\geq x(im)-x(ik)$

The following lemma shows that when pivoting at a vertex $i$, we are not only repairing the triangles adjacent to $i$, but also other triangles which required smaller changes than the ones induced by $i$.

  \begin{lemma}\label{L:betterlemma}
    \begin{enumerate}
          \item Let $e=uv$ denote an edge adjacent to $k$ unbalanced triangles for which $e$ is in excess. Order the third vertices of these triangles from $1$ to $k$ so that $1$ is the vertex inducing the biggest decrease of the weight of $e$ if it is chosen as a pivot. Then after pivoting at $i$, all the triangles $uvj$ for $j \geq i$ are no longer unbalanced.
    \item Let $e=uv$ denote an edge adjacent to $k$ unbalanced triangles for which $e$ is deficit. Order the third vertices of these triangless from $1$ to $k$ so that $1$ is the vertex inducing the biggest increase of the weight of $e$ if it is chosen as a pivot. Then after pivoting at $i$, all the triangles $uvj$ for $j \geq i$ are no longer unbalanced.

    \end{enumerate}
  \end{lemma}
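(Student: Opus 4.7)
The plan is to prove Part~1 in detail; Part~2 is essentially symmetric. Fix $j \geq i$ in the given ordering and write $a = x(uv_i)$, $b = x(vv_i)$, $c = x(uv_j)$, $d = x(vv_j)$, $f = x(v_i v_j)$. Three basic facts drive the whole argument: since $uv$ is in excess in $uvv_i$, the pivot at $v_i$ sets $x'(uv) = a + b$; since the pivot enforces the triangle inequalities around $v_i$, we have $x'(uv_j) \in [|a-f|, a+f]$ and $x'(vv_j) \in [|b-f|, b+f]$ (in fact each of these three values is just the $L^\infty$-projection of $c$ or $d$ onto the corresponding interval); and the ordering hypothesis unpacks to the crucial inequality $a + b \leq c + d$.

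Checking that $uvv_j$ is balanced after the pivot amounts to verifying three triangle inequalities, one per edge. Two of them — namely $x'(uv_j) \leq x'(uv) + x'(vv_j)$ and $x'(vv_j) \leq x'(uv) + x'(uv_j)$ — follow from direct arithmetic on the clamp intervals and do not use the ordering: for instance, $x'(uv_j) - x'(vv_j) \leq (a+f) - |b-f| \leq a+b$, with a single case split on the sign of $b-f$. The main obstacle is the third inequality $a + b \leq x'(uv_j) + x'(vv_j)$, which is the only one that invokes the ordering.

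To establish it, I would case-analyze on where each of $x'(uv_j), x'(vv_j)$ lies: unchanged (equal to $c$ or $d$), clamped upward (to $a+f$ or $b+f$), or clamped downward (to $|a-f|$ or $|b-f|$). When both are unchanged, $c + d \geq a + b$ closes the argument immediately. When one edge is clamped upward, the extra $f$-terms make the sum trivially at least $a + b$. The most delicate sub-cases occur when an edge is clamped downward: the strict deficit combined with the ordering closes the gap — for instance $c < |a-f|$ together with $c + d \geq a + b$ yields $d > a + b - |a-f|$, so that $x'(uv_j) + x'(vv_j) \geq |a-f| + d > a + b$ when $x'(vv_j) = d$, and the other sub-sub-cases are handled analogously. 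A handful of residual sub-cases turn out to be vacuous because the downward clamping and the ordering are mutually incompatible (for example, $c$ unchanged with $c \leq a+f$ combined with $d < |b-f| = b-f$ when $b \geq f$ contradicts $c + d \geq a + b$).

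For Part~2, the roles swap: assuming WLOG $a \geq b$, the pivot sets $x'(uv) = a - b$, and the ordering becomes $|c-d| \leq a-b$. The inequality $a - b \leq x'(uv_j) + x'(vv_j)$ now follows directly from the real triangle inequality $|a-b| \leq |a-f| + |b-f|$ applied to the clamped lower bounds, with no appeal to the ordering. The ordering enters instead in verifying the other two inequalities, which together reduce to showing $|x'(uv_j) - x'(vv_j)| \leq a - b$; an analogous case analysis handles this, with similar vacuous sub-cases arising from clamping-ordering clashes.
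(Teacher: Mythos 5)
Your plan is correct in substance, and it reaches the same conclusion as the paper, but it takes a noticeably different route. The paper dispatches the three post-pivot triangle inequalities for $uvv_j$ mostly by contradiction, leaning on Observation~2.1 (after pivoting at $i$, every triangle through $i$ is balanced): for instance, if $x'(uv_j)-x'(vv_j) > a+b$ then at least one of $iuv_j$, $ivv_j$ would have to be unbalanced, which is impossible. You instead encode that same information as the clamp intervals $x'(uv_j)\in[|a-f|,a+f]$, $x'(vv_j)\in[|b-f|,b+f]$ and do direct interval arithmetic plus an exhaustive case split on whether each of $x'(uv_j),x'(vv_j)$ is unchanged, clamped to its upper endpoint, or clamped to its lower endpoint. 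This is more elementary and makes the role of the ordering hypothesis ($a+b\le c+d$) very explicit, at the cost of being longer. Your observation for Part~2 — that $a-b\le x'(uv_j)+x'(vv_j)$ follows immediately from $|a-b|\le|a-f|+|b-f|$, with no appeal to the ordering — is a genuinely cleaner shortcut than the paper's contradiction argument for that inequality. One caution: the remark that the residual sub-cases are "vacuous because the downward clamping and the ordering are mutually incompatible" is not uniformly true. In the sub-case where both $x'(uv_j)=|a-f|$ and $x'(vv_j)=|b-f|$ with $a<f$ and $b<f$, the ordering $c+d\ge a+b$ combined with $c<f-a$, $d<f-b$ forces $f>a+b$, so the case is reachable, but then $|a-f|+|b-f|=2f-a-b>a+b$ and the inequality holds — the ordering still does the work, just not by rendering the case vacuous. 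Make sure the final write-up catches this: the ordering either kills the sub-case or closes the inequality, and one cannot assume it always does the former.
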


This lemma is less trivial than it might appear as the weights of $uj$ and $vj$ may have also changed with the pivoting at $i$.

  \begin{proof}
Throughout the proof, we denote by $x$ the edge-weights before pivoting, and by $x'$ the edge-weights after pivoting. Note that for both items, none of the edges $uv$, $uj$ or $vj$ for $j \in [1,k]$ are frozen since they belong to unbalanced triangles and thus this would contradict Observation~\ref{O:balance} and Lemma~\ref{L:goodstaysgood}.
    
\textbf{First item.} Let $j$ be an integer such that $j>i$. After pivoting at $i$, $x'(e)=x(iu)+x(iv)$, so we want to prove that $|x'(uj)-x'(vj)|\leq x(iu)+x(iv)$ and $x(iu)+x(iv) \leq x'(uj)+x'(vj)$.

We start with the latter inequality. From the ordering and the fact that $j>i$, we know that $x(iu)+x(iv)\leq x(uj)+x(vj)$. Therefore, for the inequality to be violated, one of $uj$ or $vj$ must have been reduced. Say that it is $uj$, thus we have $x'(uj)=x(ui)+x(ij)$. Since we also have $x'(vj)\geq x(vi)-x(ij)$, we obtain $x'(uj)+x'(vj) \geq x(ui)+x(iv)$ and the inequality is actually not violated.

For the former inequality, let us assume without loss of generality that $x'(uj)-x'(vj)>x(ui)+x(iv)$, the other case being symmetric. Then we have $x'(uj)>x(ui)+x(iv)+x'(vj)$, and thus it cannot be that we have both $x'(uj)\leq x(ui)+x(ij)$ and $x(ij) \leq x(iv)+x'(vj)$. Therefore at least one of the two triangles $iuj$ or $ivj$ is unbalanced after pivoting at $i$, which is impossible by Observation~\ref{O:balance}.

%Then one of $uj$ or $vj$ got modified during the pivoting, as otherwise the quadrilateral $uivj$ would be unbalanced both before and after the pivotingfor $\ell$ which implies that one of the two triangles $uij$ or $vij$ is unbalanced. If $uj$ got reduced, then $\ell'(uj)=\ell(ui)+\ell(ij)$, and since we have $\ell'(vj)\geq \ell(ij)-\ell(iv)$, we obtain $\ell'(uj)-\ell'(vj) \leq \ell(ui)+\ell(vi)$, a contradiction. If $uj$ got increased, then we have $\ell'(uj)=\ell(ui)-\ell(ij)$ or $\ell'(uj)=\ell(ij)-\ell(ui)$. In the first case, we use $\ell'(vj)\geq \ell(ij)-\ell(vi)$ to obtain $\ell'(uj)-\ell'(vj)\leq \ell(ui)+\ell(vi)$, a contradiction. In the second case, we use $\ell'(vj)\geq \ell(vi)-\ell(ij)$ to obtain $\ell'(uj)-\ell'(vj)\leq\ell(vi)-\ell(ui)\leq \ell(vi)+\ell(ui)$, a contradiction. If $vj$ got increased and $uj$ did not get modified, then $\ell'(vj)=\ell(ij)-\ell(vi)$ or $\ell'(vj)=\ell(vi)-\ell(ij)$. In the first case, since $\ell'(uj)\leq \ell(ij)+\ell(ui)$, we get $\ell'(uj)-\ell'(vj)\leq \ell(ui)+\ell(iv)$, a contradiction. In the second case, we have $\ell'(uj)-\ell'(vj) \leq \ell(ij)+\ell(ui)-\ell(vi)+\ell(ij)\leq \ell(ui)+\ell(vi)$ since in that case $\ell(ij)\leq \ell(vi)$, and we get another contradiction. Finally, it cannot be that $vj$ got reduced and $uj$ did not get modified: for $uj$ not to be modieied, we must have $\ell(uj)\leq \ell(ij)+\ell(ui)$, but then 

\textbf{Second item.} Let $j$ be an integer such that $j>i$. After pivoting at $i$, up to symmetry we can assume that $x'(e)=x(iu)-x(iv)$, so we want to prove that $|x'(uj)-x'(vj)|\leq x(iu)-x(iv)$ and $x(iu)-x(iv)\leq x'(uj)+x'(vj)$.

We start with the former inequality, and first prove that $x'(uj)-x'(vj)\leq x(iu)-x(iv)$. From the ordering and the fact that $j>i$, we know that $x(iu)-x(iv)\geq |x(uj)-x(vj)|\geq x(uj)-x(vj)$. Therefore, for the inequality to be violated, the weight of $uj$ must have increased or the weight of $vj$ must have decreased. If the weight of $vj$ decreased, $x'(vj)=x(ij)+x(iv)$, but then with $x'(uj) \leq x(ij)+x(ui)$ we obtain $x'(uj)-x'(vj) \leq x(ui)-x(iv)$, a contradiction. If the weight of $uj$ increased, $x'(uj)=|x(ui)-x(ij)|$, which we combine with $x'(vj) \geq |x(ij)-x(vi)|$ to obtain that $x'(uj)-x'(vj) \leq x(ui)-x(vi)$. The proof that $x'(vj)-x'(uj)\leq x(iu)-x(iv)$ is similar: if the weight of $uj$ decreased, $x'(uj)=x(ij)+x(ui)$, which combined with $x'(vj) \leq x(ij)+x(vi)$ gives $x'(vj) -x'(uj) \leq x(vi)-x(ui)\leq x(ui)-x(vi)$. If the weight of $vj$ increased, $x'(vj)=|x(vi)-x(ij)|$, which we combine with $x'(uj) \geq |x(ui)-x(ij)|$ to obtain $x'(vj)-x'(uj) \leq x(ui)-x(vi)$.

For the latter inequality, if it is violated we have $x(ui) > x(vi)+x'(uj)+x'(vj)$. As in the first item, this means that one of the two triangles $iuj$ or $ivj$ is unbalanced after pivoting, which is impossible by Observation~\ref{O:balance}.
  \end{proof}

In order to prove Theorem~\ref{thm:mvd}, we use a lower bound on the value of the optimal solution which relies on linear programming duality. A hitting set for $\mathcal{T}'$ is a set of edges $H$ so that each triangle of $\mathcal{T}'$ contains at least one edge of $H$. A fractional packing for a set of triangles $\mathcal{T}'$ is a  set of values $(p_t)_{t\in \mathcal{T}'}$ so that for each edge $e$, we have $\sum_{t \ni e}p_t\leq 1$. The value of the fractional packing is $\sum_{t\in \mathcal{T'}}p_t$.

\begin{lemma}\label{L:LP}
Any solution of \mvd is at least the value of the smallest hitting set for $\mathcal{T}'$, which is itself lower bounded by the maximal value of a fractional packing of triangles in $\mathcal{T}'$.
\end{lemma}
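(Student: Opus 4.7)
The proof splits naturally into the two claimed inequalities, which I would establish in sequence.

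For the first inequality, namely that any feasible \mvd solution has cost at least the size of a minimum hitting set for $\mathcal{T}'$, the plan is to argue directly from the definition of the problem. Let $y$ be any metric and let $S = \{ e \in \binom{[n]}{2} : x(e) \neq y(e) \}$ be the set of modified edges, so that $\|x - y\|_0 = |S|$. I claim $S$ hits every triangle in $\mathcal{T}'$. Indeed, if some $t = ijk \in \mathcal{T}'$ had all three of its edges outside of $S$, then $y$ would coincide with $x$ on $t$, and hence $y$ would itself violate the triangle inequality on $\{i,j,k\}$, contradicting the fact that $y$ is a metric. Therefore $|S|$ is at least the size of a minimum hitting set for $\mathcal{T}'$.

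For the second inequality, the plan is to invoke linear programming duality in the standard way. Write the minimum hitting set for $\mathcal{T}'$ as the integer program $\min \sum_e y_e$ subject to $\sum_{e \in t} y_e \geq 1$ for every $t \in \mathcal{T}'$ and $y_e \in \{0,1\}$. Relaxing to $y_e \geq 0$ can only decrease the optimum. The LP dual reads $\max \sum_{t \in \mathcal{T}'} p_t$ subject to $\sum_{t \ni e} p_t \leq 1$ for every edge $e$ and $p_t \geq 0$, which is exactly the definition of a fractional packing of $\mathcal{T}'$. By LP duality (both the primal and dual are feasible and bounded, since $p \equiv 0$ and $y \equiv 1$ are feasible), the two optima coincide, and both are at most the integral hitting set optimum.

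Chaining the two inequalities yields the statement. There is no real obstacle here: the first part is a one-line argument from the definition of $\mathcal{T}'$, and the second is textbook LP duality applied to the fractional relaxation of hitting set. The only thing worth being slightly careful about is making sure the three-edge hitting set formulation is the right one (i.e. that triangles in $\mathcal{T}'$ are viewed as $3$-element subsets of edges), which is already set up by the terminology introduced before the lemma.
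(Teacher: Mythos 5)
Your proof is correct and follows essentially the same two-part structure as the paper's: first observing that the set of modified edges must hit every unbalanced triangle, and then applying LP duality between fractional hitting set and fractional packing. You spell out the details a bit more explicitly than the paper does, but there is no substantive difference in approach.
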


\begin{proof}
Any solution of \mvd must hit at least one edge of each unbalanced triangle, since otherwise that triangle remains unbalanced and thus we do not obtain a valid metric. The linear programming relaxation of the hitting set problem dualizes to the problem of finding a maximal fractional packing. Therefore the value of the maximal fractional packing is a lower bound for the solution of \mvd.
\end{proof}

The value of the (non-fractional) smallest hitting set of $\mathcal{T}'$ might be different from \mvd: even though such a hitting set $H$ is hitting all the unbalanced triangles, there might be no way of replacing the value of the weight of an edge in $H$ without recreating an unbalanced triangle, an example is given in Figure~\ref{F:badexamples}. \vca{Is Figure~\ref{F:badexamples} an ok example here?}

We are now ready to prove Theorem~\ref{thm:mvd}. 

%\begin{theorem}
%MVD-PIVOT is an $O(\log n)$ approximation of the optimal $\ell_0$ MVD solution on complete graphs.
%\end{theorem}

\begin{proof}[Proof of Theorem~\ref{thm:mvd}]
We first observe that the solution output by Algorithm~\ref{algo:mvd} indeed forms a metric. By Observation~\ref{O:balance}, at each round the algorithm repairs the triangles incident to the pivot, and by Lemma~\ref{L:goodstaysgood}, these triangles stay repaired as the algorithm progresses. Since at the end of the algorithm, every vertex has been chosen once as a pivot, all the triangles are repaired and thus the triangle inequality is satisfied everywhere.

We denote by ALG the output of the algorithm and by OPT the optimal solution and recall that $\mathcal{T}$ denotes the set of all triples and $\mathcal{T}'$ the set of all unbalanced triangles with respect to the input distances. For a triangle $ijk \in \binom{n}{3}$, let $A_{ijk}$ be the indicator of the event that one of them is chosen as a pivot and one of the edges (i.e., the edge not incident to the pivot) was modified as a result. Note that in the execution of the algorithm one edge can be modified many times but this event happens at most once for each triangle since after pivoting at a vertex its adjacent edges are frozen. Let $p_{ijk}=\mathbb{E}[A_{ijk}]$, where the expectation is taken over the algorithm's randomness. Then $\mathbb{E}[ALG] \leq \sum_{t \in \mathcal{T}}p_t$. By Lemma~\ref{L:goodstaysgood}, a triangle never becomes unbalanced in the course of the algorithm, therefore $p_t=0$ for $t \not\in \mathcal{T}'$. Thus $\mathbb{E}[ALG] \leq \sum_{t \in \mathcal{T'}}p_t$.

  To show that the algorithm is an $\alpha$-approximation, we prove that for every edge $e$, $q_e:=\sum_{t\in\mathcal{T}',t\ni e}p_t \leq \alpha$. This shows that $\frac{p_t}{\alpha}$ is a fractional packing for the unbalanced triangles, and therefore, by Lemma~\ref{L:LP}, we have \[\sum_{t\in \mathcal{T'}}\frac{p_t}{\alpha} \leq OPT,\] and thus \[\mathbb{E}[ALG]\leq \sum_{t\in \mathcal{T}'}p_t\leq \alpha OPT.\] 
  
  Hence, for the rest of the proof, we show that $q_e = O(\log n)$ for every $e$. Fix an edge $e$ that belongs to at least one unbalanced triangle. Note that, for any unbalanced triangle $t$, when the event indicated by $A_t$ happens, one of the three vertices of $t$ gets chosen uniformly at random, and the opposite edge gets modified. Therefore, we have that $\sum_{t\in\mathcal{T}',t\ni e}p_t/3= \mathbb{E}[\# \textrm{times that $e$ is modified}]$, and thus it suffices to bound the latter expectation.

  We will induct on the number $n'$ of unbalanced triangles that $e$ belongs to. By Lemma~\ref{L:goodstaysgood}, this number is non-increasing throughout the algorithm. Let $c_e(n')$ be an upper bound on the expected number of modifications when $e$ is adjacent to $n'$ unbalanced triangles. For $n_1$ and $n_2$ such that $n_1+n_2=n'$, let us denote by $E_{n_1,n_2}$ the event that a pivot is chosen among the vertices forming an unbalanced triangle with $e$, and among the $n'$ unbalanced triangles, pivoting at $n_1$ of them would induce an increase in the weight of $e$ and pivoting at $n_2$ of them would induce a decrease of the weight of $e$. 
  Note that among pairs $(n_1, n_2)$ such that $n_1 + n_2 = n'$, at most one $E_{n_1, n_2}$ holds. 
  Conditioned to $E_{n_1,n_2}$, for $1\leq k \leq n_1$ and $1\leq k'\leq n_2$, $F_{k,n_2}$ and $G_{n_1,k'}$ denote respectively the events that the pivot we choose induces the $k$th biggest increase, respectively the $k'$th biggest decrease (ties broken arbitrarily). Since pivots are chosen randomly, all these events are disjoint and happen with probability $1/n'$. By Lemma~\ref{L:betterlemma}, after the event $F_{k,n_2}$, at most $n_2+k-1$  unbalanced triangles remain. Likewise, after the event $G_{n_1,k}$, at most $n_1+k-1$ unbalanced triangles remain.

  We prove by induction that $c_e(n') \leq c \ln(n'+1)$ for some constant $c>2$ to be determined. For the base case, we use the easy bound $c_e(n') \leq n' \leq c \ln(n' + 1)$ for any $n' \leq c$. For $n' > c$,

%  \begin{eqnarray*}
%    c_e(n')&\leq &1+ \sum_{n''} Pr[\textrm{$n''$ bad triangles remain after pivoting}] c_e(n'') \\
%    &\leq & 1+\sum_{n_1+n_2=n'} Pr[E_{n_1,n_2}](\sum_{k=1}^{n_1}Pr[F_{k,n_2}]c_e(k-1+n_2)+\\& &\sum_{k=1}^{n_2}Pr[G_{n_1,k}]c_e(k-1+n_1))\\
%      & \leq &1+ \sum_{n_1+n_2=n'} Pr[E_{n_1,n_2}](\sum_{k=1}^{n_1}\frac{1}{n'}c_e(k-1+n_2)+\sum_{k=1}^{n_2}\frac{1}{n'}c_e(k-1+n_1)) \\
%    &\leq &1+\sum_{n_1+n_2=n'} Pr[E_{n_1,n_2}](\sum_{k=1}^{n'/2}\frac{1}{n'}c_e(n')+\sum_{k=1}^{n'}\frac{1}{n'}c_e(k)) \\
%    &\leq& 1+ \frac{c_e(n')}{2} + \sum_{k=1}^{n'}\frac{1}{n'}c_e(k)\\
%   c_e(n') &\leq & 4 \ln(n'+1)
%  \end{eqnarray*}

  \begin{align*}
    c_e(n')&\leq 1+ \sum_{n''} Pr[\textrm{$n''$ bad triangles remain after pivoting}] c_e(n'')&& \\
    &\leq 1+\sum_{n_1+n_2=n'} Pr[E_{n_1,n_2}](\sum_{k=1}^{n_1}Pr[F_{k,n_2}]c_e(k-1+n_2)+\sum_{k'=1}^{n_2}Pr[G_{n_1,k'}]c_e(k'-1+n_1))&&\\
    &\leq 1+\max_{n_1+n_2=n'} (\sum_{k=1}^{n_1}Pr[F_{k,n_2}]c_e(k-1+n_2)+\sum_{k'=1}^{n_2}Pr[G_{n_1,k'}]c_e(k'-1+n_1))&&\\
    & \leq 1+ \max_{n_1+n_2=n'} (\sum_{k=1}^{n_1}\frac{1}{n'}c_e(k-1+n_2)+\sum_{k'=1}^{n_2}\frac{1}{n'}c_e(k'-1+n_1))&& \\
    & \leq 1+ \max_{n_1+n_2=n'} (\sum_{k=n_1+1}^{n'}\frac{1}{n'}c_e(k-1)+\sum_{k'=n_2+1}^{n'}\frac{1}{n'}c_e(k'-1) )&& \\
    & \leq^{(1)} 1+  (\sum_{k=\lfloor n'/2\rfloor+1}^{n'}\frac{1}{n'}c_e(k-1)+\sum_{k=\lceil n'/2\rceil+1}^{n'}\frac{1}{n'}c_e(k-1)) &&\\
    & \leq^{(2)} 1 +\frac{1}{n'}c\ln(\frac{(n'!)^2}{\lfloor n'/2 \rfloor! \lceil n'/2 \rceil !})&&\\
    %& \leq 1+\frac{c}{n'} (1+(n'+\frac{1}{2})\ln n'-n'+\frac{n'+1}{2}(1+\ln 2))&&\\
    & \leq^{(3)} 1 + \frac{c}{n'}(2+(2n'+1)\ln n'-2n'-(\lfloor n'/2 \rfloor +\frac{1}{2})\ln (\lfloor n'/2 \rfloor )-(\lceil n'/2 \rceil +\frac{1}{2})\ln (\lceil n'/2 \rceil )+n') &&\\
    %& \leq 1 +\frac{2c}{n'}(1+\frac{\ln 2}{2}) +c(\ln 2-1)+c \ln n&& \\
    & \leq (2c-\frac{c}{2}-\frac{c}{2}) \ln n'+ 1- (2 - \frac{\ln 2}{2} - \frac{\ln 2}{2} - 1 - g(n'))c &&\\
    & \leq^{(4)} c \ln (n'+1) && \\
  \end{align*}
  where $g(n') \rightarrow_{n\rightarrow \infty} 0$ and does not depend on $c$. $(1)$ comes from the fact that the sum is maximized when $n_1=\lfloor n'/2 \rfloor$, $(2)$ is the induction hypothesis, $(3)$ is a Stirling approximation $n^{n+\frac{1}{2}}e^{-n}\leq n!\leq en^{n+\frac{1}{2}}e^{-n}$. Since we are in the case $n' \geq c$, fixing $c$ a large enough constant so that $1 - (2 - \ln 2 - 1 - g(n'))c \leq 0$ ensures $(4)$.
  %$(4)$ follows from taking $c > \frac{1}{1-\ln2}$ for $n'$ (uniformly) big enough and taking a big enough value of $c$ for the small values of $n'$. 
\end{proof}

\ar{Reviewer 3: Some of the analysis could be presented more formally (such as the
proof that it is a (log n)-approximation. As mentioned above, why is
the max usage of an edge O(log n)?
}

\begin{figure}[h]
    \centering
    \def\svgwidth{\textwidth}
    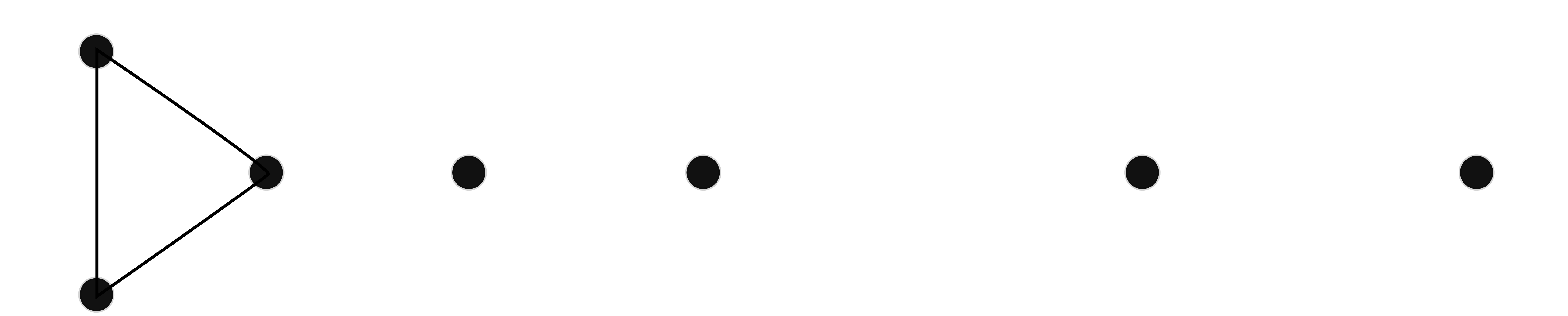
    \caption{The weights of the edges not pictured are the shortest path distances in the graph. In expectation, a solution output by Algorithm~\ref{algo:mvd} has cost $\Omega(\log n)$, while the optimal solution only changes the weight of the edge $vw$ and thus has cost $1$.}
    \label{F:lowerbound}
\end{figure}

The example in Figure~\ref{F:lowerbound} shows that the analysis in Theorem~\ref{thm:mvd} is tight.

\begin{lemma}\label{L:tight}
Let $x$ denote the distance data  described by Figure~\ref{F:lowerbound}. Then in expectation, Algorithm~\ref{algo:mvd} yields a solution of cost $\Omega(\log n)$, while the optimal solution has cost $1$.
\end{lemma}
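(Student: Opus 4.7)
The plan is to show that $\opt = 1$ by exhibiting a single-edge modification that turns the input into a metric, then to argue that, as long as the algorithm has only pivoted at $u$-vertices, every modification it performs is an update to $x(vw)$, and the number of such updates equals the number of left-to-right minima of a random subsequence of $[n]$. Since with constant probability this subsequence has linear length, a standard $H_n = \Theta(\log n)$ calculation will then yield the lower bound.

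First, I would enumerate the unbalanced triangles of $x$. A direct verification shows that the triangles $(v,u_i,u_j)$ and $(w,u_i,u_j)$ satisfy their triangle inequalities with \emph{equality} using $x(u_i u_j) = i + j$, and that the triangles $(u_i,u_j,u_k)$ are slack. The only unbalanced triangles are the $n$ triangles $(v,w,u_i)$, all sharing the edge $vw$. Hence resetting $x(vw)$ to $0$ already yields a valid metric, so $\opt \leq 1$, and since at least one triangle is unbalanced, $\opt = 1$.

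Next, I would establish the following structural invariant: as long as neither $v$ nor $w$ has been chosen as a pivot, no edge other than $vw$ has ever been modified, and the current value of $x(vw)$ is $\min\bigl(\infty,\, 2\min\{j_1,\ldots,j_t\}\bigr)$, where $j_1,\ldots,j_t$ are the indices of the $u$-vertices already pivoted. The invariant is proved by induction on $t$. When pivoting at $u_{j_{t+1}}$, the triangles $(u_{j_{t+1}}, v, u_i)$ and $(u_{j_{t+1}}, w, u_i)$ still satisfy the triangle inequality (they were tight in the input and their edges are unchanged by the induction hypothesis), and the triangles $(u_{j_{t+1}}, u_i, u_k)$ are slack. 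Only the triangle $(u_{j_{t+1}}, v, w)$ can cause an update, namely a reset of $x(vw)$ to $2j_{t+1}$, and this happens precisely when the current value of $x(vw)$ exceeds $2j_{t+1}$, i.e., exactly when $j_{t+1}$ is a left-to-right minimum of $j_1,\ldots,j_{t+1}$. Hence the total cost incurred before the first $v$- or $w$-pivot equals the number of left-to-right minima in $j_1,\ldots,j_{\tau-1}$, where $\tau$ denotes that first $v$- or $w$-pivot step.

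Finally, I would lower bound this random quantity. Let $Y = \tau - 1$. Since $Y\geq y$ iff the first $y$ pivots are all $u$-vertices,
\[
\Pr[Y \geq y] = \frac{\binom{n}{y}}{\binom{n+2}{y}} = \frac{(n+2-y)(n+1-y)}{(n+2)(n+1)},
\]
so $\Pr[Y\geq n/2] \geq 1/4$ for all sufficiently large $n$. Conditional on $Y = y$, the relative order of the pivoted $u$-indices is a uniformly random permutation, so the conditional expected number of left-to-right minima is $H_y = \Theta(\log y)$. Combining these, $\E[\mathrm{ALG}] \geq \E[H_Y] \geq (1/4) \cdot H_{n/2} = \Omega(\log n)$. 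The main delicate step will be the structural invariant, and especially verifying that the incidental triangles $(u_{j_{t+1}}, v, u_i)$ and $(u_{j_{t+1}}, w, u_i)$ do not trigger any edge update; this relies crucially on those triangles being exactly tight in the input, which is a very rigid property of this construction.
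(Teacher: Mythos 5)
Your $\opt = 1$ argument and your structural invariant (that before the first $v/w$-pivot, only $x(vw)$ is ever modified, and its value is $2\min\{j_1,\dots,j_t\}$, because the triangles $vu_iu_j$ and $wu_iu_j$ are exactly tight) are both correct and match the paper. The gap is in the final step: you lower bound $\E[\mathrm{ALG}]$ by the expected \emph{number of modification operations} applied to $x(vw)$, namely the number of left-to-right minima $L$. But $\mathrm{ALG}$ is the $\ell_0$ cost of the final output, i.e.\ the number of distinct \emph{edges} whose value was changed. All of your $L$ operations hit the same edge $vw$, so they contribute exactly $1$ to $\mathrm{ALG}$, not $L$. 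Pointwise the inequality $\mathrm{ALG}\geq L$ is simply false: if the pivots come in the order $u_n, u_{n-1}, \dots, u_1$ before $v$ or $w$, then $L = n$ while $\mathrm{ALG}=1$.

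The correct quantity is identified in the paper: letting $k=\min\{j_1,\dots,j_{\tau-1}\}$ (or $k$ essentially $n$ if $v$ or $w$ is the first pivot), the algorithm's final $\ell_0$ cost is exactly $k$. The missing piece of your argument is what happens \emph{at} the first $v/w$-pivot: at that point $x(vw)=2k$, so the triangles $vwu_j$ for $j<k$ are still unbalanced, and pivoting at $v$ (say) forces the algorithm to raise $x(wu_j)$ for every $j<k$. That is where the $k-1$ additional touched edges come from, giving $\mathrm{ALG}=1+(k-1)=k$. One then computes $\E[k]=\Theta(\log n)$ by conditioning on $\tau=m$ and using $\E[\min\text{ of }m-1\text{ random indices}]\approx n/m$, summed against $\Pr[\tau=m]=\Theta(1/n)$, as the paper does. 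It is a coincidence that $\E[L]$ is also $\Theta(\log n)$; your derivation would not survive a change in the construction that decoupled these two quantities, and as written it does not establish the claimed bound.
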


\begin{proof}
It is straightforward to see that the optimal solution has cost $1$: it has cost at least one since there is (at least) one unbalanced triangle, and changing the weight of the $vw$ edge to a value in $[0,2]$ yields a valid metric.

For the upper bound, we first observe that the unbalanced triangles in this instance are exactly the triangles $vwu_k$. The effect of Algorithm~\ref{algo:mvd} will be to repair these triangles in some order.

If the first pivot is at a vertex $u_k$, the effect of the pivot on the weights is to change the weight of $vw$ to $2k$. After this change, the triangles $vwu_j$ for $j>k$ are no longer unbalanced, while the triangles $vwu_j$ for $j<k$ are still unbalanced. Future pivots at a vertex $u_{k'}\neq v,w$ work similarly and repair exactly the triangles $vwu_j$ for $j>k'$. 

If at any point, $v$ or $w$ is chosen as a pivot, each unbalanced triangle $vwu_k$ gets repaired by changing the weight of $vu_k$ or $wu_k$ to (roughly) $\infty$, and then the instance is fully repaired.

Therefore, the number of edges changed by Algorithm~\ref{algo:mvd} is exactly $k$, where $k$ is the smallest index of a vertex $u_k$ that was chosen as a pivot before $v$ or $w$ was chosen. Since pivots are chosen uniformly at random, for $1\leq m\leq n+2$, the probability that $v$ or $w$ gets chosen as the $m$th pivot is $\Theta(1/n)$, and the minimum index of pivots chosen until then is $\Theta(n/m)$ in expectation. Therefore the expected value output by the algorithm is $\Theta(\sum_m 1/n \times n/m)=\Theta(\log n)$.
\end{proof}

\paragraph*{\umvd and a lower bound.}We now explain how to modify Algorithm~\ref{algo:mvd} to obtain an $O(\log n)$-approximation algorithm for the \umvd problem.%\el{Also mention reduction somewhere?} 
We replace the triangle fixing operations so that the edge opposite to the pivot is changed minimally to satisfy the ultrametric conditions. The algorithm, originally due to Ailon and Charikar~\cite{ailon2011fitting} (designed for the $\ell_1$ objective) is given in Algorithm~\ref{algo:umvd}.\footnote{In fact, there is an approximation-preserving reduction from \umvd to \mvd, %: given an instance of \umvd with distance values from $\{ 1, \dots, n^2 \}$ (which can be assumed without loss of generality), convert into an \mvd instance by letting $x(e) \leftarrow M^{x(e)}$ for some $M > n$, 
see the proof of Corollary~\ref{cor:mvd}. We choose to present the direct algorithm for more intuition and possible speedup. }

\begin{algorithm}[h]
%\SetAlgoLined
    {\bf Input:} $n \in \N$, $x \in \R_{\gz}^{\nct}$, 
    {\bf Goal:} modify $x$ to an ultrametric.\\
    \If{$n \leq 2$}{{\bf return}}
        
    Choose the pivot $i \in [n]$ uniformly at random\;
    \For{$j \neq k \in [n] \setminus \{ i \}$}{
        \If{$x(i,j)=x(i,k)$}{
    $x(j, k) = \min(x(j,k),x(i,j))$;
    }
    \ElseIf{$x(j, k) \neq \max(x(i, j),x(i, k))$}{
        $x(j, k) = \max(x(i, j),x(i, k))$;
    }
    }
    \textsc{UMVD-Pivot}$(n - 1, x|_{\binom{[n]\setminus i}{2}})$ \tcp*{$x|_{\binom{[n]\setminus i}{2}}$ is $x$ restricted to $\{ (j, k) \in \binom{[n] \setminus \{ i \} }{2} \}$}
    {\bf return}
 \caption{\textsc{UMVD-Pivot}$(n, x)$}
 \label{algo:umvd}
\end{algorithm}

\begin{figure}[h]
    \centering
    \includegraphics[width=8cm]{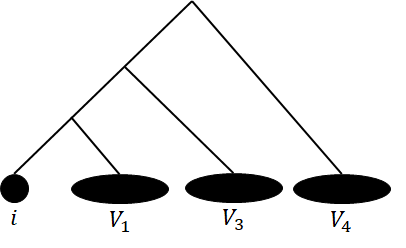}
    \caption{A pictorial description of Algorithm~\ref{algo:umvd}. Here $i$ is chosen as the pivot, and the distance from $V_j$ and $V_k$ for any $j, k$ are fixed to be $\max(j, k)$ so that every unbalanced triangle is entirely contained in some $V_j$. Then, though the algorithm syntactically considers the whole instance minus $i$, each $V_j$ does not interact with the other parts and is recursively solved.}
    \label{F:umvdpivot}
\end{figure}

Algorithm~\ref{algo:umvd} can be understood conceptually simpler than Algorithm~\ref{algo:mvd}. Consider an iteration where $i$ is chosen as the pivot. Let $x_1 < \dots < x_t$ be the distinct distances of the edges incident on $i$ (i.e., $\{ x_1, \dots, x_t \} = \{ x(i, j): j \in [n] \setminus \{ i \} \}$), and for $r \in [t]$, $V_r = \{ j \in [n] \setminus \{ i \} : x(i, j) = x_r \}$. Call each $V_r$ a {\em cluster}.  
Note that the algorithm ensures that, after the {\bf for} loop, 
$x(j, k) = \max(x(i, j), x(i, k))$ if $j$ and $k$ belong to different clusters and $x(j, k) \leq x(i, j) = x(i, k)$ if $j$ and $k$ belong to the same cluster. This implies that any triangle that is not entirely contained in one cluster becomes balanced; for any triangle $i'j'k'$ such that $i' \in V_r$ and $j', k' \in V_p$, $x(i', j') = x(i', k') = \max(x(i, i'), x(i, j')) \geq x(j', k')$. Therefore, in the subsequent iterations, different clusters do not interact at all (e.g., having a pivot in one cluster also changes the distances within the cluster), and one can recursively and separately solve each cluster, which run faster than Algorithm~\ref{algo:mvd} if the clusters are significantly smaller than $n$. (In the current implementation, one instance where the running time bound $O(n^3)$ is tight is the trivial instance where every distance value is the same.) See Figure~\ref{F:umvdpivot} for a description.

%We observe that this algorithm can be made slightly faster by recursing separately on the sets of vertices put at a fixed distance from each pivot, as in~\cite[Figure~1]{ailon2011fitting}. 
The exact same analysis technique shows that Algorithm~\ref{algo:umvd} is an $O(\log n)$-approximation algorithm for the \umvd problem: we first prove the analogues of Lemma~\ref{L:goodstaysgood} and~\ref{L:betterlemma} with a similar case analysis (which turns out to actually be simpler in this setting). Then the approximation ratio is obtained as in the proof of Theorem~\ref{thm:mvd} by bounding the expected number of times that an edge gets modified. 
For completeness, we include the proof in Appendix~\ref{appendix:umvd}.
Note that Ailon and Charikar~\cite{ailon2011fitting} first proposed this algorithm for the $\ell_1$ objective and proved it guarantees an $O(T)$-approximation where $T$ is the number of difference distance values in the instance. While $T$ can be large or small compared to $\log n$, our analysis guarantees that Algorithm~\ref{algo:umvd} achieves an $O(\min(T, \log n))$-approximation for the $\ell_0$ objective. 
%\ar{Do we need more details?}

The example in Figure~\ref{F:lowerbound} and Lemma~\ref{L:tight} applies equally well to the \umvd setting with no modifications. Yet that example might seem unconvincing for the reader: while a \emph{random} choice of pivots always yields a logarithmic approximation ratio, there is an obviously a much better deterministic choice of pivot: pivoting first at $u_1$ recovers the optimal solution immediately. Theorem~\ref{thm:mvd-lower} provides a more interesting bad example for \umvd, where \emph{any} choice of a sequence of pivots leads to a logarithmic approximation ratio.

\begin{proof}[Proof of Theorem~\ref{thm:mvd-lower}]We will construct an instance of \umvd with $n = 2^d$ vertices. 
Fix $d$ and consider $2^d$ vertices labeled by $\{0, 1\}^d$. 
Imagine that the hierarchy is given by the perfect binary tree with $d+1$ levels (starting from $1$) where leaves correspond to the $2^d$ vertices, 
and for each intermediate node, the two edges to the children are labeled $0$ or $1$, so that the label of each leaf is the concatenation of the labels from the root to the leaf. 

First, define all pairwise distances based on this tree. Then we add \textit{noise} on the following pairs of leaves: for each pair of $d$-bit strings that differ in one bit, we make their distance $\infty$, where $\infty$ is any integer larger than $d$. 
Then note that for each node, for each $i \in [d]$, exactly one edge originally at distance $i$ is noised to $\infty$.
The number of noised edges is $nd / 2$, which gives an upper bound on the optimal value for this noised instance. We will show that any choice of pivot is much worse than that. 

Let $T(d)$ be the optimal number of changed edges using any pivot sequence. 
We want to prove by induction that $T(d) \geq c\cdot 2^d \cdot d^2$ for some constant $c$. We will pick $c$ so that the base case is true for a constant $d$ to be fixed later. %(not too small that the below calculations make sense, which can be adjusted by increasing $100$).
For the induction step, given a tree with depth $d$, by symmetry, all the pivots are the same. 
Suppose that we choose an arbitrary one. Then its $d$ {\em noised neighbors} will be fixed at distance $\infty$, 
and all edges incident to them, possibly except (1) their $\infty$ edges (2) edges between them (3) edges to the pivot, will be different than in the denoised solution. 
The number of such differences is at least $d \cdot (2^d - 2d) \geq 2^d d / 1.5$ for $d$ big enough. (This is where we fix the value of $d$.)

After such a pivot, the remaining vertices of the instance can be partitioned based on their distances to the pivot, which could be $1, \ldots d-1$ or $\infty$ and that we denote respectively by $P_1, \ldots, P_{d-1}$ and $P_\infty$. Note that after the first pivot, the weight of an edge between a vertex of $P_i$ and $P_j$ for $i\neq j$ is exactly $max(i,j)$. Furthermore, this implies that a subsequent pivot at a vertex in $P_i$ does not change any value for the weights of the edges induced by $P_j$, for $j \neq i$. Therefore, it suffices to analyze separately the future pivots for each $P_i$.

For each $i = 1, \dots, d - 1$, the set of vertices forming a $P_i$ has the metric structure of the same perfect binary tree as the original instance with depth $i$ except for one node which was sent to infinity.
%($\infty$ edges in a tree with depth $t$ will be reduced to $t+1$, so will still act like $\infty$ in that tree.)
In order to simplify dealing with trees without one node, we first analyze the situation when all trees are full and then subtract the appropriate correction term. Since the biggest tree has depth $d - 1$ and the imaginary node has degree at most $2^{d} / 2$, the total correction term is at most $d \cdot 2^d / 2$. 

\begin{align*}
T(d) &\geq 2^d d/1.5 + T(1) + \dots + T(d - 1) - (2^d  d/ 2) \\
&\geq 2^d d/10 + c \bigg( \sum_{t=1}^{d-1} 2^t t^2 \bigg) \\
&\geq 2^d d/10 + 2c(2^{d-1}(d-1)^2-2^{d}(d-1)+3 \cdot 2^{d-1}-3)\\
&\geq c \cdot 2^d d^2
%&\geq 2^d d/10 + \frac{1}{100} \bigg( \sum_{t=d - \log d}^{d-1} 2^t t^{1.5} \bigg) \\
%&\geq 2^d d/20 + \frac{1}{100} \bigg( \sum_{t=d - \log d}^{d-1} 2^t (t^{1.5} + 5d) \bigg) \qquad \mbox{splitting $2^d d/10$ into two $2^d d/20$ and move one to }\Sigma \\
%&\geq 2^d d/20 + \frac{1}{100} \bigg( \sum_{t=d - \log d}^{d-1} 2^t (d^{1.5}) \bigg) \qquad (\forall t \geq d - \log d, \mbox{we have }  t^{1.5} + 5d \geq d^{1.5} \mbox{ since } \frac{dt^{1.5}}{dt} \leq 1.5\sqrt{d}) \\
%&= 2^{d - \log d} d^2 / 20 + \frac{1}{100} \bigg( \sum_{t=d - \log d}^{d-1} 2^t (d^{1.5}) \bigg) \\
%&\geq 2^d d^{1.5} / 100. 
\end{align*}
for $c$ small enough since it makes the lower order terms positive. Since the optimal solution has size at most $nd/2=2^dd/2$ and $d=\log_2(n)$, this concludes the proof.
\end{proof}

\section{Constant-factor approximation for \umvd}\label{sec:ultrametricconstant}

In this section, we provide an algorithm for the ultrametric setting and prove
Theorem~\ref{thm:umvd_noweight}. We first recall some basic properties regarding
ultrametrics. We refer to~\cite{carlsson2010characterization, Cohen-AddadKMM19} for more detailed reviews of ultrametric properties.

\paragraph{Preliminaries on ultrametrics.}
An ultrametric is a metric $(X,\dist)$, where $X$ is a set of points 
and $\dist: X \times X \mapsto \R_+$ is such that 
$\forall i,j,k \in X$, $\dist(i,j) \le \max(\dist(i,k), \dist(j,k))$.
Equivalently, an ultrametric can be represented by a rooted edge-weighted tree whose sets of leaves is $X$ and where the leaf-to-root distance is the same for all leaves, i.e.: all the elements in $X$. Then, the distance between any pair of elements is given
by the distance in the edge-weighted tree.

The tree representation $T$  of a given ultrametric is pretty useful for
algorithmic purposes. For each distance $d$, one can consider the 
\emph{clustering induced by $d$}, namely the connected components of the tree obtained from $T$ by removing the nodes at distance larger than $d/2$ from the leaves. In this case, all the leaves that are in the
same connected component are at pairwise distance at most $d$ and no
pair of leaves at distance smaller than $d$ are in different connected
components.

\paragraph{Our Algorithm.}
To simplify the exposition, we take a graph perspective on the problem: 
Given an  instance of \umvd $x \in \R_{\gz}^{\nct}$, we define a weighted graph whose vertex set is $[n]$ and the edge distances are given by the distances in $x$.
We will henceforth call the elements of $[n]$ vertices and refer to an instance of \umvd as a graph.

Throughout this section, we will use the \emph{correlation clustering} problem, 
as a subproblem to solve to obtain a constant factor approximation for the ultrametric
setting. Our algorithm defines complete graphs where each
edge is labeled with a + or a - sign. In this context, the correlation clustering problem asks
for a partition of the vertices minimizing the number of - edges that are fully contained
within a cluster plus the number of + edges across clusters. 
Our algorithm is recursive and goes as follows. 

Given an  instance of \umvd $x \in \R_{\gz}^{\nct}$, let $w_1 < \ldots < w_L$
be the distinct distance values appearing in $x$. The algorithm first creates the following
correlation clustering instance: consider the weighted graph induced by the instance $x$ as
explained above. Then, each edge of weight $w_L$ is replaced with a - edge and all the other
edges, namely of weight less than $w_L$ are + edges.
%\note{EL: (2) Actually, how could guess $w_1 < \dots < w_L$ when we are given just $\binom{n}{2}$ numbers? V: We don't have to guess them, we can just use the ones from the input. If you have an optimal solution  that uses some different ones, I can just take any non-input value $v$ and snap it to the closest $w_i < v$, this cannot increase the objective and would lead to a feasible solution satisfying this extra constraint, perhaps we should add a lemma?\eui{Oh, so $L$ here is just the number of distinct numbers among the given $\binom{n}{2}$ numbers? I guess it make sense..} }
The algorithm then computes an approximate solution to the above correlation clustering instance using the algorithm 
presented in Section~\ref{sec:correlclust}, called \emph{agreement correlation clustering},
and whose properties are captured by Theorem~\ref{thm:CC} below.
The agreement correlation clustering algorithm produces a partition of the vertices. Then,
our algorithm fixes the distances between any pair  in different clusters to be $w_L$,
and creates a subinstance of the \umvd problem for each 
cluster by setting the distances between vertices
in the same clusters to be the minimum between their original distance and $w_{L-1}$.
The algorithm then makes a recursive call within each cluster.
We give a full description of the algorithm on Algorithm~\ref{algo:algo}.
\newcommand{\calA}{\mathcal{A}}
\begin{algorithm}[h]
    \caption{Procedure $\calA$: An $O(1)$-approximation for the ultrametric violation distance}
    \label{algo:algo}
    \textbf{Input:} An $n \times n$ dissimilarity matrix $x$\;
    \textbf{Output:} $\dist$: Distances between all the elements of $x$,
    such that $\dist$ is an ultrametric\;
    $w_{\max} \gets$ maximum distance entry in $x$\;
    $w_{\widecheck{\max}} \gets$ maximum distance entry in $x$ that is 
    smaller than $w_{\max}$\;
    $CC(w_{\max}) \gets$ Correlation clustering instance over the 
    elements of $x$ where each  pair $u,v$ such that the weight in $x$ is less than $w_{\max}$ is a + edge, each pair $u,v$ such that
    the weight is $w_{\max}$ is a - edge\;
    $\{C_1,\ldots,C_k\} \gets$ Solution to $CC(w_{\max})$ obtained by 
    \emph{agreement correlation clustering} (see Theorem~\ref{thm:CC})
    on $CC(w_{\max})$\;
    For any pair of $u,v$ such that $u \in C_i$, $v \in C_j$, $i \neq j$, $\dist(u,v) \gets w_{\max}$\;
    For any pair of $u,v$ such that $u,v \in C_i$, let $x(u,v) \leftarrow \min(x(u,v), w_{\widecheck{\max}})$\;
    \ForEach{$i = 1,\ldots,k$}{
        \If{$|C_i| > 1$}{
        Compute the distances between the elements of $C_i$ by making a 
        call to $\calA$ on $x$ restricted to the elements of $C_i$\;
        } \Else{
        $\dist(u,u) \gets 0$ where $u$ is the unique element of $C_i$\;
        }
    }
\end{algorithm}

The following fact follows from the definition of the algorithm and the observation
that given an \umvd instance with maximum distance $w$, the algorithm returns an ultrametric
with maximum distance $w$.
\begin{lemma}
\label{fact:ultrametric}
The above algorithm produces an ultrametric in polynomial time.
\end{lemma}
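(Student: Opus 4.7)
\medskip

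\noindent\textbf{Proof proposal for Lemma~\ref{fact:ultrametric}.}

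The plan is to prove two claims simultaneously by induction on $|[n]|$ (equivalently, on the depth of the recursion): (a) the algorithm terminates in polynomial time, and (b) its output is an ultrametric whose maximum distance coincides with the maximum distance appearing in its input. Claim (b) is the actual content alluded to in the paragraph preceding the lemma; once it is in hand, the ultrametric inequality follows from the hierarchical structure imposed by the recursive calls.

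\emph{Polynomial running time.} I would first argue that each recursive call strictly decreases the number of distinct distance values present in the instance. Indeed, before recursing inside a cluster $C_i$, the algorithm caps all internal distances at $w_{\widecheck{\max}}$, so every edge incident to $C_i$ with original distance $w_{\max}$ is discarded from $C_i$'s instance, and the input distances of the recursive call lie in $\{w_1,\dots,w_{L-1}\}$. Hence the recursion depth is at most $L\le \binom{n}{2}$. At every level we do a single invocation of agreement correlation clustering (Theorem~\ref{thm:CC}), which runs in polynomial time, together with $O(n^2)$ bookkeeping work to update distances. Summing over the polynomial recursion tree yields a polynomial running time.

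\emph{Ultrametric property and maximum distance.} For the inductive step I would take three distinct vertices $u,v,w$ and analyze three cases according to the top-level partition $\{C_1,\dots,C_k\}$ produced by agreement correlation clustering. If all three lie in the same cluster $C_i$, the inductive hypothesis applied to the recursive call on $C_i$ gives that $\dist$ restricted to $C_i$ is an ultrametric, and therefore $\dist(u,v)\le \max(\dist(u,w),\dist(v,w))$ holds immediately. If all three lie in different clusters, then by construction $\dist(u,v)=\dist(u,w)=\dist(v,w)=w_{\max}$, so the inequality is trivial. The only remaining case is two vertices in the same cluster (say $u,v\in C_i$) and one in another (say $w\in C_j$, $j\neq i$). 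Here $\dist(u,w)=\dist(v,w)=w_{\max}$, so the inequality reduces to $\dist(u,v)\le w_{\max}$. But the recursive call on $C_i$ is fed an instance whose entries are at most $w_{\widecheck{\max}}<w_{\max}$, so by the inductive claim (b) applied to the recursive call, the resulting distances inside $C_i$ are all bounded by $w_{\widecheck{\max}}<w_{\max}$. This closes the ultrametric inequality, and also proves claim (b) at the current level, since the only distances that attain the value $w_{\max}$ in the output are the cross-cluster ones.

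\emph{Main obstacle.} There is essentially none beyond bookkeeping; the statement is by design a direct consequence of the recursive structure. The only subtle point is to make sure that the recursive invariant on the maximum distance is carried through the induction, which is why I would state (a) and (b) jointly. The fact that the top-level correlation clustering step may produce any partition is irrelevant here: the lemma is purely about feasibility (producing an ultrametric), not about the approximation quality, which will be treated separately using Theorem~\ref{thm:CC}.
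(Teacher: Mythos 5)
Your proof is correct and follows essentially the same approach as the paper's: both arguments rely on the key observation that the maximum distance strictly decreases across recursive calls, and both handle the ultrametric inequality by the same three-way case analysis on how $i,j,k$ are split by the top-level partition (all together, all separate, two-and-one). The only cosmetic difference is that you package the monotonicity fact as an explicit joint induction hypothesis, whereas the paper states it as a standalone observation and then picks the first recursion level at which the triple is separated.
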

\begin{proof}
    The running time of the algorithm follows immediately from its 
    definition and Theorem~\ref{thm:CC}.
    
    We now argue that for any $i,j,k$, we have that the so-called
    "triple condition": $\dist(i,j) \le \max(\dist(i,k), \dist(k,j))$.
    Observe first that the distances defined by the algorithm are
    monotonically decreasing over the recursive calls: Namely,
    at each execution of $\calA$ where the maximum distance
    entry in $x$ is $w_{\max}$, we have that the next recursive 
    calls are such that the maximum distance is strictly smaller  than
    $w_{\max}$.
    Thus, consider the first recursive call after which $i,j,k$ are not in the same cluster produced by the agreement correlation clustering algorithm and let $w_{\max}$ be the maximum distance in $x$ at this
    recursive call. If $i,j,k$ are all in different clusters, then the distance between them is set to be $w_{\max}$ and in which case
    we have $\dist(i,j) = \dist(j,k) = \dist(i,j)$ and they satisfy
    the triple condition.
    Next, assume without loss of generality. that $i,j$ are in the same cluster and $k$ is in a 
    different cluster. In this case, $\dist(i,k) = \dist(j,k) = w_{\max}$. Moreover, as argued above, we have that 
    $i,j$ are for the first time in different clusters at following 
    recursive calls and so $\dist(i,j) < w_{\max}$. We thus have that
    $i,j,k$ satisfy the triple condition and $\dist$ is indeed
    an ultrametric.
\end{proof}

Let $\eps > 0$ be a constant satisfying
\begin{itemize}
%% \item $6\eps < 1/2$, and
%% \item $\frac{(1-6\eps)(1-\eps)}{1+\eps} > 1-14\eps$, and
\item $\frac{1/3 - 14\eps}{1+14\eps} > \eps/8$, and
\item $\frac{1}{1+\frac{1/3+14\eps}{2/3-14\eps}} > \eps/8$.
\end{itemize}
We can pick $\eps < 1/50$ to satisfy the above constraints.

We will use the following crucial notion. Given a correlation clustering instance, 
we say that a set of vertices $C$ is \emph{important} if for any vertex $v\in C$, 
$v$ has at most an $\eps/8$ fraction of its + neighbors outside $C$ and
is + connected to at least a $(1-\eps/8)$ fraction of the vertices in $C$.
These groups of vertices are in essence dense + regions.
Moreover, for a given set of vertices $C$, we say that $C$ is \emph{everywhere dense}
if for any $v \in C$, $v$ has a + edge to at least $2|C|/3$ vertices of $C$.
% \el{EL: Is singleton cluster important/everywhere dense? V: a singleton is everywhere dense,
% but it is not important since it can have many + neighbors outside.}
We say that a singleton is everywhere dense.

We will make use of the following theorem proved in Section~\ref{sec:correlclust}. The structure that it guarantees is illustrated in Figure~\ref{F:clusters}.
\begin{theorem}
\label{thm:CC}
  Let $\calS = \{S_1,\ldots,S_k\}$ be the set of clusters output by the agreement correlation clustering algorithm.
    Then, for any important group of vertices $C$, there is a cluster $S_i$ such that
    $C \subseteq S_i$, and $S_i$ does not intersect any other important groups of vertices disjoint from $C$.
    Moreover, any cluster $S_i \in \calS$ is everywhere dense.
\end{theorem}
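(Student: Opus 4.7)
The plan is to design the agreement correlation clustering algorithm as an iterated pivot procedure with a neighborhood-agreement admission rule. At each round I would pick a pivot $v$ from the residual graph (preferring a vertex of large $+$-degree), form a tentative cluster of $v$ together with every $u \in N^+(v)$ whose $+$-neighborhood agrees with $N^+(v)$ up to an $O(\eps)$ symmetric difference normalized by the tentative cluster size, and then iteratively prune any vertex whose intra-cluster $+$-disagreement exceeds a slightly larger budget. The surviving set is output as one cluster $S_i$, removed from the graph, and the procedure recurses. The two $\eps$-constraints in the hypothesis are exactly what the analysis will exploit; the $14\eps$ slack is there to absorb several accumulated error sources (importance of the pivot, importance of a candidate vertex, and the comparison between tentative and final cluster sizes).

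For the containment part of Theorem~\ref{thm:CC}, I would fix an important $C$ and consider the first round in which some $v \in C$ enters an output cluster $S_i$. If the pivot of that round lies in $C$, then by importance it is $+$-adjacent to all but an $\eps/8$ fraction of $C$, while any other $u \in C$ has its $+$-neighborhood overlapping $C$ on a $(1-\eps/8)$ fraction and straying outside $C$ on at most an $\eps/8$ fraction of its $+$-degree; a short calculation then shows $|N^+(u)\triangle N^+(v)|$ fits the admission budget, so $u$ survives the pruning. Conversely, any $w$ outside $C$ has at most $(\eps/8)|C|$ $+$-edges into $C$ and hence disagrees with the pivot on $(1-O(\eps))|C|$ vertices, far above the admission threshold. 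The case where the pivot is not in $C$ reduces to this one by a symmetric analysis centered at a surviving $u \in C$: either every element of $C$ survives, or no such $u$ exists at all, contradicting the choice of round. For two disjoint important sets $C, C'$, the $+$-edges between them are bounded by $\eps/8$ on each side, so any vertex of $C$ disagrees with any vertex of $C'$ on at least a $(1 - O(\eps))$ fraction of $C \cup C'$, preventing a common cluster; the first $\eps$-constraint $\tfrac{1/3-14\eps}{1+14\eps}>\eps/8$ is precisely the quantitative form of this separation once all slacks are summed.

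The everywhere-dense conclusion for each $S_i$ follows directly from the final pruning: every surviving $u \in S_i$ is $+$-adjacent to all but an $O(\eps)$ fraction of $S_i$, which is comfortably above $2|S_i|/3$ for the chosen $\eps$, and singletons are everywhere dense by definition. The second $\eps$-constraint $\tfrac{1}{1+(1/3+14\eps)/(2/3-14\eps)}>\eps/8$ is the precise algebraic inequality one needs after bounding how much the tentative cluster and the final $S_i$ can differ in size.

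The main technical obstacle I expect is the inductive aspect of the argument. When $S_1, \ldots, S_{i-1}$ have already been produced, an important set $C$ not yet captured must remain important in the residual graph, so I need an invariant stating that every previously peeled cluster either fully contains some important set disjoint from $C$ or is disjoint from $C$. Proving this invariant forces the structural arguments above to be carried out simultaneously for all important sets, in an induction over iterations. A secondary delicate point is constant tracking: the $14\eps$ budget aggregates at least three sources of slack, and these must be combined tightly enough that the two displayed $\eps$-constraints come out sharp rather than merely up to unspecified constants, which is what dictates the exact form of the admission and pruning thresholds in the algorithm.
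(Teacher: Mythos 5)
Your proposal is close in spirit to the paper's approach --- both build a pivot-and-agreement scheme with pruning --- but it has a genuine gap that the paper's algorithm is specifically designed to avoid. You form the tentative cluster as $\{v\} \cup \{u \in N^+(v) : N^+(u) \text{ agrees with } N^+(v)\}$ and then only prune. But agreement of neighborhoods does \emph{not} imply adjacency: $u$ and $v$ may have nearly identical $+$-neighborhoods without being $+$-neighbors of each other. Since an important group $C$ only guarantees that the pivot $v\in C$ is $+$-adjacent to a $(1-\eps/8)$ fraction of $C$, up to $\eps|C|/8$ vertices of $C$ lie outside $N^+(v)$ and are never admitted under your rule. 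A prune-only procedure cannot recover them, so the conclusion ``$C\subseteq S_i$'' fails. The paper's algorithm resolves exactly this by inserting an \emph{aggregation} step after pruning (its Step~6): any vertex $u$ not yet in $S(v)$ with more than a $(1-4\eps)$ fraction of its $+$-neighbors inside $S(v)$, and $+$-adjacent to at least a $(1-4\eps)$ fraction of $S(v)$, is added. The paper's Claim~3.8 then shows this step absorbs all of $C$ and admits no vertex from a disjoint important group, which is where the bulk of the technical work lives; your ``short calculation'' for why every $u\in C$ survives pruning does not touch this.

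A secondary weakness is the inductive issue you flag at the end. You correctly identify that after peeling $S_1,\ldots,S_{i-1}$, one must argue $C$ remains important in the residual graph, but you leave this as an open concern. The paper sidesteps the need for a residual-importance invariant by proving structural facts that hold for \emph{every} output cluster regardless of ordering: Claim~3.5 gives everywhere-denseness (with a $1-14\eps$ margin, which is where $\delta=14$ and the two displayed $\eps$-constraints come from), Lemma~3.6 shows no output cluster can meet two disjoint important groups using only that density bound plus the $\eps/8$-importance condition, and Lemma~3.7 handles containment by splitting on whether the pivot that creates $S(v)$ lies inside or outside $C$ and invoking Claim~3.8 in both branches. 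Also note the paper picks pivots \emph{arbitrarily}, not by large $+$-degree --- the agreement test and the size-sanity checks (Steps~2, 5, 7) do the stabilizing, not the pivot order. So your everywhere-density argument and your disjointness argument track the paper reasonably well, but the containment half of the theorem is incomplete without the aggregation mechanism and the accompanying claim.
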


\begin{figure}[t]
    \centering
    \def\svgwidth{10cm}
    %% Creator: Inkscape inkscape 0.92.3, www.inkscape.org
%% PDF/EPS/PS + LaTeX output extension by Johan Engelen, 2010
%% Accompanies image file '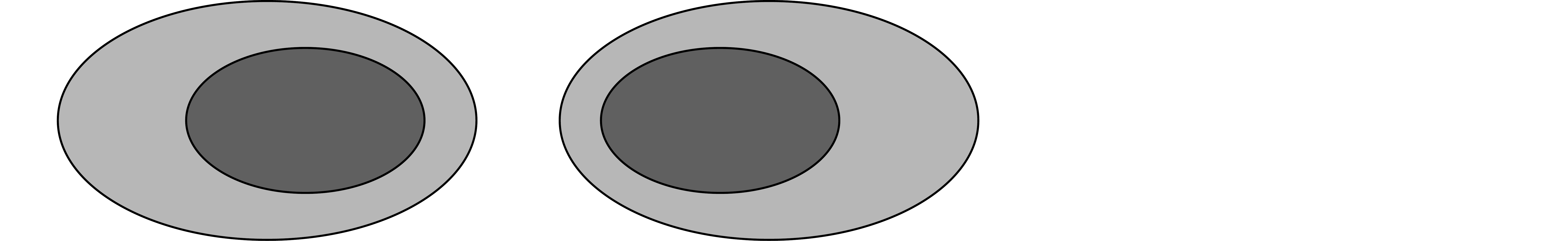' (pdf, eps, ps)
%%
%% To include the image in your LaTeX document, write
%%   \input{<filename>.pdf_tex}
%%  instead of
%%   \includegraphics{<filename>.pdf}
%% To scale the image, write
%%   \def\svgwidth{<desired width>}
%%   \input{<filename>.pdf_tex}
%%  instead of
%%   \includegraphics[width=<desired width>]{<filename>.pdf}
%%
%% Images with a different path to the parent latex file can
%% be accessed with the `import' package (which may need to be
%% installed) using
%%   \usepackage{import}
%% in the preamble, and then including the image with
%%   \import{<path to file>}{<filename>.pdf_tex}
%% Alternatively, one can specify
%%   \graphicspath{{<path to file>/}}
%% 
%% For more information, please see info/svg-inkscape on CTAN:
%%   http://tug.ctan.org/tex-archive/info/svg-inkscape
%%
\begingroup%
  \makeatletter%
  \providecommand\color[2][]{%
    \errmessage{(Inkscape) Color is used for the text in Inkscape, but the package 'color.sty' is not loaded}%
    \renewcommand\color[2][]{}%
  }%
  \providecommand\transparent[1]{%
    \errmessage{(Inkscape) Transparency is used (non-zero) for the text in Inkscape, but the package 'transparent.sty' is not loaded}%
    \renewcommand\transparent[1]{}%
  }%
  \providecommand\rotatebox[2]{#2}%
  \newcommand*\fsize{\dimexpr\f@size pt\relax}%
  \newcommand*\lineheight[1]{\fontsize{\fsize}{#1\fsize}\selectfont}%
  \ifx\svgwidth\undefined%
    \setlength{\unitlength}{2199.01945778bp}%
    \ifx\svgscale\undefined%
      \relax%
    \else%
      \setlength{\unitlength}{\unitlength * \real{\svgscale}}%
    \fi%
  \else%
    \setlength{\unitlength}{\svgwidth}%
  \fi%
  \global\let\svgwidth\undefined%
  \global\let\svgscale\undefined%
  \makeatother%
  \begin{picture}(1,0.15379208)%
    \lineheight{1}%
    \setlength\tabcolsep{0pt}%
    \put(0,0){\includegraphics[width=\unitlength,page=1]{clusters.pdf}}%
    \put(-0.00113243,0.11392551){\color[rgb]{0,0,0}\makebox(0,0)[lt]{\lineheight{1.25}\smash{\begin{tabular}[t]{l}$S_1$\end{tabular}}}}%
    \put(0.62885614,0.11392551){\color[rgb]{0,0,0}\makebox(0,0)[lt]{\lineheight{1.25}\smash{\begin{tabular}[t]{l}$S_2$\end{tabular}}}}%
    \put(0.06366917,0.08396086){\color[rgb]{0,0,0}\makebox(0,0)[lt]{\lineheight{1.25}\smash{\begin{tabular}[t]{l}$C_1$\end{tabular}}}}%
    \put(0.54261639,0.08396086){\color[rgb]{0,0,0}\makebox(0,0)[lt]{\lineheight{1.25}\smash{\begin{tabular}[t]{l}$C_2$\end{tabular}}}}%
    \put(0,0){\includegraphics[width=\unitlength,page=2]{clusters.pdf}}%
    \put(0.95959891,0.11392551){\color[rgb]{0,0,0}\makebox(0,0)[lt]{\lineheight{1.25}\smash{\begin{tabular}[t]{l}$S_3$\end{tabular}}}}%
  \end{picture}%
\endgroup%

    \caption{The structure output by Algorithm~\ref{algo:algo}: the important groups of vertices (dark shaded regions) are each contained in a single cluster (lightly shaded regions) output by the algorithm. Some clusters might not contain any important group of vertices, but each cluster contains at most one of them.}
    \label{F:clusters}
\end{figure}

Assuming the above theorem we can turn
to the proof of the theorem for the ultrametric case.
\begin{proof}[Proof of Theorem~\ref{thm:umvd_noweight}] %[\vca{Proof of main thm for ultrametric}]
  Given a solution to the ultrametric problem, we define the \emph{clustering at level} $i$, to be the
  partition of the vertices obtained by putting in the same part, the vertices at pairwise distance
  less than $w_i$.
  A solution to the ultrametric problem hence defines nested clusterings. Thus, consider the following set of
  nested clusterings induced by an optimum solution to the ultrametric problem.
  The clustering $\calC_{L+1}$ at level $L+1$ simply consists of one cluster containing the whole vertex set. 
  Then, for each $w_i$, the optimal solution defines a clustering $\calC_i$ that places the vertices
  that are at distance less than $w_{i}$ in the solution in the same clusters.

  Furthermore, for each level $i$, we will consider the correlation clustering instance $M_i$ where
  for each pair $(u,v)$, there is a - edge if the input distance between $(u,v)$ is $w_i$ or higher and 
  a + edge otherwise. 

  We then apply the following top-down transformation to an optimum solution $\opt$ to the \umvd problem so as to
  obtain a solution $\opt'$ with more structure and whose cost is within a constant factor of the cost of $\opt$.
  Consider the nested clustering induced by $\opt$ as defined above and a cluster $C$ at some 
  level $i$ such 
  that either the number of - edges between pairs of vertices of $C$ in $M_i$ is at least 
  $\eps^2|C|^2/800$ or
  the cut $C, V-C$ has more than $\eps^2|C|^2/800$ + edges.
  Then, we simply set the distance between any pair of vertices of $C$ to be $w_i$
  (which implies that the clustering induced by $\opt'$ on $C$ at each level $j \le i$ consists of
  singleton elements of $C$). We will refer to this operation as \emph{making singleton clusters}
  for levels $i$ and below.
  % \ar{Is $C_\ell$ and $C$ the same here?}
  % \el{Is this step necessary? If so, where?}
  
  Next, we proceed top-down on the non-singleton clusters of the nested clustering resulting
  from the above transformation. 
  For each cluster $C$ at level $i$, as long as there is a vertex $v$ that has
  either more than an $\eps/8$ fraction of its + neighbors outside $C$, or less than $(1-\eps/8)|C|$
  + neighbors within $C$, set the distance from $v$ to the elements of $C$ to be $w_i$.
  This implies  that in the resulting solution, $v$ is in a singleton cluster for any level $j \le i$.
  Again, we say in this case  that $v$ is \emph{made a singleton cluster} for levels $i$ and below.
  
This completes the description of $\opt'$. We have the following lemma.
\begin{lemma}
    \label{lem:costofopt}
    The cost of $\opt'$ is at most $800/\eps^2$ times the cost of $\opt$.
    Morever, at each level, the clusters induced by $\opt'$ 
    are either singletons or important groups of vertices.
\end{lemma}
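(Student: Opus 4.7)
The second claim follows essentially by construction. For any non-singleton cluster $C$ at level $i$ that survives the first modification, the triggering condition must have failed, so $C$ has fewer than $\eps^2|C|^2/800$ pairs that are $-$ within $C$ in $M_i$ and fewer than $\eps^2|C|^2/800$ $+$ edges across the cut $(C, V\setminus C)$. The second modification then iteratively removes every vertex violating the importance condition with respect to the current cluster. When no more extractions are possible, each vertex $v$ in the remaining non-singleton cluster has at most an $\eps/8$ fraction of its $+$ neighbors outside the cluster and at least $(1-\eps/8)$ fraction of the cluster's vertices as $+$ neighbors, which is precisely the definition of an important group in $M_i$.

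For the cost bound, the strategy is to charge the extra cost of $\opt'$ compared to $\opt$ against the edges already paid by $\opt$ in the ultrametric sense, handling each modification separately and carefully avoiding overcharging. The crucial observation is that both $-$ within $C$ and $+$ across $(C, V\setminus C)$ edges at level $i$ are $\opt$-paying for the ultrametric problem: a $-$ within edge has input $\geq w_i$ but $\opt$-distance $<w_i$, and a $+$ across edge has input $<w_i$ but $\opt$-distance $\geq w_i$.

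\emph{First modification.} When cluster $C$ at level $i$ is turned into singletons, only pairs inside $C$ change distance, so the extra cost is at most $\binom{|C|}{2}\le |C|^2/2$. The triggering condition guarantees at least $\eps^2|C|^2/800$ $\opt$-paying edges incident to $C$ at level $i$. Processing top-down, once a cluster containing endpoint $u$ is turned into singletons, $u$ is singleton at all lower levels and does not appear in any subsequent non-singleton cluster. Hence every $\opt$-paying edge is charged to at most two clusters across the whole process (at most once per endpoint), yielding $\sum_{\text{triggered}} \eps^2|C|^2/800 \le 2\opt$. Putting this together gives extra cost at most $2 \cdot (400/\eps^2) \cdot \opt = 800/\eps^2 \cdot \opt$.

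\emph{Second modification.} When a vertex $v$ is extracted from the current cluster $C$ at level $i$, the extra cost is at most $|C|$. A straightforward case analysis of the two triggering conditions shows that $v$ has $\Omega(\eps|C|)$ $\opt$-paying edges in $M_i$: in the ``$+$ neighbors outside'' case, combined with the alternative failing, $v$'s total $+$ degree is at least $|C|$ and thus $v$ has at least $\eps|C|/8$ $+$ neighbors outside $C$ (each a $+$ across edge, $\opt$-paying); in the ``$+$ neighbors inside'' case, $v$ has more than $\eps|C|/8$ $-$ neighbors inside $C$ (each a $-$ within edge, $\opt$-paying). Each $\opt$-paying edge can be charged to at most two extractions across all levels (once per endpoint), since a singleton endpoint is no longer processed. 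This gives extra cost from the second modification at most $O(1/\eps) \cdot \opt$, which is absorbed in the $800/\eps^2$ factor.

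The main obstacle is the accounting: one must ensure no $\opt$-paying edge is overcharged as the modifications cascade. The key enablers are (i) top-down processing and (ii) the fact that a vertex, once made a singleton, leaves all subsequent cluster considerations, which caps the multiplicity of each charge at a constant.
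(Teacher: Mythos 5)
Your overall charging strategy matches the paper's, and your handling of the first modification is essentially correct and the same as theirs: charge the $\binom{|C|}{2}$ extra cost against the $\geq\eps^2|C|^2/800$ OPT-paying edges adjacent to $C$, with multiplicity at most two by top-down processing. Your identification of which edges are OPT-paying (``$-$'' inside $C$, ``$+$'' crossing $C$ at level $i$) is also correct.

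However, the second-modification argument has a genuine gap. You write that when $v$ is extracted, a ``straightforward case analysis'' shows $v$ has $\Omega(\eps|C|)$ OPT-paying adjacent edges, and in particular in the ``$+$ neighbors outside'' case you count all of $v$'s $+$ neighbors outside the \emph{current} (shrunken) cluster as OPT-paying. But a $+$ neighbor $u$ outside the current cluster may have started \emph{inside} the original cluster $C$ and only left it because $u$ itself was extracted earlier in the same iterative pass. For such a $u$, the pair $(u,v)$ has input distance $<w_i$ and $\opt$-distance $<w_i$ as well (both endpoints were in the same $\opt$ cluster at level $i$), so the edge is \emph{not} paid by $\opt$. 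Charging it is therefore unsound. Because the extraction condition is relative to the shrinking cluster, the process could in principle cascade: each extraction makes it easier for the next vertex to fail the importance test, and a naive count attributes the whole cascade to OPT-paying edges that don't exist. The paper closes this hole by first proving a cap of $|C|\eps/16$ on the total number of extractions from a surviving cluster (via a contradiction: if more were removed, the cluster would have satisfied the first modification's trigger and been handled there), and only then showing each extracted vertex has $\Omega(\eps|C|)$ OPT-paying edges \emph{with respect to the original $C$}. Without this intermediate bound, your charging argument for the second modification does not go through as stated.

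A secondary, minor slip: in the ``$+$ neighbors outside'' case you assert $v$'s $+$-degree is at least $|C|$; given the alternative condition failing it is at least $(1-\eps/8)|C|$, which still yields $\Omega(\eps|C|)$ ``$+$'' neighbors outside, so this does not affect the conclusion, but the bound as written is off.
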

\begin{proof}
  The fact that at each level, the clusters induced by $\opt'$ are
  either singletons or important groups of vertices follows immediately from
  the algorithm description.
  
  We then turn to the analysis of the cost of $\opt'$.
  The above procedure proceeds top-down through the clusters of $\opt$. We describe a charging scheme where the cost of $\opt'$ is charged to the edges {\em paid} by $\opt$ (i.e., the edges whose distances are modified in \opt). The charging occurs every time a cluster of $\opt$ is modified in $\opt'$, at which time
  each endpoint of each
  edge paid by $\opt$ will place a charge of at most
  $400/\eps^2$ on the edge. This will guarantee the $800/\eps^2$ bound in the lemma.

  Consider a cluster $C \in \opt$ at level $i$ that is split into singleton clusters
  by the procedure. If it is done at level $i$, this means that at any level 
  $i'>i$, the cluster was still intact (i.e., not split by the procedure)
  and so the edges adjacent to $C$ have not been charged by vertices of $C$ yet (note that
  they may have been charged by the endpoint not in $C$).
  
  Consider a level $i$ and the associated correlation clustering instance $M_i$.
  Assume first that $C$ has more than $\eps^2|C|^2/800$ internal - edges or
  the cut $C, V-C$ has more than $\eps^2|C|^2/800$ + edges.
  Then, we have that the cost of $\opt$ on the edges adjacent to $C$ is at least $\eps^2|C|^2/800$.
  Thus, setting the distances between vertices in $C$ to be $w_i$, or in other words
  making $C$ singleton clusters, only increases the overall cost induced by the edges
  adjacent to vertices in $C$ in $\opt$ by at most a $400/\eps^2$ factor (since there are $\binom{|C|}{2}$ edges). This can thus be paid for by placing
  a charge of $400/\eps^2$ on the edges adjacent
  to vertices in $C$ that are paid by $\opt$.
  
  Next, assume that $C$ does not satisfy the above two conditions
  but has at least one vertex that has more than an $\eps/8$ fraction of its + neighbors outside $C$ or
  that is + connected to less than a $(1-\eps/8)$ fraction of the vertices in $C$. Then, as long as there is such a vertex $v$ in $C$, the procedure removes $v$ from $C$ and repeats. Assume first 
  that the total number of vertices that can thus be removed is at most $|C|\eps/16$. 
  Then, for each vertex $u$ that is removed from $C$ we have that either it has at least $(\eps/8)(1-\eps/16)|C|$ - neighbors 
  in $C$ initially, or at least $\eps/8(|N(u)|-|C|\eps/16)\geq  (\eps/8)|C|(1-\eps/16 - \eps/16)$ + neighbors outside $C$.
  We can thus place a charge of $50/\eps < 400/\eps^2$ on each edge adjacent to $u$ that is violated by $\opt$ and this accounts for the cost of making $u$ a singleton.
  
  We now argue that the total number of vertices that have been removed is at most $|C|\eps/16$.
  Indeed, assume 
  toward contradiction that there are more than $|C|\eps/16$ vertices removed from $C$ by the above process. That implies that there were at least $\eps|C|/32$ vertices initially in $C$
  with a - edge to at least  $(\eps/8 - \eps/16)|C|$ 
  vertices of $C$ or with at least a $(1-\eps/8)|C| \eps/8 - \eps|C|/16$
  + neighbors outside of $C$.
  In which case, that means that $C$ has at least $\eps^2 |C|^2 / 800$ internal - edges or that the cut $C, V-C$ has more than
  $\eps^2|C|^2 /800$ + edges, a contradiction.

  To conclude the proof observe that when a vertex becomes singleton at some level $t$ it does not
  charge anything on its adjacent edges
  at any level $t' < t$ and so the total charge placed
  on each edge paid by $\opt$ is indeed at most $800/\eps^2$ and the lemma follows.
\end{proof}

For a given level $i$, we let the \emph{important clusters} of 
$\opt'$ be the non-singleton clusters of $\opt'$ at this level.
Note that the important clusters are important groups of vertices
that are not split in $\opt'$, i.e., in the same cluster of $\opt'$. 

Furthermore, given any solution $S$, we say that a pair of vertices $(u,v)$ is
\emph{separated at level $i$} if the distance between $u,v$ in the solution is $w_i$.
This means that the nested clusterings induced by solution $S$ are such that $u,v$
are in the same cluster for any clustering of level $j > i$ and in different clusters
in the clusterings at level $j \le i$.

%\el{EL: Important cluster is defined twice (being the second exactly implies the first). V: sorry for the confusion, I meant important groups vs important clusters. Important groups are general, while important clusters are only the important groups that are not split by  $\opt'$. }

We now show the following lemma.
\begin{lemma}
\label{lem:approxopt}
    Consider a solution $S$ for the \umvd problem such that  
    for any level $i$, the clustering $\calC_i$ induced by the solution at level $i$, 
    satisfies the following properties,
    \begin{enumerate}
    \item Each cluster $C \in \calC_i$ is everywhere dense;
    \item For each cluster $C \in \calC_i$ there is at most one important cluster $C'$ 
    of $\opt'$ at level $i$ that intersects $C$;
    \item Each important cluster $C'$ of $\opt'$ at level $i$ is fully contained 
    in a cluster of $\calC_i$.
    \end{enumerate}
    
    Then the cost of solution $S$ is at most 4 times the cost of solution $\opt'$.
\end{lemma}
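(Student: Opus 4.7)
The plan is to carry out a level-by-level analysis: at each level $i$, bound the correlation clustering cost of $\calC_i^S$ on $M_i$ (namely the number of $+$ edges cut by $\calC_i^S$ plus the number of $-$ edges internal to its clusters) by a constant factor of that of $\calC_i^{\opt'}$, and then aggregate these bounds across levels.

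Fix a level $i$. By property~(3), each important $\opt'$-cluster at level $i$ is fully contained in a single $S$-cluster, so any $+$ edge cut by $\calC_i^S$ is also cut by $\calC_i^{\opt'}$ (otherwise the endpoints would share an important $\opt'$-cluster, hence share an $S$-cluster by (3)), and any $-$ edge contained inside an important $\opt'$-cluster is also contained inside an $S$-cluster. Hence the only excess cost that $S$ incurs at level $i$ comes from $-$ edges internal to some $S$-cluster $C$ that are not internal to the (unique, by property~(2)) important $\opt'$-cluster $C' \subseteq C$. Every such surplus $-$ edge has at least one endpoint $v \in C \setminus C'$, which is an $\opt'$-singleton at level $i$. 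By property~(1), $v$ has at most $|C|/3$ neighbors in $C$ via $-$ edges (so at most $|C|/3$ surplus $-$ edges are incident to $v$) and at least $2|C|/3$ neighbors in $C$ via $+$ edges, each of which yields a $+$ edge cut by $\opt'$ at level $i$ (since $v$ is a singleton) and thus paid by $\opt'$ at this level. A fractional charging of each surplus $-$ edge against the $\opt'$-paid $+$ edges incident to its singleton endpoint(s) produces a per-endpoint charge ratio of at most $(|C|/3)/(2|C|/3) = 1/2$; accounting for both endpoints of every $+$ edge gives the per-level bound that the CC cost of $\calC_i^S$ at level $i$ is at most $4$ times the CC cost of $\calC_i^{\opt'}$ at level $i$.

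To aggregate these per-level bounds into a bound on the $\ell_0$ cost, I decompose $\cost(T) = \sum_i \phi_i(T)$, where $\phi_i(T)$ counts the pairs whose $y_T$-distance equals $w_i$ but whose input distance differs; each such pair produces a CC mistake for $T$ at level $i$ or $i+1$, and so the per-level CC bounds transfer to the total cost through this decomposition. The main obstacle is that a single paid edge can contribute CC mistakes at several consecutive levels (one per level in the gap between its input distance and its assigned distance), so the aggregation must be executed carefully to avoid double-counting. I plan to resolve this by induction on the number $L$ of distinct distance values: at the inductive step, apply the per-level bound at the top level $w_L$ to bound the top-level contribution of $S$ by $4$ times that of $\opt'$, and then invoke the induction hypothesis within each $S$-cluster on the sub-instance with distances at most $w_{L-1}$, using that the restricted sub-instance inherits properties~(1)--(3) from the original.
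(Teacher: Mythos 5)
Your per-level observation is in the right spirit and reuses exactly the structural facts the paper does (every $+$ edge cut by $S$ at level $i$ is already cut by $\opt'$, surplus $-$ edges have a singleton endpoint, and everywhere-density bounds the ratio of $-$ to $+$ edges at that singleton). The problem is the aggregation step, and you have correctly identified it as the crux but not resolved it. A bound of the form ``CC cost of $S$ at level $i$ is $O(1)$ times CC cost of $\opt'$ at level $i$'' does not imply the $\ell_0$ statement: summing over levels gives $\sum_i \mathrm{CC}_i(T) = \sum_{(u,v)} |i_T(u,v) - i_x(u,v)|$, which is an $\ell_1$-type quantity. A single pair paid by $\opt'$ can appear as a CC mistake at $\Omega(L)$ consecutive levels, so $\sum_i \mathrm{CC}_i(\opt')$ can be $\Omega(L)\cdot\cost(\opt')$, and your summed bound is off by a factor of $L$.

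The proposed induction does not close this gap as sketched. If you recurse inside each $S$-cluster $C$ with distances capped at $w_{L-1}$, then a pair $(u,v)$ with $x(u,v)=w_L$ and $y_S(u,v)=w_{L-1}$ is paid by $S$ in the original instance but \emph{not} in the capped sub-instance, so $\cost(S) \neq [\text{top-level}]_S + \sum_C \cost(S|_C)$; moreover, $\cost(\opt')$ decomposes over $\opt'$-clusters, not $S$-clusters, so the two sides of the recursion are not aligned, and you never argue that the restriction of $\opt'$ to $C$ still plays the role of ``$\opt'$'' for the sub-instance. The paper sidesteps all of this by working pair-by-pair rather than level-by-level: since $i_S(u,v)\le i_{\opt'}(u,v)$ always holds (property~(3)), the only pairs paid by $S$ but not $\opt'$ have $i_S < i_{\opt'} = i_x$, and each such pair has a singleton endpoint $u$. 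Crucially, $u$ is then charged at a \emph{single} level $i_u$ (the highest level where $u$ is an $\opt'$-singleton inside a non-singleton $S$-cluster $C_u$), and the entire surplus $|E'_u| = |C_u|-1$ is spread over the $+$ edges from $u$ into $C_u$ at that one level, which are all $\opt'$-paid and number $\ge 2(|C_u|-1)/3$ by everywhere-density. Each $\opt'$-paid edge then receives at most $3/2$ from each endpoint, giving the factor $1+3=4$ with no multi-level accumulation. If you want to keep the level-by-level framing, the fix is precisely to charge each surplus pair at one designated level rather than at every level in its gap; as written, your plan leaves that designation, and its interaction with the recursion, unspecified.
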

\begin{proof}
    We have that at any level, each important cluster of $\opt'$ is fully contained
    in a cluster of $S$. Thus, consider a pair $(u,v)$ and let $i$ be the first 
    level in which $u,v$ are separated in $\opt'$, i.e.: the distance between $u,v$ in
    $\opt'$ is $w_i$. Then $(u,v)$ are not in the same cluster in $\opt'$ at level $i$.
    We have that for any level $i'>i$, $u,v$ 
    are in the same important cluster, and so they are also in the same cluster for any
    level $i' > i$ in solution $S$ as well by the property of the clusters of $S$.
    Now, two events can happen: Either the distance between $u,v$ in solution $S$ is also
    $w_i$, or in other words $u,v$ are separated at level $i$ in solution $S$ and so
    the cost induced by edge $(u,v)$ is exactly the same in $S$ as in $\opt'$.
    Or, $u,v$ are separated at a level $j < i$ in solution $S$, i.e.: their distance in $S$
    is less than $w_i$. We investigate this latter case.

    Since there is no cluster of $S$ intersecting two distinct important clusters of $\opt'$,
    at least one of $u,v$ is a singleton in $\opt'$ at level $i$.
    % Assume w.l.o.g. that $u$ is a singleton cluster at level $i$ and let $C$ be the cluster
    % of $S$ containing both $u$ and $v$ at level $i$. Note that the number of pairs $(u,v')$ 
    % for which the cost in $S$ could be different than the cost in $\opt'$ is thus at most $|C|-1$.
    % Since $C$ is everywhere dense, it means
    % that $u$ has + edges to at least $2/3 |C|$ vertices of $C$ and so the total cost 
    % paid by $\opt'$ for the edges adjacent to $u$ is at least $(2/3)|C|$. Since the cost 
    % in $S$ for these pairs can be at most $|C|-1$, we conclude that the cost of $S$ is at most 
    % 3/2 times the cost of $\opt'$.
    % \el{Wrote the extended version of the above para; with +1 above, I only got 4-approx; feel free to delete/modify/incorporate it to the proof.}    
    Therefore, the only edges that are not violated in $\opt'$ but possibly violated in $S$, namely
    for which the cost is 1 in $S$ but 0 in $\opt'$, are the
    edges $(u, v)$ such that for some level $i$, both $u$ and $v$ are in the same cluster $C$ in $S$
    but at least one of them is a singleton cluster in $\opt'$. Let $E'$ be the set of such edges.

    Our next step is to show that $|E'|$ is at most $3$ times the cost of $\opt'$ by the following scheme charging $|E'|$
    to the cost of $\opt'$.
    %% \begin{itemize}
    For each vertex $u$, let $i_u$ be the highest level where $u$ is a singleton in $\opt'$ but in a non-singleton cluster $C_u$ in $S$. 

    Let $E'_u := \{ (u, v) : v \in C_u \setminus u \}|$ and $E^{\opt'}_u := \{ (u, v) : v \in C_u \setminus u \mbox{ and } w(u, v) < i_u \}$.
    Note that $E^{\opt'}_u \subseteq E'_u$, intuitively $E^{\opt'}_u$ is the set of + edges between $u$ and $C_u \setminus u$ at level $i_u$.
    Furthermore, every edge in $E^{\opt'}_u$ is violated in $\opt'$ because $\opt'$ assigns a distance at least $w_{i_u}$ to them.
    The everywhere denseness of $C_u$ at level $i_u$ implies that $|E^{\opt'}_u| \geq 2(|C_u| - 1)/3$.

    Charge $|E'_u| = (|C_u| - 1)/|E^{\opt'}_u|$ to each edge in $E^{\opt'}_u$; so each edge in $E^{\opt'}_u$ receives a charge of at most $3/2$ from $u$. 
    For the correctness of this charging scheme, note that $E' \subseteq \cup_u E'_u$; if $(u, v) \in E'$, the definition of $E'$ says that
    ``for some level $i$, both $u$ and $v$ are in the same cluster $C$ in $S$ but at least one of them (say $u$) is a singleton cluster in $\opt'$,'' which implies that $i_u \geq i$ and $(u, v) \in E'_u$. 

    Finally, note that each edge $(u, v) \in \cup_u E^{\opt'}_u$ is charged $3/2$ at most once from each endpoint, receiving the total charge of $3$. 
\end{proof}

Equipped with this, we now prove that the algorithm of the previous section produces a good solution
by showing that the solution output is 
such that  
    for any level $i$, the clustering $\calC_i$ induced by the solution at level $i$, 
    satisfies the following properties,
    \begin{enumerate}
    \item Each cluster $C \in \calC_i$ is everywhere dense;
    \item For each cluster $C \in \calC_i$ there is at most one important cluster $C'$ 
    of $\opt'$ at level $i$ that intersects $C$;
    \item Each important cluster $C'$ of $\opt'$ at level $i$ is fully contained 
    in a cluster of $\calC_i$.
    \end{enumerate}

We are now ready to conclude the proof of the theorem.
We proceed by induction on the level to show that any cluster produced by the algorithm is
everywhere dense, each important cluster of $\opt'$ is contained in some cluster produced 
by the algorithm, and each cluster output intersects at most one important cluster of $\opt'$. At the top level, these properties trivially hold since everything is a single cluster both in $\opt'$ and in $S$.

Now, consider an important cluster $C$ of $\opt'$ at level $i$, then by definition of 
$\opt'$ it is a subset
of an important cluster $C'$ of $\opt'$ at level $i+1$. By induction hypothesis, there is 
a cluster $S'$ of the algorithm at level $i+1$ containing $C'$, we thus have
$C \subseteq C' \subseteq S'$.
Since $C$ is an important
group of vertices for the entire correlation clustering instance $M_i$ and $C \subseteq S'$, it is an important
group of vertices for the subinstance of $M_i$ induced by the vertices in $S'$ and so 
by Theorem~\ref{thm:CC}, there is a cluster $S$ of the clustering output by the algorithm at level $i$
containing $C$ and does not intersect any other important cluster $C'$ of $\opt'$ (because such $C'$ also remains important in the subinstance).
Finally, Theorem~\ref{thm:CC} also ensures that each cluster is everywhere dense.
Therefore, we can apply Lemma~\ref{lem:approxopt} to $S$ and invoke Lemma~\ref{lem:costofopt} to
conclude that $S$ is a $3200/\eps^2$-approximation to the ultrametric problem.
\end{proof}

\subsection{Agreement Correlation Clustering Algorithm -- Proof of Theorem~\ref{thm:CC}}
\label{sec:correlclust}
Given a correlation clustering instance, we let $N(u)$ be the set of + neighbors of vertex $u$.
In the remaining we refer to the `+' edges as edges and to the `-' edges as no-edges.

In the following, let $\eps \in (0, 1/50)$ be a small constant -- we stress that the constants have not been optimized
and the current bound obtained is an $\alpha$-approximation
for a large $\alpha = O(1/\epsilon^2)$, it is very likely that the current approach could easily lead
to a 1000-approximation but at the expense of a more tedious proof. 
% \el{We don't need to optimize, but would be nice to collect the constraints on $\eps$ in one place at the end?}
% \ar{The bound seems to be 8 000 000 now.}
We say that vertices $u,v$ \emph{agree} if $|N(u) \Delta N(v)| \le \eps \cdot \min(|N(u)|, |N(v)|)$, following the notion of~\cite{Cohen-AddadLMNP21}.
While the notion is similar to the one in~\cite{Cohen-AddadLMNP21}, the algorithm we use here is different: Our goal is to
obtain a constant factor approximation to correlation clustering with a specific structure, namely the one described by Theorem~\ref{thm:CC}, 
while the motivation in \cite{Cohen-AddadLMNP21} was to obtain an $O(1)$-approximate solution in the massively-parallel computation model. In particular,
the algorithm of \cite{Cohen-AddadLMNP21} does not satisfy the properties of Theorem~\ref{thm:CC}.

We let $A(v)$ denote the set of vertices that agree with $v$.
We consider the following algorithm for correlation clustering.
\begin{enumerate}
\item\label{step:vertex} Pick an arbitrary vertex $v$.
\item If $|A(v) \cap N(v)| \le (1-\eps/2) \min(|N(v)|, |A(v)|)$, then make $v$ a singleton and recurse on the remaining
  vertices.
\item Set $S(v) := A(v) \cap N(v)$.
\item\label{step:pruning} As long as there is a vertex $u$ in $S(v)$ with more than a $2\eps$ fraction of its neighbors outside $S(v)$ or
  that is connected to less than a $1-2\eps$ fraction of $S(v)$, remove $u$ from $S(v)$.
\item\label{step:removal} If $|S(v)| < (1-\eps) |A(v) \cap N(v)|$ then make $v$ a singleton and recurse on the remaining vertices. \eui{(E: Do we need step 4 and 5? If $u \in S(v)$ in step 3, then $N(u)$ is $\eps$-close to $N(v)$ (since $u \in A(v)$), which is $\eps$-close to $S(v)$ (by step 2), so $N(u)$ should be $O(\eps)$-close to $S(v)$?)}
\vca{Perhaps we don't need to indeed. I guess the risk is that if we have a vertex that is not in an important group of vertices, it may be in agreement with another vertex and attract it?}
\item\label{step:aggregating} As long as there is a vertex $u$ not in $S(v)$ with more than $1-4\eps$ fraction of its neighbors inside $S(v)$ and that is connected to at least a $1-4\eps$ fraction of $S(v)$, add $u$ to $S(v)$.
\item\label{step:final} 
If $|S(v)| > (1+3\eps) |A(v) \cap N(v)|$, make $v$ a singleton and recurse on the remaining
  vertices. Otherwise, create
 cluster $S(v)$ and recurse on the remaining vertices.
\end{enumerate}

We start our analysis with the following claim, which shows that any cluster
output by the procedure is indeed everywhere dense. 
\newcommand{\fractneighb}{\delta}
In the remaining, we let $\fractneighb := 14$.
\begin{claim}
\label{claim:basic}
Let $\eps < 1/2$ and $\delta = 14$.
For any non-singleton cluster $S(v)$ output by the algorithm, we have that
each vertex is + connected to a fraction of at least $(1-\fractneighb\eps)$ vertices of $S(v)$ 
and has at most $\fractneighb \eps|S(v)|$ + neighbors not in $S(v)$.
\end{claim}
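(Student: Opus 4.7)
Let me abbreviate $S_3 := A(v) \cap N(v)$ (the set after step 3), $S_4$ the set right after the pruning step 4, and $S := S(v)$ the final cluster. Since the cluster is non-singleton, neither step 5 nor step 7 threw it away, so
\[
|S_4| \ge (1-\eps)|S_3|, \qquad |S| \le (1+3\eps)|S_3|,
\]
which combined give $|S_4| \ge \frac{1-\eps}{1+3\eps}|S|$. This ratio, together with the stopping conditions of the pruning and aggregating loops, will drive the whole calculation. For every $u \in S$ we must check (a) $|N(u) \cap S| \ge (1-\fractneighb\eps)|S|$ and (b) $|N(u) \setminus S| \le \fractneighb\eps|S|$. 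I would split into two cases according to whether $u$ survived pruning or was added during aggregating.

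\textbf{Case 1: $u \in S_4$.} Termination of step 4 gives $|N(u) \setminus S_4| \le 2\eps|N(u)|$ and $|N(u) \cap S_4| \ge (1-2\eps)|S_4|$. Combining the two inequalities (using $|N(u) \cap S_4| \le |S_4|$) yields $|N(u)| \le |S_4|/(1-2\eps)$. Since $S_4 \subseteq S$, we get $|N(u) \setminus S| \le 2\eps|N(u)| \le \frac{2\eps}{1-2\eps}|S_4| \le \frac{2\eps}{1-2\eps}|S|$, which is bounded by $\fractneighb\eps|S|$ as soon as $\eps$ is small. Likewise $|N(u) \cap S| \ge (1-2\eps)|S_4| \ge \frac{(1-2\eps)(1-\eps)}{1+3\eps}|S|$, and expanding shows this is at least $(1-6\eps)|S| \ge (1-\fractneighb\eps)|S|$.

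\textbf{Case 2: $u \in S \setminus S_4$, added at some moment during step 6.} Let $S'$ denote the value of $S(v)$ at the moment $u$ was added, so $S_4 \subseteq S' \subseteq S$. The insertion rule provides $|N(u) \cap S'| \ge (1-4\eps)|N(u)|$ and $|N(u) \cap S'| \ge (1-4\eps)|S'|$. From the first inequality $|N(u) \setminus S| \le |N(u) \setminus S'| \le 4\eps|N(u)|$, and from $(1-4\eps)|N(u)| \le |N(u) \cap S'| \le |S'| \le |S|$ we deduce $|N(u)| \le |S|/(1-4\eps)$; hence $|N(u) \setminus S| \le \frac{4\eps}{1-4\eps}|S| \le \fractneighb\eps|S|$ for $\eps < 1/50$. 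For the lower bound on $|N(u) \cap S|$, monotonicity of $S(v)$ during step 6 gives $|N(u) \cap S| \ge |N(u) \cap S'| \ge (1-4\eps)|S'| \ge (1-4\eps)|S_4| \ge \frac{(1-4\eps)(1-\eps)}{1+3\eps}|S|$, which is at least $(1-8\eps)|S| \ge (1-\fractneighb\eps)|S|$.

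\textbf{Main obstacle.} The delicate point is Case 2: because the aggregating step can keep adding vertices after $u$ joined, the ``fraction of $S(v)$'' condition that qualified $u$ refers only to the intermediate cluster $S'$, not to the final $S$. The key observation unlocking the argument is that $S' \supseteq S_4$ at every moment of step 6, so the lower bound $(1-4\eps)|S'|$ can be traded for $(1-4\eps)|S_4|$, and then the ratio $|S_4|/|S| \ge (1-\eps)/(1+3\eps)$ guaranteed by steps 5 and 7 absorbs the loss. Once this is noticed, the remaining work is the routine verification that $2/(1-2\eps)$, $4/(1-4\eps)$, and $1 - (1-c\eps)(1-\eps)/(1+3\eps)$ are all at most $\fractneighb\eps = 14\eps$ for $\eps < 1/50$, which completes the proof.
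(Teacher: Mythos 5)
Your proof is correct and follows essentially the same argument as the paper's: both split on whether $u$ survived Step~\ref{step:pruning} or was added in Step~\ref{step:aggregating}, and both combine the pruning/aggregating termination conditions with the size bounds from Steps~\ref{step:removal} and~\ref{step:final} to relate $|S_4|$ and $|S|$ to $|A(v)\cap N(v)|$. You are somewhat more explicit than the paper in converting $|N(u)|$ into a multiple of $|S(v)|$ via $|N(u)\cap S'|\le|S'|$, which the paper's statement of the upper bound on outside neighbors slightly elides, but the constants still come out within the claimed $\fractneighb\eps$ margin.
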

\begin{proof}
At the end of Step~\ref{step:removal}, the vertices that are still in $S(v)$ are connected to at least $(1-2\eps)(1-\eps)|A(v) \cap N(v)|$ vertices of $S(v)$. The vertices that were added in step~\ref{step:aggregating}. are connected to at least $(1-4\eps)(1-\eps)|A(v)\cap N(v)|$ vertices of $S(v)$. Therefore, in a non-singleton cluster output by the algorithm, 
all the vertices are adjacent to at least $(1-4\eps)(1-\eps)|A(v) \cap N(v)|$ vertices
of $S(v)$, and so a fraction $\frac{(1-4\eps)(1-\eps)}{1+3\eps} > (1-\fractneighb\eps)$ of the vertices
of $S(v)$, where the last inequality uses the fact that 
the cluster was indeed created at Step~\ref{step:final}.

To bound the number of + neighbors of a vertex $u$ of $S(v)$, we distinguish two cases: 
Either $u$ has been added at Step~\ref{step:aggregating} and in which case the number of + neighbors outside of $S(v)$ is at most $4\eps|S(v)| $ as desired; Or $u$ has
not been removed at Step~\ref{step:pruning} meaning that it has at most $\eps|S(v)|$ + neighbors outside $S(v)$ at the end of  
Step~\ref{step:pruning}. The remaining steps can only decrease the number of + neighbors outside $S(v)$.
\end{proof}

To finish the proof of Theorem~\ref{thm:CC}, we have to show that for any important group of vertices $C$,
there is a cluster $S_i$ output by the algorithm such that
$C \subseteq S_i$, and $S_i$ does not intersect any other disjoint important groups of vertices, and
that each cluster output by the algorithm is everywhere dense. We break the proof of this
statement into two lemmas.
\begin{lemma}
\label{lem:structure}
    For any pair of important groups of vertices $C_i, C_j$, such that $C_i \cap C_j = \emptyset$, there
    is no $S \in \calS$ such that $C_j \cap S \neq \emptyset$ and
    $C_i \cap S \neq \emptyset $.
    
    % \el{There could be two important groups $C_i$ and $C_j$ they are very similar? Do we mean disjoint $C_i$ and $C_j$?}
%    For each cluster $C \in \calS$, there is at most one group 
%    of important vertices $C'$ such that $C \cap C' \neq \emptyset$. 
%    For each group of important vertices $C$, there is a cluster $C_\calS \in \calS$ 
%    such that $C \subseteq C_\calS$.
\end{lemma}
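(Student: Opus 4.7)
\begin{proof}[Proof proposal for Lemma~\ref{lem:structure}]
The plan is to argue by contradiction: assume some cluster $S \in \calS$ contains a vertex $u \in C_i \cap S$ and a vertex $u' \in C_j \cap S$, and derive a size contradiction from the fact that $C_i$ and $C_j$ are both forced to cover an overwhelming majority of $S$, but are disjoint. Note that we may assume $S$ is a non-singleton cluster since any singleton can intersect at most one of the disjoint sets $C_i, C_j$; hence Claim~\ref{claim:basic} applies to $S$.

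The first step is to establish a two-sided comparison between $|N(u)|$ and $|S|$. Claim~\ref{claim:basic} gives $|N(u) \cap S| \geq (1-\fractneighb\eps)|S|$ and $|N(u) \setminus S| \leq \fractneighb\eps|S|$, so in particular $|N(u)| \leq (1+\fractneighb\eps)|S|$. On the other hand, $u \in C_i$ and the importance of $C_i$ give $|N(u) \setminus C_i| \leq (\eps/8)|N(u)|$, hence
\[
|N(u) \setminus C_i| \leq (\eps/8)(1+\fractneighb\eps)|S|.
\]

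The second step is to show that $C_i$ contains almost all of $S$. Since $N(u) \cap S \cap C_i \subseteq C_i \cap S$, we have
\[
|C_i \cap S| \;\geq\; |N(u) \cap S \cap C_i| \;\geq\; |N(u) \cap S| - |N(u) \setminus C_i| \;\geq\; \Bigl(1-\fractneighb\eps-\tfrac{\eps}{8}(1+\fractneighb\eps)\Bigr)|S|.
\]
Running the identical argument with $u'$ in place of $u$ and $C_j$ in place of $C_i$ yields the symmetric bound
\[
|C_j \cap S| \;\geq\; \Bigl(1-\fractneighb\eps-\tfrac{\eps}{8}(1+\fractneighb\eps)\Bigr)|S|.
\]

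The final step is to reach a contradiction using disjointness. Since $C_i \cap C_j = \emptyset$, the sets $C_i \cap S$ and $C_j \cap S$ are disjoint subsets of $S$, so
\[
2\Bigl(1-\fractneighb\eps-\tfrac{\eps}{8}(1+\fractneighb\eps)\Bigr)|S| \;\leq\; |C_i \cap S| + |C_j \cap S| \;\leq\; |S|.
\]
For $\fractneighb = 14$ and $\eps < 1/50$, the left-hand side is strictly larger than $|S|$ (each factor is greater than $0.7$), which is a contradiction. Hence no such $S$ can simultaneously intersect two disjoint important groups.
\end{proof}

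The main obstacle (really the only nontrivial point) is the two-sided size comparison that makes $|C_i|$, $|C_j|$, and $|S|$ all comparable; once that is in place, everything else is inclusion-exclusion, and the choice of $\eps$ at the beginning of Section~\ref{sec:correlclust} is precisely tuned to make the final numerical inequality tight.
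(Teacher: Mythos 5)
Your proposal is correct and is essentially the same argument as the paper's: both use Claim~\ref{claim:basic} to show that a vertex in $S$ has degree comparable to $|S|$ with almost all of its neighbors inside $S$, then invoke the ``at most an $\eps/8$ fraction of + neighbors outside'' condition of importance to derive a numerical contradiction. The only cosmetic difference is organizational: the paper picks the intersection of smaller density (WLOG $|S \cap C_j| \le 2|S|/3$) and shows a single vertex $v \in C_j \cap S$ would have too large a fraction of neighbors outside $C_j$, while you symmetrically lower-bound $|C_i \cap S|$ and $|C_j \cap S|$ and use disjointness.
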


\begin{lemma}
\label{lem:cost}
Any important group of vertices $C$ is a subset of a cluster
  of $\calS$ and any cluster $S_i \in \calS$ is everywhere dense.
\end{lemma}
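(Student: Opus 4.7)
The plan for the ``everywhere dense'' conclusion is direct: Claim~\ref{claim:basic} already guarantees that for $\eps<1/50$ and $\fractneighb=14$, every non-singleton output cluster has each vertex connected to a $(1-14\eps)>2/3$ fraction of the cluster, so everywhere denseness follows immediately; singletons are handled by definition.

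For the containment statement, I will first establish three quantitative facts about any important group $C$. For every $v\in C$, the two inequalities defining importance yield $|N(v)|\in [(1-\eps/8)|C|,\,|C|/(1-\eps/8)]$ and $|N(v)\triangle C|=O(\eps)|C|$. Then, for any two $u,v\in C$, the triangle inequality $|N(u)\triangle N(v)|\leq |N(u)\triangle C|+|N(v)\triangle C|$ combined with the sandwich $\min(|N(u)|,|N(v)|)\geq (1-\eps/8)|C|$ will show that $u$ and $v$ agree, so $C\subseteq A(v)$ for every $v\in C$. Finally, importance directly bounds the total number of $+$ edges between $V\setminus C$ and $C$ by at most $(\eps/8)|C|^2$.

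With these facts in hand, I will consider the \emph{first} iteration of the algorithm that outputs a non-singleton cluster $S(v)\in\calS$ meeting $C$; by that choice, no vertex of $C$ has been placed in a previously output cluster, so showing $C\subseteq S(v)$ completes the argument. The core of the argument is the subcase $v\in C$, where I will walk through the algorithm's steps: the three singleton-check thresholds (the initial one, Step~\ref{step:removal}, and Step~\ref{step:final}) will not trigger because $C\subseteq A(v)$ and the size sandwich forces $|A(v)\cap N(v)|$ to be close to both $|N(v)|$ and $|A(v)|$; at Step~\ref{step:pruning}, no vertex of $C\cap N(v)$ is removed since each such $u$ has $|N(u)\setminus S(v)|\leq (\eps/8+\eps)|N(u)|\leq 2\eps|N(u)|$; at Step~\ref{step:aggregating}, every $w\in C\setminus N(v)$ is absorbed because it meets most of $S(v)$; and a double-counting argument on the at-most-$(\eps/8)|C|^2$ crossing edges caps the number of extra absorbed vertices by $O(\eps)|C|$, so the Step~\ref{step:final} size check is passed.

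The main obstacle will be the subcase $v\notin C$ with $S(v)$ meeting $C$ at some $u$. Here $u\in C\cap S(v)$ forces both $u\in N(v)$ and $u\in A(v)$, and a triangle inequality through $N(u)$ will give $|N(v)\triangle C|=O(\eps)|C|$, so $v$ is essentially indistinguishable from an element of $C$. I expect to reproduce the same step-by-step verification with slightly weakened constants to conclude $C\subseteq S(v)$, but this requires carefully balancing the thresholds appearing in the algorithm ($\eps/2$, $2\eps$, $1-\eps$, $4\eps$, $3\eps$) against the importance threshold $\eps/8$ and the agreement threshold $\eps$. The slack offered by $\eps<1/50$ and the two inequalities imposed on $\eps$ at the start of Section~\ref{sec:ultrametricconstant} should suffice, but this bookkeeping is the main technical challenge.
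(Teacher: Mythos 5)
Your outline for the ``everywhere dense'' part is right and matches the paper: $1-14\eps > 2/3$ for $\eps < 1/50$, so Claim~\ref{claim:basic} suffices, and singletons are dense by convention.

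For the containment, your agreement facts about $C$ (the size sandwich, $|N(u)\triangle C|=O(\eps)|C|$, and pairwise agreement inside $C$) and the step-by-step verification for the $v\in C$ subcase line up well with what the paper actually does (the paper packages most of this into Claim~\ref{claim:importantvertices}). But the $v\notin C$ subcase, which you correctly flag as the main obstacle, has a genuine gap. You write that ``$u\in C\cap S(v)$ forces both $u\in N(v)$ and $u\in A(v)$.'' This is false: $S(v)$ is the \emph{final} cluster, and a vertex $u$ can enter $S(v)$ during Step~\ref{step:aggregating} without ever having been in $A(v)\cap N(v)$. The algorithm's aggregating step admits $u$ based on its relation to the current $S(v)$, not to $v$ itself, so there is no way to conclude $u$ agrees with $v$. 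Consequently the triangle inequality through $N(u)$ that you want to run, $|N(v)\triangle C|\le |N(v)\triangle N(u)|+|N(u)\triangle C|$, has no handle on the first term, and the conclusion that ``$v$ is essentially indistinguishable from an element of $C$'' does not follow. The paper sidesteps this entirely: for $v\notin C$, it first asks whether a vertex of $C$ survives to $S(v)_1$ (i.e., is present after Step~\ref{step:pruning}). If yes, Claim~\ref{claim:importantvertices} takes over and absorbs all of $C$. If no, it argues that no vertex of $C$ can be added at Step~\ref{step:aggregating}, because such a vertex has at most an $\eps/8$ fraction of its neighbors outside $C$, which is far below the $(1-4\eps)$ fraction it would need to have inside a $C$-free $S(v)$. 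This disjunction on the state of $S(v)_1$ after pruning is the structural idea your proposal is missing.

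A secondary remark on framing: ``the first iteration that outputs a non-singleton cluster meeting $C$'' does not by itself give that no vertex of $C$ is already sitting in an earlier \emph{singleton} cluster. Your plan implicitly repairs this by showing the singleton checks never trigger when $v\in C$ (so a $C$-pivot never produces a singleton), which then justifies the framing, but you should be aware that the framing and the $v\in C$ argument are interdependent and must be stated in the right order, as the paper does by handling $v\in C_j$ and $v\notin C_j$ as two exhaustive cases about the pivot rather than about the first non-singleton output.
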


The above two lemmas immediately imply the proof of Theorem~\ref{thm:CC}. The intuition for these lemmas is that if an important group of vertices has a non-empty intersection with a cluster of $S$, then they mostly (i.e., up to a factor $(1-\Theta(\eps))$) coincide. Therefore, a cluster cannot contain two disjoint important group of vertices (Lemma~\ref{lem:structure}) and the entire important group of vertices will be added to the cluster in Step~\ref{step:aggregating} (Lemma~\ref{lem:cost}).

We next prove the first lemma.
\begin{proof}[Proof of Lemma~\ref{lem:structure}]
  Consider a cluster $S$ output by the algorithm. If $|S| = 1$ the lemma trivially holds for
  $S$.
  Thus, assume $|S| > 1$. 
  Assume towards contradiction that there are two disjoint groups 
  of important vertices $C_i$ and $C_j$ intersecting $S$.
  Without loss of generality, we have that $|S \cap C_j| \le 2|S|/3$.
%   Either $|S \cap C_k| \ge 2|S|/3$ for some $k \in \{ i, j \}$ or not.\vca{$2/3$ hasn't been optimized}
%   Assume first that this is not the case. %Let $C_j$ be an arbitrary important group of vertices intersecting $S$.
   By Claim~\ref{claim:basic}, each vertex $v \in C_j \cap S$ 
   %\ar{do you mean $v \in C_j \cap S$? or does this proof already assume Lemma~\ref{lem:cost}?}
is thus connected to at least $|S|(1/3-\delta\eps)$
  vertices not in $C_j$, and has total degree at most $|S|(1+\delta\eps)$. It follows that
  each such vertex $v$ has a $\frac{1/3-\delta\eps}{1+\delta\eps}$ fraction of its neighbors not
  in $C_j$ and so by definition of important groups of vertices, cannot be in $C_j$ since  $\frac{1/3-\delta\eps}{1+\delta\eps} > \eps/8$, which leads to contradiction. 
  \vca{constraint on $\eps,\delta$}

%   Next, assume that $|S \cap C_j| \ge 2|S|/3$. (The case $|S \cap C_i| \ge 2|S|/3$ is identical.)
%   %Then, consider the important group of vertices $C_i \neq C_j$ that also intersects $S$, 
%   Then by Claim~\ref{claim:basic}, each vertex $v$ of $C_i \cap S$ must be both adjacent
%   to at least $|S|(2/3 - \eps\delta)$ vertices of $C_j$ and have at most
%   $|S| (1/3+\delta\eps)$ vertices within $C_i$. By definition of important
%   groups of vertices, it cannot be that $v$ belongs to $C_i$ since the fraction
%   of neighbors of $v$ that are not in $C_i$ would be at least
%   $\frac{1}{1+(1/3+\delta\eps)/(2/3 - \eps\delta)} > \eps/8$, yielding the desired contradiction.\vca{constraint on $\eps, \delta$ }
  \end{proof}
  
  We conclude by proving Lemma~\ref{lem:cost}.
  \begin{proof}[Proof of Lemma~\ref{lem:cost}]
  Let $C_j$ be an important group of vertices.
  We start with the following claim.\el{This claim additionally assumes that Step 5 doesn't do anything? (We may say that we won't even reach Step 6 if Step 5 acts..) Anyway where is Step 5 used? }
  Let $S(v)$ be a cluster created by the algorithm. We let 
  $S(v)_0 := A(v) \cap N(v)$, we let $S(v)_1$ be the set obtained after
  applying Step~\ref{step:pruning} to $S(v)_0$. 
%   Let $S(v)_2$ be the set
%   obtained after applying Step~\ref{step:aggregating} to $S(v)_1$.
  
  \begin{claim}
  \label{claim:importantvertices}
    Consider a set of vertices $S(v)_0$ formed at the first step of the algorithm. 
    We have that if the following two conditions hold:
    \begin{enumerate}
      \item At the end of the execution of Step~\ref{step:pruning} there exists a vertex $u \in S(v)_1$
        that is a vertex of an important group of vertices $C_j$; and
      \item $S(v)_0$ is not a singleton cluster;
    \end{enumerate}
    Then at the end of the execution of Step~\ref{step:aggregating} all the vertices of $C_j$ are in $S(v)$ and
    no vertex from another important group of vertices is added to $S(v)$.
    Moreover, $S(v)$ is output by the algorithm.
  \end{claim}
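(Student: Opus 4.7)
The plan is to propagate the approximation $N(v) \approx C_j$ through the chain given by the hypothesis, and then read off the behaviour of Steps~\ref{step:pruning}--\ref{step:final} directly from this closeness.

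First, I establish that $N(v)$ is close to $C_j$. The hypothesis gives $u \in S(v)_1 \subseteq A(v) \cap N(v)$ with $u \in C_j$ important, so by definition of importance $|N(u)| = (1 \pm O(\eps))|C_j|$ and $|N(u) \triangle C_j| = O(\eps)|C_j|$; agreement of $u$ and $v$ transfers this to $v$ via the triangle inequality for symmetric differences, yielding $|N(v) \triangle C_j| = O(\eps)|C_j|$ and $|N(v)| = (1 \pm O(\eps))|C_j|$. Applying the same reasoning to each $w \in C_j$ (using that importance within $C_j$ already makes all members pairwise-agreeing) shows $C_j \subseteq A(v)$, and since $|C_j \setminus N(v)| \le |C_j \triangle N(v)| = O(\eps)|C_j|$, we get $|C_j \setminus S(v)_0| = O(\eps)|C_j|$ and $|S(v)_0 \triangle C_j| = O(\eps)|C_j|$.

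Next, I analyze the pruning at Step~\ref{step:pruning}. Every $w \in C_j \cap S(v)_0$ initially satisfies $|N(w) \setminus S(v)_0| \le (\eps/8)|N(w)| + O(\eps)|C_j| < 2\eps |N(w)|$ and $|N(w) \cap S(v)_0| \ge (1-\eps/8)|C_j| - O(\eps)|C_j| > (1-2\eps)|S(v)_0|$, so the first vertices pruned must lie outside $C_j$. A monovariant argument on $|S(v) \triangle C_j|$ (seeded by the initial budget of $O(\eps)|C_j|$) shows that Step~\ref{step:pruning} removes at most $O(\eps)|C_j|$ vertices overall. This gives $|S(v)_1| \ge (1-\eps)|S(v)_0|$, so Step~\ref{step:removal} does not fire, and $|S(v)_1 \triangle C_j| = O(\eps)|C_j|$.

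Finally, in Step~\ref{step:aggregating}, every $w \in C_j \setminus S(v)_1$ has most of its neighbors in $C_j$ and $S(v)_1 \approx C_j$, so it comfortably passes both $1-4\eps$ conditions and is added; thus $C_j \subseteq S(v)$ at the end. Conversely, any $w'$ in an important group $C_{j'}$ disjoint from $C_j$ has at most $\eps/8$ fraction of its neighbors outside $C_{j'}$, and since $S(v)$ lies within an $O(\eps)|C_j|$-neighbourhood of $C_j$, fewer than $\eps/8 + o(1) < 1-4\eps$ of $w'$'s neighbors fall in $S(v)$, so $w'$ is not added. The total size change from $S(v)_0$ to the final $S(v)$ is $O(\eps)|C_j| = O(\eps)|S(v)_0|$, giving $|S(v)| \le (1+3\eps)|A(v)\cap N(v)|$, so Step~\ref{step:final} outputs $S(v)$. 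The main obstacle will be the monovariant argument for Step~\ref{step:pruning}, since the pruning is iterative and the thresholds $2\eps$ and $1-2\eps$ are measured against the \emph{current} set; one has to verify that the cascade never pushes $|S(v) \triangle C_j|$ out of the $O(\eps)$ regime, which amounts to carefully coordinating the constants $\eps/8$, $\eps$, $2\eps$, $4\eps$, $1-\eps$, and $1+3\eps$ appearing in the definitions of importance, agreement, and the algorithm's thresholds.
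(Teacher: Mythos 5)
Your first paragraph contains a step that does not follow, and it propagates through the rest of the argument. You claim that from $u \in A(v)$ and the fact that all vertices of $C_j$ pairwise agree, you can conclude $C_j \subseteq A(v)$. But the agreement relation $|N(a)\triangle N(b)| \le \eps\min(|N(a)|,|N(b)|)$ is not transitive, and with the paper's constants the triangle inequality gives no slack: a vertex $w\in C_j$ satisfies $|N(w)\triangle N(u)|\lesssim \eps|C_j|$ (that calculation already essentially saturates the $\eps$ budget, as the proof of Lemma~\ref{lem:cost} shows), and $|N(u)\triangle N(v)|\le \eps\min(|N(u)|,|N(v)|)\approx\eps|C_j|$, so all you can extract is $|N(w)\triangle N(v)|\lesssim 2\eps|C_j|$ while $\min(|N(w)|,|N(v)|)\approx(1-O(\eps))|C_j|$, i.e.\ a ratio around $2\eps$, not $\le\eps$. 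Since $v$ is not assumed to lie in $C_j$ (the hypothesis only guarantees some $u\in S(v)_1\cap C_j$), there is no reason $C_j\subseteq A(v)$, and hence your bound $|C_j\setminus S(v)_0|=O(\eps)|C_j|$ — which you then seed the ``monovariant'' analysis of Step~\ref{step:pruning} with — is not established. You also flag the monovariant itself as the main obstacle, and indeed it is: controlling a cascading removal process measured against the shrinking set $S(v)$ is delicate and you give no argument for it.

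The paper sidesteps both difficulties by working directly with the surviving witness $u\in S(v)_1$ rather than with $S(v)_0$ or $A(v)$. Because $u$ survived Step~\ref{step:pruning}, the termination condition gives $|N(u)\cap S(v)_1|\ge(1-2\eps)|N(u)|$ and $|N(u)\cap S(v)_1|\ge(1-2\eps)|S(v)_1|$ \emph{post hoc}, without any need to track the pruning dynamics; combined with the importance of $u$ in $C_j$ (which bounds $N(u)$ against $C_j$ on both sides), a few lines of set-algebra yield $|S(v)_1\cap C_j|\ge(1-2.5\eps)|S(v)_1|$ and $|S(v)_1\cap C_j|\ge(1-2.5\eps)|C_j|$. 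From there the analysis of Step~\ref{step:aggregating} proceeds as in your third paragraph, via an invariant that at most $\eps|S(v)_1|$ non-$C_j$ vertices can be admitted before producing a contradiction with the $\eps/8$-sparsity of $C_j$'s boundary. So the endgame of your proposal is in the right spirit, but the opening is built on a claim that fails, and the cleaner route is to extract everything you need from the single vertex $u$ and the Step~\ref{step:pruning} exit condition, never mentioning $A(v)$, $S(v)_0$, or the pruning trajectory at all.
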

    \begin{proof}
Let $u$ be a vertex of an important group of vertices $C_j$. Then by definition of an important group of vertices, we have:

\begin{align}|N(u)\cap C_j|\geq(1-\eps/8)|N(u)|\label{eq:1}\end{align} and

\begin{align}|N(u) \cap C_j| \geq (1-\eps/8)|C_j|\label{eq:2}.\end{align}

If $u$ belongs to $S(v)_1$, we furthermore have

\begin{align}|N(u) \cap S(v)_1|\geq(1-2\eps)|N(u)|\label{eq:3}\end{align} and

\begin{align}|N(u) \cap S(v)_1| \geq (1-2\eps)|S(v)_1|.\label{eq:4}\end{align}

Therefore 

\begin{align*}|S(v)_1 \cap C_j|\geq |S(v)_1 \cap C_j \cap N(u)|&\geq& |N(u) \cap S(v)_1|-|N(u)\cap S(v)_1\cap \bar{C_j}|\\
&\geq^{\ref{eq:1},\ref{eq:4}}& (1-2\eps)|S(v)_1|-\eps/8|N(u)|\\
&\geq^{\ref{eq:3}}& (1-2\eps)|S(v)_1| -\eps|S(v)_1|/(8(1-2\eps))\\
&\geq& (1-2.5\eps)|S(v)_1|\end{align*}
since $\eps<3/8$.

Similarly, using equations~\ref{eq:1},~\ref{eq:2},~\ref{eq:3},~\ref{eq:4}, and the fact that $\eps$ is small, we obtain that $|S(v)_1 \cap C_j|\geq (1-2.5\eps) |C_j|$ and that $|\overline{S(v)_1} \cap C_j|\leq 2.5\eps |S(v)_1|$.
    
   % Let $u$ be a vertex of an important group of vertices $C_j$.
   % If $u$ belongs to $S(v)_1$, it means that it is adjacent
   % to at least $(1-2\eps)|S(v)_1|$ vertices in $S(v)_1$ and has at most $2\eps|N(u)|$
   % neighbors outside $S(v)_1$. Thus, $u$ has at least a $(1-2\eps)$ fraction of its neighbors inside $S(v)_1$ and therefore $|N(u)| \leq |S(v)_1|/(1-2\eps)$. %\el{It was $2\eps$ in step 4?}
    % Combining this and the definition of important groups or vertices, at least $(1-(2+1/(8(1-2\eps))\eps)|S(v)_1|>(1-3\eps)|S(v)_1|$ vertices of $S(v)_1$ are 
    %vertices of $C_j$.
    % Moreover, since $u$ is a vertex of an important
    %group of vertices, it is adjacent to
    %at least an $(1 - \eps/8)$ fraction of $C_j$
    %and since it has at most $2\eps|S(v)_1|$ neighbors outside, by a similar computation, it means
    %that the number of vertices of $C_j$ that
    %are not in $S(v)_1$ is at most $3\eps|S(v)_1|$ since $\eps < 1/4$.
    %\el{Technically $|C_i|$ can be strictly bigger than $|S_i|$? Though by at most a $(1+\eps)$ factor so that $1/8$ replaced by $1/4$ may suffice.}

    We now consider the vertices added to $S(v)_1$ during Step~\ref{step:aggregating}. We show 
    % by induction on the number of vertices added at Step~\ref{step:aggregating} 
    that the total number 
    of vertices that are not from an important group of vertices
    is at most $\eps|S(v)_1|$ and that no vertex from a disjoint important group
    of vertices $C_i \neq C_j$ can be added at Step~\ref{step:aggregating}. 
    
    % Moreover, we claim that if $S(v)$ contains a vertex of $C_j$, then any vertex added during 
    % Step~\ref{step:aggregating} cannot belong to an important group $C_j \neq C_i$. %\el{Vertex that doesn't belong to any important group is okay?} 
    To show this, we prove the following invariant on the algorithm:
    At any point in the execution of Step~\ref{step:aggregating}, if the number of vertices that are not from $C_j$ that have been added to $S(v)$ is less than $\eps|S(v)_1|$ then we claim that no vertex from
    an important group of vertices $C_i \neq C_j$ can be added by the 
    algorithm. Indeed, any vertex added at this step must have more than a $(1-4\eps)$ fraction of its neighbors in $S(v)$, thus more than a $(1-5\eps)$ fraction of its neighbors in $S(v)_1$ which mostly (up to a factor $(1-\Theta(\eps))$) coincides with $C_j$ by the inequalities above. Therefore it cannot belong 
    to a disjoint important group of vertices $C_i \neq C_j$.
    \vca{I think we have the constraint $6\eps < 1/2$ here.} 

    Next, assume towards contradiction that $\eps|S(v)_1|$ vertices 
    that are not in $C_j$ have been added to $S(v)$ during Step~\ref{step:aggregating} and consider the first time it happens 
    during the execution of Step~\ref{step:aggregating}.
    This means that the vertices added before must not belong to an important group
    of vertices. Moreover, since at least $(1-2.5\eps)|S(v)_1|$ vertices of $S(v)_1$ are in $C_j$, any vertex that has been added, must be connected to at least $(1-8\eps)|S(v)_1|$ vertices of $C_j$.
    If $\eps|S(v)_1|$ such vertices have been added, it must be that there exists a vertex of $C_j$ which is adjacent to at least 
    $(1-8\eps)\eps|S(v)_1| > \eps|S(v)_1|/2$ such vertices. However, 
    by definition of important group of vertices, each vertex of $C_j$ is 
    adjacent to at most $\eps/8 |C_j|\leq \eps/(8(1-2.5\eps))|S(v)_1|<\eps|S(v)_1|/2$ vertices that are not in $C_j$, a contradiction.
    %$(1+1/4)\eps|S(v)_1|/8 < \eps|S(v)_1|/2$ vertices that
    %are not in $S(v)_1$, a contradiction. 
    Thus, it must be that at most 
    $\eps|S(v)_1|$ vertices that are not in $C_j$ are ever added to $S(v)$
    at Step~\ref{step:aggregating}.

    Finally, each vertex of $C_j$ not in $S(v)_1$ has a $(1-\eps/8)$ of its neighbors in $C_j$ and thus a $(1-3\eps)$ fraction of its neighbors in $S(v)_1$. Moreover, since the total number of vertices not from $C_j$ added to $S(v)_1$ is at most $\eps|S(v)_1|$, all the vertices of $C_j - S(v)_1$ are added during Step~\ref{step:aggregating}.

    %Moreover, each vertex of $C_j$ not in $S(v)_1$ has more than
    %$(1-(2+1/4)\eps)|S(v)_1|$ neighbors in $S(v)_1$ and at most $(2+1/8)\eps|S(v)_1|$ %neighbors outside $S(v)_1$. It thus 
    %has more than a $(1-(2+1/8)\eps)$ fraction of its neighbors within
    %$S(v)_1$. Moreover, since the total number of vertices not from $C_j$ added to $S(v)_1$ is at most $\eps|S(v)_1|$, all the vertices of $C_j - S(v)_1$ are added during Step~\ref{step:aggregating}.

    Therefore, since the total number of vertices of $C_j$ not in $S(v)_1$ is at most $3\eps|S(v)_1|$ the total number of vertices added is at most $4\eps|S(v)_1|$
    and so the cluster is indeed created at Step~\ref{step:final}.
    \end{proof}
    
    Equipped with the above claim, we can now conclude the proof.
    Let $S(v)$ be the cluster {\em created by} vertex $v$ (i.e., $v$ is the vertex chosen in Step 1). We will show that either $S(v)$ contains the entire important group of vertices $C_j$ or is disjoint from $C_j$. Note that $S(v)$ always contains $v$ even when it is a singleton cluster. Therefore, we consider the cases $v \in C_j$ and $v \notin C_j$ separately.
    
    First, suppose that $v \notin C_j$. We will show that if $S(v) \cap C_j \neq \emptyset$, then $S(v) \supseteq C_j$ (and no vertex from another important group of vertices by Lemma~\ref{lem:structure}).
    %\el{The conclusion of the sentence holds independent of the creator?}\vca{Yes, I think so}
    By Claim~\ref{claim:importantvertices}, we have that if the cluster contains a vertex of an important group $C_j$ after Step~\ref{step:pruning},
    then it contains all the vertices of $C_j$ at the end of Step 6.
    So we must show that if it does not contain an important
    vertex at the end of Step~\ref{step:pruning}, then no vertices of $C_j$ are added at Step~\ref{step:aggregating}.
    This follows from the fact that each vertex of $C_j$ is connected to at least a $(1-3\eps)$ fraction of vertices
    of $C_j$ and so if a vertex of $C_j$ joins $S(v)$ at Step~\ref{step:aggregating} it means that $S(v)$ already contained
    a vertex of $C_j$. Note that $S(v)$ can be reduced a singleton cluster $\{ v \}$ in Step 5 or 7, but the claim still holds since $v \notin C_j$. 
    %\el{It may blow up in Step 7, but it is okay because the "creator" is not an important vertex (in fact, don't we just care whether the creator is in $C_j$ or not?)} \vca{Yeah, I think we want to make sure that: if the creator is in $C_j$ then the cluster is created. If the creator is not in $C_j$ either the cluster contains $C_j$ entirely or not at all}
    
    Finally, we show that if $v \in C_j$, then $S(v) \supseteq C_j$ always. 
    %We next need to show that if a vertex of $C_j$ is selected for creating a cluster, it creates a cluster $S(v)$ containing all the vertices of $C_j$. 
    We first argue that all pairs of vertices $u_1,u_2$ of $C_j$ agree. Indeed $u_1$ and $u_2$ are both adjacent to 
    a $(1-\eps/8)$ fraction of $C_j$ and has at most an $\eps/8$
    fraction of its neighbors outside $C_j$. It follows that
    $u_1$ and $u_2$ differ in at most $\eps|C_j|/4 + 
    (\eps/8)(d(u_1) + d(u_2))$ neighbors where $d(u_i)$ is the 
    degree of $u_i$. This is at most $\eps|C_j|/4 + 
    \frac{1+\eps}{1-\eps}(\eps/4)\min(d(u_1), d(u_2)) \le 
    \eps \min(d(u_1), d(u_2))$ for our choice of $\eps$.
    Moreover, for a pair of vertices $u_1,u_2$ from two disjoint
    group of important vertices, we show that $u_1$ and $u_2$ do not agree.
    Assume w.lo.g. that the group of important vertices $C_{\ell}$ containing $u_2$
    is smaller than the group of important vertices $C_k$ containing $u_1$, then
    $d(u_2)$ is adjacent to a $(1-\eps/8)$ fraction of vertices
    of its group of important vertices $C_\ell$ and at most
    an $\eps/8|C_\ell|$ vertices of the important group of vertices $C_k$ of $u_1$. However, $u_1$ is adjacent to at 
    least a fraction of $(1-\eps/8)$ vertices of $C_{k}$ and so at least $(1-\eps/8)|C_k|$ vertices.
    Thus, the neighborhoods of $u_1,u_2$ differ by at least $(1-\eps/4)|C_{\ell}|$ but the degree of $u_2$ is at most $(1+\eps/8)|C_{\ell}|$, and so they do not agree

    Hence, for any vertex $v \in C_j \cap A(v) \cap N(v)$, we
    have $|A(v) \cap N(v)| \ge 
    (1-\eps/2) \min(|A(v)|, |N(v)|)$ and so will not be removed
    at Step~\ref{step:pruning} and by the above argument $A(v)$ does
    not contain any vertex of a disjoint important group of vertices.
    Since $C_j$ is an important group of vertices, we have that at the end of 
    Step~\ref{step:pruning}, the size of $S(v)$ has not changed by more than an $\eps$ fraction. Therefore $S(v)$ is kept
    until Step~\ref{step:aggregating}. Since $S(v)$ does not contain any vertex of another important group of vertices
    and at least one vertex of $C_j$,
    Claim~\ref{claim:importantvertices} ensures that all the vertices of $C_j$ will be added to $S(v)$ and no
    vertex from another important group of vertices will be added to $S(v)$.
    Moreover, Claim~\ref{claim:importantvertices} further implies that 
    $S(v)$ will be output by the algorithm.
    Therefore, for each important group of vertices $C_j$, 
    there is a cluster in $\calS$ containing all of $C_j$.
    \el{Possible future suggestion: in addition to step 4, 5, could we also possibly remove step 7? It seems like Claim 3.5 is the only place it is used, and since $S(v)$ is {\em almost important} at the end of Step 4 (or 3), we can do additions in Step 6 simultaneously and bound the number of added vertices?}
\end{proof}

% \subsection{The Ultrametric Case}
% \label{sec:ultrametricconstant}

\section{$O(\log n \log \log n)$-Approximation for \umvd on Weighted Instances}
\label{sec:umvdweighted}
In this section, we study the following weighted version of \umvd where the input consists if distances $x \in \R^{\binom{n}{2}}_{\gz}$ and weights 
$w \in \R^{\binom{n}{2}}_{\gz}$, and the goal is to find an ultrametric $y \in \R^{\binom{[n]}{2}}_{\gz}$ such that $\sum_{(i, j)} w(i,j)\cdot \ind(x(i, j) \neq y(i, j))$ is minimized. 
We give an $O(\log n \log \log n)$-approximation algorithm for this problem. It shows a possible qualitative difference between \mvd and \umvd for weighted instances, because there is an approximation-preserving reduction from \textsc{Length-Bounded Cut} to \mvd on weighted instances~\cite{fan2020generalized}, and the best approximation ratio for the former remains at $O(n^{2/3})$ with the matching integrality gap for the standard LP relaxation~\cite{baier2010length}. 

Our algorithm closely follows the approach of Ailon and Charikar~\cite{ailon2011fitting} for the $\ell_1$ objective version.\el{Some ultrametric prelim needed.}
Let $w_1 < \dots < w_L$ be the distinct distance values of the given instance $x$. We consider the following LP relaxation whose variables are $\{ d_{ij}^t \}_{(i, j) \in \binom{[n]}{2}, t \in [L]}$ and $d_{ij}^t = 1$ indicates $y(i,j) \geq w_t$ (i.e., $i$ and $j$ are separated at level $t$ and below). For convenience, even though they are not variables, let $d_{ij}^{L + 1} = 0$ for every $(i, j) \in \binom{[n]}{2}$. 

\begin{align*}
\mbox{minimize } & 
\sum_{(i, j) \in E} w(i,j) ( d_{ij}^{x(i,j)+1} + (1 - d_{ij}^{x(i,j)}))  \\
\mbox{s.t. } & d^{t}_{ik} \leq d^t_{ij} + d^t_{jk} && \forall t \in [L], i, j, k \in [n] \\
& d^{t+1}_{ij} \leq d^t_{ij}
&& \forall t \in [L - 1], i, j \in [n] \\
& d^{t}_{ij} \in [0, 1]
&& \forall t \in [L], i, j \in [n]
\end{align*}
Suppose that we solve the above LP and obtain the solution $\{ d^t_{ij}\}_{t, i, j}$. Our rounding algorithm is the following. Let $E = \binom{[n]}{2}$ and $E_t = \{ (i, j) \in E: w_t > x(i, j) \}$ be the set of edges that are supposed to be not separated at level $t$, and $E'_t = E \setminus E_t$ be the set of edges that are supposed to be separated at level $t$. 
 
\begin{algorithm}[h]
%\SetAlgoLined
    \If{$t = 0$}{{\bf return;}}
    Call \textsc{Cluster-Partition$(Z, t)$} to obtain partition $Z = Z_1 \cup \dots \cup Z_m$\;
    \For{$i, j \in Z_{\ell}$, $\ell \in [m]$}
    {
        $y(i, j) = \min(y(i,j), w_{t - 1})$;
    }
    \For{$1 \leq \ell < \ell' \leq m, i \in Z_{\ell}, j \in Z_{\ell'}$}
    {
        $y(i, j) = w_t$;
    }
    \For{$\ell \in [m]$}
    {
        \textsc{Hierarchical-Cluster}$(Z_{\ell}, t - 1)$
    }
    
 \caption{\textsc{Hierarchical-Cluster}$(Z, t)$}
 \label{algo:umvd-general-main}
\end{algorithm}

\begin{algorithm}[h]
%\SetAlgoLined
    %$m = 1$\;
    \For{$m = 1, 2, \dots$}{
    \If(\tcp*[h]{$i, j$ far apart and should be split}){$\exists i, j \in Z: d^t_{ij} > 2/3\mbox{ and }(i,j) \in E_t'$} 
    {
    %Pick $i, j \in Z$ such that $d^t_{ij} > 2/3$ and $t \leq x(i, j)$\;
    \If{$A^t_Z(i, 1/3) \leq A^t_Z/2$}
    {$c=i$;}
    \Else
    {$c=j$;}
    $(*)$ Pick $r \in [0, 1/3]$ such that $w(\delta^t_Z(c, r))\leq O(\log \log n)A^t_Z(c,r) \ln (A^t_Z / A^t_Z(c,r))$\;
    $Z_m = V^t_Z(c, r), Z = Z \setminus Z_m$\;
    }
    \Else{
        $Z_m = Z$\;
        {\bf return} $Z_1 \cup \dots \cup Z_m$;
    }
    }
 \caption{\textsc{Cluster-Partition}$(Z, t)$}
 \label{algo:umvd-general-region}
\end{algorithm}
Let $\rho$ be the value of the LP relaxation divided by $n$. The relevant definitions used in Algorithm~\ref{algo:umvd-general-region} are as follows. 

\begin{align*}
    A^t_Z &= \rho |Z| + \sum_{ij \in E_t} w(i,j)d^{t}_{ij} \\
    V^t_Z(c, r) &= \{ i \in Z : d^t_{ci} \leq r \} \\
    A^t_Z(c, r) &= \rho|V^t_Z(c, r)| + \sum_{\substack{i,j \in V^t_Z(c, r)\\ (i, j) \in E_t}} w(i,j)d^{t}_{ij}
    + \sum_{\substack{i \in V^t_Z(c, r), j \notin V^t_Z(c, r) \\ (i, j) \in E_t}} w(i,j)(r - d^{t}_{ci}) \\
    \delta^t_Z(c, r) &= \{ (i, j) \in E_t | i \in  V^t_Z(c, r), j \notin V^t_Z(c, r) \}
\end{align*}
It is clear that the above rounding algorithm yields an ultrametric, so we directly proceed to the analysis of the approximation ratio. 
 
At a fixed level $t$, it is instructive to interpret the algorithms and the definitions in the context of the famous {\em region growing} rounding algorithm applied to the graph $G_t = (V, E_t)$ and the LP solution $\{ d^t_{ij} \}$ (for numerous graph partitioning problems including \textsc{Undirected Multicut})~\cite{vazirani2013approximation}; 
ignoring terms multiplied by $\rho$, $A^t_Z$ is the total value of the (fictitious) LP and $A^t_Z(c, r)$ is the contribution to the LP value contained in the region $V^t_Z(c, r)$. A standard analysis of the region growing algorithm yields the following lemma. 

\begin{lemma} [Lemma 4 of~\cite{ailon2011fitting}]
In the line marked $(*)$ in \textsc{Cluster-Partition}$(Z, t)$, one can always find $r \in [0, 1/3]$ such that 
$w(\delta^t_Z(c, r)) \leq O(\log \log n)A^t_Z(c,r) \ln (A^t_Z / A^t_Z(c,r))$.
\label{lem:region}
\end{lemma}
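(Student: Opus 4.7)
The plan is a Seymour-style region-growing argument, adapting the standard multicut analysis with a refined logarithmic potential that accounts for the extra $\ln(F/V(r))$ factor in the target bound. Abbreviate $F := A^t_Z$, $V(r) := A^t_Z(c, r)$, and $C(r) := w(\delta^t_Z(c, r))$, so the claim becomes: find $r \in [0, 1/3]$ with $C(r) \leq \alpha \cdot V(r) \ln(F/V(r))$ for a suitable $\alpha = O(\log \log n)$. A direct expansion of the definitions shows that $V$ is a piecewise-linear, non-decreasing function of $r$ whose right-derivative equals $C(r)$ away from the discrete points where some vertex enters $V^t_Z(c, r)$ (where $V$ has positive upward jumps). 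Hence $V'(r) \geq C(r)$ in the distributional sense on $[0, 1/3]$.

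The core of the argument is the potential $u(r) := \ln \ln(F/V(r))$, whose derivative is $u'(r) = -\frac{V'(r)/V(r)}{\ln(F/V(r))}$. Assume for contradiction that no valid $r \in [0, 1/3]$ exists. Then $V'(r) > \alpha V(r) \ln(F/V(r))$ almost everywhere on $[0, 1/3]$, and integrating yields
\begin{equation*}
\frac{\alpha}{3} \; < \; \int_0^{1/3} \frac{V'(r)}{V(r) \ln(F/V(r))} \, dr \; = \; \ln\ln \frac{F}{V(0)} \; - \; \ln\ln \frac{F}{V(1/3)}.
\end{equation*}
Because $V(0) = \rho = \text{LP}/n$ and $F \leq \rho |Z| + \text{LP} \leq 2\,\text{LP}$, we get $\ln\ln(F/V(0)) \leq \ln\ln(2n) = O(\log \log n)$. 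Choosing $\alpha$ to be a sufficiently large multiple of this quantity gives the desired contradiction, provided the subtracted term $\ln\ln(F/V(1/3))$ is bounded below.

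This last point is the main obstacle, and it is resolved precisely by the pivot-selection rule in Algorithm~\ref{algo:umvd-general-region}, which sets $c = i$ if $A^t_Z(i, 1/3) \leq F/2$ and $c = j$ otherwise. To justify that this dichotomy is always usable, I would use that $(i,j) \in E_t'$ with $d^t_{ij} > 2/3$ forces the LP balls $V^t_Z(i, 1/3)$ and $V^t_Z(j, 1/3)$ to be vertex-disjoint by the LP triangle inequality, and a short edge-by-edge accounting (for any cross edge $(u,v)$ with $u \in V^t_Z(i,1/3)$ and $v \in V^t_Z(j,1/3)$, its combined boundary contribution is at most $w(u,v)\, d^t_{uv}$, again by triangle inequality) gives $A^t_Z(i, 1/3) + A^t_Z(j, 1/3) \leq F$, so at least one of these two quantities is $\leq F/2$. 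With $V(1/3) \leq F/2$, we have $\ln(F/V(1/3)) \geq \ln 2 > 0$, and the integration argument concludes. The sole remaining technicality — that the differential inequality holds almost everywhere while the candidate $r$ must be chosen in the smooth part between vertex-entry jumps — is standard in every region-growing proof and is handled identically, since the jumps only increase $V$ without increasing $C$.
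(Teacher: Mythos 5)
Your proof is correct and follows essentially the same Seymour-style potential argument as the paper, integrating $-\frac{d}{dr}\ln\ln(F/V(r))$ over $[0,1/3]$ to derive a contradiction. You additionally supply a justification the paper omits: that the pivot-selection rule guarantees $A^t_Z(c,1/3) \le A^t_Z/2$ in \emph{both} branches, via the observation that $d^t_{ij}>2/3$ makes the radius-$1/3$ LP balls around $i$ and $j$ disjoint and a triangle-inequality accounting gives $A^t_Z(i,1/3)+A^t_Z(j,1/3)\le A^t_Z$.
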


We include the proof for completeness.

\begin{proof}
The quantity $A^t_Z(c,r)$, viewed as a function of $r$, is differentiable outside of a finite number of values when $V_Z^t(c,r)$ just hits a new vertex. Since $A^t_Z(c,r)$ captures the volume of a ball growing around $c$, its derivative is the size of the sphere of radius $r$, which is captured exactly by $w(\delta^t_Z(c,r))$. We will prove that there is a constant $C$ so that $w(\delta^t_Z(c, r)) \leq C \log \log n A^t_Z(c,r) \ln (A^t_Z / A^t_Z(c,r))$. If it does not hold, we have for almost all $r \in [0,1/3]$

\[\frac{dA_Z^t(c,r)}{dr} > C \log \log n A^t_Z(c,r) \ln (A^t_Z/A^t_Z(c,r)),\]
 and thus
 
 \[-\frac{d(\ln \ln A^t_Z/A^t_Z(c,r))}{dr}> C\log \log n.\]
 
 Integrating over $[0,1/3]$, we obtain that $\ln\ln(A_Z^t/A_Z^t(c,0))-\ln\ln(A_Z^t/A_Z^t(c,1/3))> C\log \log n$. But it follows from the definitions that $A_Z^t(c,0) \geq \rho$ and $A_Z^t(c,1/3)\leq A^t_Z/2$ by the choice of $c$, and thus  $\ln\ln(A_Z^t/A_Z^t(c,0))-\ln\ln(A_Z^t/A_Z^t(c,1/3))\leq \ln\ln(A_Z^t/ \rho)-\ln\ln 2$ and we get a contradiction for $C$ big enough.
\end{proof}

\begin{lemma}
\textsc{Hierarchical-Cluster}$(Z, t)$ achieves an $O(\log n \log \log n)$-approximation. 
\end{lemma}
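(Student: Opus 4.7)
The approach mirrors the Ailon--Charikar analysis of the $\ell_1$ variant: lower-bound $\opt$ by the LP value, and show that the rounding inflates the LP by at most an $O(\log n \log \log n)$ factor. The output is an ultrametric by construction, since \textsc{Hierarchical-Cluster} produces a nested partition of $V$ and assigns $y(i,j) = w_t$ to each pair first separated at level $t$, realizing the tree metric of the induced hierarchy.

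Decompose the algorithm's cost into \emph{under-separation} errors ($y(i,j) < x(i,j)$, i.e., the pair is kept together at level $x(i,j)$) and \emph{over-separation} errors ($y(i,j) > x(i,j)$, i.e., the pair is cut at some level $t > x(i,j)$). For under-separation, the key property of \textsc{Cluster-Partition}$(Z,t)$ is that every pair $(i,j) \in Z \cap E'_t$ with $d^t_{ij} > 2/3$ is necessarily separated at level $t$ (any ball of radius $\le 1/3$ around $i$ or $j$ contains at most one of them). Consequently, if $(i,j) \in E'_{x(i,j)}$ remains together at level $x(i,j)$, then $d^{x(i,j)}_{ij} \le 2/3$, and the LP already pays at least $w(i,j)(1 - d^{x(i,j)}_{ij}) \ge w(i,j)/3$ for this edge. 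Summing yields an under-separation cost of at most $3 \cdot \text{LP}$.

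For over-separation, invoke Lemma~\ref{lem:region} on each ball $V^t_Z(c,r)$ cut by \textsc{Cluster-Partition}$(Z,t)$: its boundary weight is bounded by $O(\log \log n) \, A^t_Z(c,r) \ln(A^t_Z / A^t_Z(c,r))$. With $\rho = \text{LP}/n$ (as in the algorithm) and polynomial weights, the logarithmic volume ratio is $O(\log n)$, so each ball contributes $O(\log n \log \log n) A^t_Z(c,r)$. Sum ball-by-ball within a single call using that balls are vertex-disjoint (so $\sum_B A^t_Z(c_B,r_B) \le 2 A^t_Z$) and that the $Z$'s at level $t$ partition $V$, and then charge each cut edge $(i,j)$ (which is cut at a \emph{unique} level $t_{ij} \ge x(i,j)+1$) against its LP contribution at its critical level: by monotonicity $d^{t_{ij}}_{ij} \le d^{x(i,j)+1}_{ij}$, the $w(i,j)$ cut cost is bounded by $O(\log n \log \log n) \, w(i,j) \, d^{x(i,j)+1}_{ij}$. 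Summing over all cut edges gives an over-separation cost of $O(\log n \log \log n) \cdot \text{LP}$, and hence $\text{ALG} \le O(\log n \log \log n) \opt$.

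The main obstacle is precisely the aggregation in the last paragraph. A naive level-by-level bound would introduce a spurious factor of $L$ (the number of distinct distance values), both from the $\rho n$ term appearing in $A^t_V$ at every level and from $d^t_{ij}$ possibly remaining large over many levels $t \ge x(i,j)+1$. This is avoided by the per-edge charging scheme above, which routes every cut edge's cost to its critical level $x(i,j)+1$ via the LP monotonicity constraint $d^{t+1}_{ij} \le d^t_{ij}$, mirroring the level-by-level telescoping that is automatic for the $\ell_1$ objective of Ailon--Charikar but must be established explicitly for the $\ell_0$ LP used here.
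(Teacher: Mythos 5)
Your overall skeleton is right: the under-separation case is exactly as in the paper (an edge kept together at level $x(i,j)$ must have $d^{x(i,j)}_{ij}\le 2/3$, so the LP pays $\ge w(i,j)/3$ via the $(1-d^{x(i,j)}_{ij})$ term), and you correctly identify that the over-separation case hinges on avoiding an $L$-factor blowup. But the mechanism you give for avoiding it doesn't work.

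There are two concrete problems. First, the step ``the $w(i,j)$ cut cost is bounded by $O(\log n \log\log n)\, w(i,j)\, d^{x(i,j)+1}_{ij}$'' is false as a per-edge statement: a cut edge contributes its full weight $w(i,j)$ to the algorithm's cost, and nothing prevents $d^{x(i,j)+1}_{ij}$ from being $o(1/(\log n\log\log n))$ for that edge. The region-growing guarantee (Lemma~\ref{lem:region}) only relates the \emph{boundary weight of the ball} to the \emph{LP volume inside the ball}; it says nothing about the LP value of the cut edge itself. So you cannot charge a cut edge against its own LP contribution. Second, and relatedly, the plan of ``sum ball-by-ball within a single call'' using the uniform bound $\ln(A^t_Z / A^t_Z(c,r)) = O(\log n)$, and then summing over levels, is precisely the naive bound you correctly flag as giving a factor $L$ — both because $A^t_Z$ contains a $\rho|Z|$ term at every level, and because an edge can sit inside clusters at $\Omega(L)$ levels. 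Your declared fix (``per-edge charging via the monotonicity constraint'') is asserted but not instantiated; monotonicity $d^{t+1}_{ij}\le d^t_{ij}$ alone does not collapse the sum of $O(\log n)$ factors over the $L$ levels.

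What actually saves the argument, and what your sketch is missing, is the paper's redistribution-plus-telescoping scheme. The boundary weight $w(\delta^t_Z(c,r))$ is charged to the vertices and the $E_t$-edges \emph{inside} the ball $V^t_Z(c,r)$, proportionally to their contribution to $A^t_Z(c,r)$; this gives each interior edge a charge of at most $d^t_{ij}\cdot O(\log\log n)\ln(A^t_{Z_t}/A^t_{Z_t}(Z_{t-1}))$ and each interior vertex a charge of at most $\rho\cdot O(\log\log n)\ln(A^t_{Z_t}/A^t_{Z_t}(Z_{t-1}))$. Then for a fixed edge (or vertex), one sums these over all levels at which it can be charged, and the $\ln$-ratios \emph{telescope} because $A^{t-1}_{Z_{t-1}} \le A^t_{Z_t}(Z_{t-1})$ — the monotonicity constraint enters here to justify this inequality and to pull out the single factor $d^{x(i,j)+1}_{ij}$, not to bound a per-level sum. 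The telescope collapses to $\ln(A^L_V/A^{s-1}_{Z_{s-1}}) = O(\log n)$ once, not $O(\log n)$ per level. This also resolves the $\rho n$ concern: each vertex accumulates only a single telescoped $O(\log n\log\log n)\cdot\rho$ charge, so the total vertex contribution is $O(\log n\log\log n)\cdot n\rho = O(\log n\log\log n)\cdot\mathrm{LP}$. Without identifying the intermediate-object charging scheme and the inequality $A^{t-1}_{Z_{t-1}}\le A^t_{Z_t}(Z_{t-1})$, the over-separation bound does not go through.
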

\begin{proof}
Fix an edge $(i, j) \in E$. We first analyze the error where $y(i, j) < x(i, j)$; the event that $(i, j)$ is separated at a lower level than it is supposed to. Consider the call to  \textsc{Cluster-Partition}$(Z, t)$ with $t = x(i, j)$ and $Z \subseteq [n]$ such that $i, j \in Z$, which should exist because $y(i, j) < x(i, j)$. The fact that $i, j$ were not separated in this call, together with the design of \textsc{Cluster-Partition}$(Z, t)$, imply that $d^{x(i, j)}_{ij} \leq 2/3$. This implies that $(i, j)$ contributes at least $1/3$ to the LP objective via the $(1 - d^{x(i,j)}_{ij})$ term. Therefore, the errors $y(i, j) < x(i, j)$ can be directly charged to the LP contributions of the corresponding edges. 

It remains to analyze the errors where $y(i, j) > x(i, j)$. This is caused when some call to \textsc{Cluster-Partition}$(Z, t)$ outputs a partition that cuts edges of $E_t$, the set of edges that should not be separated at level $t$. 
Note that each part $Z_i$ of the partition, except the last part $Z_m$, is of the form $V_Z^t(c, r)$, and $\delta^t_Z(c, r)$ is the set of edges cut by removing $Z_i$ from $Z$. By Lemma ~\ref{lem:region}, its total weight is at most $O(\log \log n) \ln (A^t_Z / A^t_Z(c, r))$ times $A^t_Z(c, r)$,
the contribution of $V(c, r)$ to the LP value. Charge $w(\delta^t_Z(c, r))$ to the vertices $i \in V^t_Z(c, r)$ and the edges $(i, j) \in E_t(V(c, r)) \cup \delta^t_Z(c, r)$ proportional to their contribution to $A^t_Z(c, r)$. (Recall that each vertex contributes $\rho$, edge $(i, j) \in E_t(V(c, r))$ contributes $w(i,j)d^{t}_{ij}$, and edge 
$(i, j) \in \delta^t_Z(c, r)$ contributes $w(i,j)(r - \min(d^t_{ci}, d^t_{cj}))$.)
Every edge is charged at most $d^t_{ij} \cdot O(\log \log n) \ln (A^t_Z / A^t_Z(c, r))$, and every vertex is charged at most $\rho \cdot O(\log \log n) \ln (A^t_Z / A^t_Z(c, r))$. 

Note that each edge $(i, j)$ can be charged during \textsc{Cluster-Partition}$(Z, t)$ only if $t > x(i, j)$ and $(i, j) \in Z$ are in the same cluster. Let $s$ be the maximum of $x(i, j) + 1$ and the lowest level where $i, j$ belong to the same cluster. For $s \leq t \leq L$, let $Z_t$ be the cluster containing $i$ and $j$ at level $t$, and let $A^t_{Z_t}(Z_{t-1})$ be the value of $A^t_{Z_t}(c, r)$ when $Z_{t-1} = V^t_Z(c, r)$ during \textsc{Cluster-Partition}$(Z, t)$. (Let $Z_{s - 1}$ be the part whose deletion from $Z_s$ cuts $(i, j)$ during \textsc{Cluster-Partition}$(Z_s, s)$.)

The sum of the charges applied to $(i, j)$ is at most 
\begin{align*}
      & \sum_{t = s}^L d^t_{ij} O(\log \log n) \ln (A^t_{Z_t} / A^t_{Z_t}(Z_{t-1})) \\
    \leq \,\, & d^{x(i,j)+1}_{ij} O(\log \log n) \sum_{t = s}^L  \ln (A^t_{Z_t} / A^t_{Z_t}(Z_{t-1})) \\
    \leq \,\, & d^{x(i,j)+1}_{ij} O(\log \log n)   \ln (A^L_{Z_L} / A^{s-1}_{Z_{s-1}}) \\
    \leq \,\, & d^{x(i,j)+1}_{ij} O(\log n \log \log n), \\
\end{align*}
where the second inequality uses the fact that $A^t_{Z_t} \leq A^{t+1}_{Z_{t+1}}(Z_t)$. The sum of the charges to each vertex can be analyzed similarly so that the total charge to the vertices are at most $\sum_{i \in [n]} \rho \cdot  O(\log n \log \log n) = O(\log n \log \log n \cdot LP)$. Therefore, the total cost of the algorithm is at most $O(\log n \log \log n)$ times the initial LP value. 
\end{proof}

\section{Hardness of the Maximization Versions}\label{sec:hardness}
In this section, we study the maximization version of \umvd and \mvd. We first prove the hardness for weighted general instances of \umvd in Theorem~\ref{thm:max-hard}, and will show later on in Section~\ref{S:reductions} how to reduce the complete and unweighted instances to this case. The instance contains the set of vertices $[n]$ and the set of (not necessarily complete) pairs $(i ,j) \in \binom{n}{2}$ with distance $x(i, j)$ and weight $w(i, j)$, and the goal is to compute an ultrametric $\dist : \binom{n}{2} \to \R$ to maximize the total weight of pairs $(i, j)$ with $x(i, j) = \dist(i, j)$. We prove the following hardness theorem. 

\begin{theorem}
Assuming the Unique Games Conjecture, it is NP-hard to approximate the weighted maximization version of \umvd within any constant factor. 
\label{thm:max-hard}
\end{theorem}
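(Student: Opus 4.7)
The plan is to formulate \maxumvd as a Max-CSP and prove hardness via a long-code reduction from Unique Games, following the framework of Khot, Kindler, Mossel, and O'Donnell~\cite{khot2007optimal} and Raghavendra~\cite{raghavendra2008optimal}. An ultrametric on $n$ points is equivalent to an assignment of the points to the leaves of a rooted ultrametric tree. Fix once and for all a depth-$d$ complete binary tree $T_0$ with leaf set $L$ identified with $\{0,1\}^d$, and restrict attention to \maxumvd instances whose target ultrametric is induced by $T_0$. This realizes \maxumvd as a Max-CSP over alphabet $L$, where each constraint is a pair of variables with a prescribed target distance $w \in \{0, \ldots, d\}$, satisfied by an assignment $\sigma : [n] \to L$ exactly when $\dist_{T_0}(\sigma(i), \sigma(j)) = w$. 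Under the identification $L \simeq \{0,1\}^d$, two leaves are at distance $w$ if and only if they agree on the first $d-w$ coordinates and disagree on the $(d-w+1)$-th, so each constraint cleanly decomposes into a sequence of Boolean equality/inequality checks on the coordinates of $L$.

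The next step is to design a dictatorship test on long-code functions $f : L^R \to L$. The test samples a pair $(x,y) \in L^R \times L^R$ from an appropriate noisy product distribution $\mu$ together with a target distance, and accepts iff $\dist_{T_0}(f(x), f(y))$ equals the target. In the completeness direction, $\mu$ is chosen so that every dictator $f_\ell(x) = x_\ell$ passes with a constant probability $c$; in the soundness direction, any $f$ with no influential coordinate passes with probability at most $s$, with $s \to 0$ as the noise and the depth $d$ increase. The latter is where the Mossel--O'Donnell--Oleszkiewicz invariance principle enters, reducing the analysis to a Gaussian model. Decomposing $f$ into its $d$ binary output functions $f_1, \ldots, f_d$ turns the test into a collection of coupled Boolean dictatorship tests, one per level of $T_0$, to which standard Fourier analytic bounds apply.

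Finally, the dictatorship test is composed with the canonical Unique Games reduction: each UG variable is replaced by the long-code encoding of its label, and each UG edge is replaced by a sample from the test composed with the UG permutation, yielding a \maxumvd instance whose optimum is $c - o(1)$ in the YES case and at most $s + o(1)$ in the NO case. Since $c/s$ can be made arbitrarily large by choosing the parameters appropriately, this gives UG-hardness of approximating \maxumvd within any constant factor.

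The main obstacle is tuning $\mu$ and carrying out the soundness analysis: the distribution must simultaneously support all $d$ Boolean subtests and block ``cheating'' strategies that exploit the tree structure, such as functions concentrated in a deep subtree that automatically satisfy many short-distance constraints. The key structural feature to exploit is the hierarchical partition of $L \times L$ into nested equivalence classes determined by the depth of the lowest common ancestor, which enables an inductive, layer-by-layer treatment of the $d$ levels when the invariance principle is applied.
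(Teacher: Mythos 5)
Your high-level plan—formulate \maxumvd as a Max-CSP over tree leaves, build a dictatorship test with per-level distance distributions, invoke the invariance principle for soundness, and compose with a standard Unique Games reduction—is the same strategy the paper uses, and the rough shape of the dictatorship test (a mixture distribution forcing the correct leaf distance with constant probability) also matches. However, there is a genuine gap in how you set up the CSP and, correspondingly, in the soundness analysis.

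By fixing a complete binary tree $T_0$ with leaf set $L = \{0,1\}^d$ and treating a solution as an assignment $\sigma : [n] \to L$, you restrict the feasible ultrametrics to those realizable in $T_0$. But \maxumvd allows an arbitrary ultrametric, whose underlying tree may have unbounded branching at every level. In the NO case of the reduction you must rule out \emph{every} ultrametric, not only $T_0$-assignments; a cheating solution could, for example, split the root into polynomially many children, which no map into $T_0$ can emulate. Your proposed decomposition of a single long-code function $f : L^R \to L$ into $d$ Boolean output coordinates $f_1, \dots, f_d$ inherits this restriction: it presupposes that the solution lands in a binary tree of depth $d$, so the ``coupled Boolean dictatorship tests'' do not cover the actual solution space. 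The cheating strategy you flag (concentrating in a deep subtree) is one concern, but the more basic one—arbitrary width—is not addressed. The paper circumvents this by allowing the solution tree $\calT$ to be an arbitrary $(T+1)$-level tree, and instead of a single vector-valued function it analyzes, for \emph{every} node $u \in \calT$, the Boolean indicator $\Ind(V(u))$ of the vertices mapped below $u$. Nodes are then partitioned into heavy (mass $\geq 1/T$) and light, and the satisfied weight is bounded separately for light-light, light-heavy, and heavy-heavy edges using the invariance principle for the heavy part and a hypercontractive small-set-expansion bound (plus a direct counting argument) for the light part. This node-by-node indicator bookkeeping, rather than a fixed-arity output decomposition, is what makes the soundness argument go through for unrestricted ultrametrics, and it is the piece your proposal would need to replace.

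One further small mismatch: you phrase completeness and soundness as a fixed constant $c$ versus $s \to 0$, whereas the paper's test has total weight $\Theta(T)$, completeness $\Theta(T)$, and soundness $O(T^{0.9})$, so the inapproximability factor is the unbounded ratio $T^{0.1}$. That is just a normalization choice and not a substantive difference, but it is worth being aware that the soundness value itself does not tend to zero.
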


The \umvd problem can be interpreted as a Maximization CSP, where each vertex in a graph must be assigned to a leaf of a tree and the edge constraints to satisfy should represent the distance between the leaves in the tree. Our proof of Theorem~\ref{thm:max-hard} follows a standard approach for Max-CSPs (see for example~\cite{khot2007optimal}) of reducing from Unique Games by designing a complete and sound dictatorship test where the queries take the form of the constraints. Our proof requires specific analysis techniques due to the tree structure of the alphabet.

\subsection{Dictatorship Test}
Let $T, R$ be fixed positive integers and let $L = 2^T$. In this section, we slightly abuse notation and identify a number in $[L]$ with its binary representation $\{ 0, 1\}^T$; for an integer $\ell \in [L]$ and $i \in [T]$, let $[\ell]_i$ be the $i$th bit of the $T$-bit binary representation of $\ell$. Consider the perfect binary tree $B^*$ that has $T+1$ levels (the root is at level $0$ and the leaves are at level $T$), where each node at level $t$ is labeled with $\{ 0, 1 \}^t$; naturally $b \in \{ 0, 1\}^t$ is the parent of $b_1, b_2 \in \{ 0, 1\}^{t+1}$ if $b$ is the prefix of $b_1$ and $b_2$. Let each edge of $B$ have distance $1/2$, and for $i, j \in [L]$, let $d_B(i, j)$ be the distance between the leaves $i$ and $j$ in $B$. Note that it is $T - t + 1$ where $t$ is the smallest number such that $[i]_t \neq [j]_t$.  

Our dictatorship test is an instance of \umvd parameterized by $T$ and $R$ such that the set of vertices is $[L]^R$ and each edge $(u, v)$ has a specified distance $x(u, v) \in [T]$ and a weight $w(u, v) \geq 0$. It may happen that one $(u, v)$ has multiple edges with different distances. We use $(u,v)_t$ to denote the edge $(u, v)$ with distance $t \in [T]$. 

Let $\Omega = [L]$. For each $t \in [T]$, let $\mu_t$ be the distribution on $\Omega \times \Omega$ defined as follows:

\begin{itemize}
    \item With probability $1/2$, sample $y$ and $z$ independently from $[L]$. 
        
    \item With probability $1/2$, sample $y, z \in [L]$ uniformly from the distribution such that $[y]_j = [z]_j$ for $j = 1, \dots, T - t$ and $[y]_{T - t + 1} \neq [z]_{T - t + 1}$. (E.g., if $t = 1$, the possible outcomes are $(u_i, v_i)$ is $(1, 2), (2, 1), (3, 4), (4, 3), \dots$.) Note that $d_B(y, z) = t$ by design. 
\end{itemize}
Note that $\mu_t$ can be also interpreted as a function from $\Omega \times \Omega \to [0, 1]$ that outputs the probability of each possible pair. Also let $\mu_t^{\otimes R}$ be the product distribution on $(\Omega \times \Omega)^R$ so that for any $y, z \in (\Omega \times \Omega)^R$, $\mu_t^{\otimes R} (y, z) 
= \prod_{i=1}^R \mu_t(y_i, z_i)$. 
For every $y, z \in (\Omega \times \Omega)^R$, the weight of $(y,z)_t$ is defined to be $\mu_t^{\otimes R}(y, z)$, which is the probability of sampling $(y, z)$ under the distribution $\mu_t^{\otimes R}$. 

A solution to a \umvd instance on $L^R$ vertices and with distances in $[T]$ consists of an underlying tree $\mathcal T$ and a map $f: [L]^R \rightarrow \mathcal T$. A \emph{dictator} solution is a map $f:[L]^R \rightarrow [L]$ such that there exists $i$ in $[R]$ such that for any $y$ in $[L]^R$, $f(y)=y_i$, where we identify $[L]$ with the perfect binary tree $B$ as explained above. We now prove that the dictatorship test is complete and sound, i.e., that dictator solutions satisfy a large portion of the weights, while any solution where no coordinate has high influence satisfies a small portion of the weight.

\paragraph{Completeness.} 
\begin{lemma}\label{lem:complete}
The total weight of the edges satisfied by a dictator solution is at least $1/2$.
\end{lemma}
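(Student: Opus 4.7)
The plan is to analyze a dictator $f(y) = y_i$ level by level and show that at every level $t \in [T]$ at least half of the weight at that level is satisfied. The key observation is that since $f$ maps $[L]^R$ into the leaves of the perfect binary tree $B$, the ultrametric distance that the dictator assigns to a pair $(y, z)$ is exactly the tree distance $d_B(y_i, z_i)$. Consequently, the edge $(y, z)_t$ is satisfied precisely when $d_B(y_i, z_i) = t$.

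I will then exploit the product structure of $\mu_t^{\otimes R}$: its marginal on the $i$-th coordinate of $(y, z)$ is exactly $\mu_t$. This reduces the total satisfied weight at level $t$ to a one-dimensional computation,
\begin{equation*}
\sum_{y, z} \mu_t^{\otimes R}(y, z)\cdot \mathbf{1}[d_B(y_i, z_i) = t] \;=\; \Pr_{(a, b) \sim \mu_t}[d_B(a, b) = t].
\end{equation*}
By the definition of $\mu_t$, with probability $1/2$ the pair $(a, b)$ falls into its second case, where $a$ and $b$ agree on the first $T - t$ bits and differ at bit $T - t + 1$; using the formula $d_B(\cdot, \cdot) = T - t + 1$ (where $t$ is the index of the first differing bit) stated just before the lemma, this event is exactly $d_B(a, b) = t$. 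Hence the probability above is at least $1/2$, uniformly in $t$.

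Finally, since $\mu_t^{\otimes R}$ is a probability distribution on $(\Omega \times \Omega)^R$, the total weight of the edges at level $t$ equals $1$, so summing the bound over $t \in [T]$ gives a total satisfied weight of at least $T/2$, i.e. at least half of the total weight $T$, which yields the claim. The proof is essentially a direct calculation; the only non-trivial point — and the one worth highlighting — is recognizing that the marginal of $\mu_t^{\otimes R}$ on the dictator coordinate is $\mu_t$, so that the alignment event built into $\mu_t$ is transferred verbatim to the dictator's satisfaction condition.
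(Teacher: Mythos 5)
Your proof is correct and follows essentially the same approach as the paper's (very terse) two-sentence argument: the marginal of $\mu_t^{\otimes R}$ on the dictator coordinate is $\mu_t$, whose second branch (probability $1/2$) forces $d_B(y_i,z_i)=t$ exactly, so each level contributes satisfied weight at least $1/2$. Your added observation that the total weight is $T$ (each $\mu_t^{\otimes R}$ normalizes to $1$) and hence the dictator satisfies at least a $1/2$ fraction, i.e.\ weight at least $T/2$, is the correct reading of the lemma and matches how it is invoked later in the completeness step of the Unique Games reduction (which claims weight at least $T(1-2\alpha)/2$).
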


%Fix any $i \in [R]$ and consider the dictator solution for \umvd where the underlying tree is $B$ and each vertex $y \in [L]^R$ is mapped to the leaf $y_i \in [L]$ of $B$. 

\begin{proof}
By construction, for each $t \in [T]$, the sampled edge $(y, z)_t$ from $\mu_t^{\otimes R}$ satisfies that $d_B(y_i, z_i) = t$ with probability at least $1/2$. Therefore, the total weight of the edges satisfied by this solution is at least $1/2$. 
\end{proof}

\subsection{Fourier analysis preliminaries}
To analyze the soundness of the test, we use the following standard tools from Gaussian bounds for correlated functions from Mossel~\cite{Mossel10}.
We define the correlation between two correlated spaces.
\begin{definition}
Given a distribution $\mu$ on $\Omega_1 \times \Omega_2$, the correlation $\rho(\Omega_1, \Omega_2; \mu)$ is defined as
\[
\rho(\Omega_1, \Omega_2; \mu) = \sup \left\{ \mathsf{Cov}[f, g] : f : \Omega_1 \rightarrow \mathbb{R}, g : \Omega_2 \rightarrow \mathbb{R}, \mathsf{Var}[f] = \mathsf{Var}[g] = 1 \right\}.
\]
\end{definition}
In our dictatorship test, for each $t \in [T]$, each $\mu_t$ samples $y, z \in [L]$ independently with probability $1/2$. Corollary 2.18 of~\cite{wenner2013circumventing} ensures that the correlation is not too high in this case. 
For the rest of this section, let $\rho := \sqrt{1/2}$. 

\begin{lemma} [Corollary 2.18 of~\cite{wenner2013circumventing}]
For any $t \in [T]$, $\rho(\Omega, \Omega, \mu_t) \leq \sqrt{1/2}$. 
\end{lemma}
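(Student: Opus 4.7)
The plan is to verify the lemma by noticing that $\mu_t$ is, by construction, a convex combination of the independent product distribution on $\Omega \times \Omega$ (with weight $1/2$) and another distribution (with weight $1/2$), and then to invoke the standard ``noisy mixing'' bound on correlation.

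First I would recall the statement of Corollary 2.18 of Wenner~\cite{wenner2013circumventing} (or equivalently the classical fact, see e.g.\ Mossel~\cite{Mossel10}): if a distribution $\mu$ on $\Omega_1 \times \Omega_2$ can be written as $\mu = \alpha \cdot (\mu_1 \otimes \mu_2) + (1-\alpha) \cdot \nu$ for some distribution $\nu$ on $\Omega_1 \times \Omega_2$ whose marginals coincide with $\mu_1$ on $\Omega_1$ and $\mu_2$ on $\Omega_2$, then $\rho(\Omega_1, \Omega_2; \mu) \leq \sqrt{1-\alpha}$.

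Second, I would check that $\mu_t$ fits this framework with $\alpha = 1/2$. By definition, $\mu_t = \tfrac{1}{2}\mu_{\text{ind}} + \tfrac{1}{2}\mu_t'$, where $\mu_{\text{ind}}$ is the product of two uniform distributions on $[L]$, and $\mu_t'$ is the distribution on pairs $(y,z)$ that agree on the first $T-t$ coordinates and differ on the $(T-t+1)$-th coordinate (with the remaining coordinates uniform and independent). One checks that the marginals of $\mu_t'$ on each factor are uniform on $[L]$: indeed, for any fixed $y \in [L]$, the number of valid $z$ is the same, so the marginal distribution of $y$ (and symmetrically of $z$) under $\mu_t'$ is uniform. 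Hence the marginals of $\mu_t$ are uniform, matching those of $\mu_{\text{ind}}$.

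Applying the bound with $\alpha = 1/2$ gives $\rho(\Omega, \Omega; \mu_t) \leq \sqrt{1/2}$, as desired. The only potentially subtle step is verifying the marginal condition, which is what allows us to identify the ``independent part'' of the mixture; but since the marginals of $\mu_t'$ are uniform by symmetry over the free coordinates, this is immediate. No further computation is required, and the bound is uniform in $t \in [T]$.
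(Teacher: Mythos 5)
Your proposal is correct and takes essentially the same approach as the paper: the paper simply cites Corollary 2.18 of Wenner, noting that $\mu_t$ samples $y,z$ independently with probability $1/2$, and you make the same observation (the $\tfrac12$-weight mixture with the independent product) and then invoke the same bound $\rho \le \sqrt{1-\alpha}$. Your added verification that the marginal of $\mu_t'$ on each factor is uniform (so the marginals of $\mu_t$ match the independent component) is exactly the hypothesis the citation implicitly relies on, and checking it is a good habit, but the argument is the one the paper has in mind.
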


\begin{definition} [\cite{Mossel10}]
For any function $F : [L]^R \to\R$, the  {\em Efron-Stein} decomposition is given by 
\[
f(y) = \sum_{S \subseteq [R]} f_S (y)
\]
where the functions $f_S$ satisfy 
\begin{itemize}
    \item $f_S$ only depends on $y_S$, the restriction of $y$ to the coordinates of $S$.
    \item For all $S \not\subseteq S'$ and all $z_{S'}$, $\E_y[f_S(y) | y_{S'} = z_{S'}] = 0$. 
\end{itemize}
\end{definition}
Based on the Efron-Stein decomposition, we can define (low-degree) influences of a function. For a function $F : [L]^R \to \R$ and $p \geq 1$, let $\| F \|_p := \E[|F(y)|^p]^{1/p}$
\begin{definition} [\cite{Mossel10}]\label{def:influence}
For any function $F : [L]^R \to \R$,
its {\em $i$th influence} is defined as 
\[
\Inf_i := \sum_{S : i \in S} \| f_S \|_2^2.
\]
Its {\em $i$th degree-$d$ influence} is defined as 
\[
\Inf^{\leq d}_i := \sum_{S : i \in S, |S| \leq d} \| f_S \|_2^2,
\]
\end{definition}

For $a, b \in [0, 1]$ and $\sigma \in [0, 1]$, let $\Gamma_{\sigma}(a, b) := \Pr[g_1 \leq \Phi^{-1}(a), g_2 \leq \Phi^{-1}(b)]$ where $g_1, g_2$ are $\sigma$-correlated standard Gaussian variables and $\Phi$ denotes the cumulative density function of a standard Gaussian. (E.g., $\Gamma_{\sigma}(a, 1) = a$ for any $\sigma$ and $\Gamma_0(a, b) = ab$.) We crucially use the following invariance principle applied to our dictatorship test. 

\begin{theorem} [\cite{Mossel10}]
For any $\eps > 0$ there exist $d \in \N$ and $\tau > 0$ such that the following is true. Let $F, G : [L]^R \to [0, 1]$ and fix any $t \in [T]$. If $\min(\Inf^{\leq d}_i[F], \Inf^{\leq d}_i[G]) \leq \tau$ for every $i \in [R]$, \[
\E_{(y, z) \in \mu_t^{\otimes R}}[F(y)G(z)] \leq \Gamma_{\rho}(\E[F(y)], \E[F(z)]) + \eps.
\]
\label{thm:invariance}
\end{theorem}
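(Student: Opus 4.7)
Since the statement is essentially an instantiation of Mossel's invariance principle~\cite{Mossel10}, my plan is to reduce the problem to two ingredients that can be combined in a standard way: (i) a noise-operator bound expressing $\E_{(y,z) \sim \mu_t^{\otimes R}}[F(y) G(z)]$ in terms of Efron--Stein components whose high-frequency part is damped by $\rho$, and (ii) a Gaussian stability bound giving the $\Gamma_\rho$ upper envelope. The key parameter is the correlation $\rho(\Omega, \Omega; \mu_t) \leq \sqrt{1/2}$ of the correlated space, which has already been established.

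First, I would decompose $F = \sum_S F_S$ and $G = \sum_S G_S$ using the Efron--Stein decomposition with respect to the product measure on $[L]^R$. Because $\mu_t$ places at least $1/2$ of its mass on the independent product, the correlated expectation factors through the Markov operator $T_\mu$ associated with $\mu_t$, and standard estimates bound the contribution of degree-$>d$ terms by $\rho^d (\|F\|_2^2 + \|G\|_2^2)/2 \leq \rho^d$, which can be made smaller than $\eps/3$ by taking $d = d(\eps)$ large enough. The remaining degree-$\leq d$ part depends only on the low-degree Efron--Stein components $F_S, G_S$ for $|S| \leq d$.

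Next, I would apply the multivariate invariance principle (Theorem 3.6 of~\cite{Mossel10}) to this low-degree truncation to replace the discrete variables $(y, z)$ drawn from $\mu_t^{\otimes R}$ by an ensemble of $\rho$-correlated Gaussian variables. The hypothesis $\min(\Inf^{\leq d}_i[F], \Inf^{\leq d}_i[G]) \leq \tau$ for all $i$ is exactly what is needed for the error term in the invariance principle to be $O(\eps)$, provided $\tau$ is chosen small enough as a function of $\eps$ and $d$. After this replacement, I would smoothly truncate to bring the Gaussian analogues $\tilde F, \tilde G$ back into $[0,1]$ (using the fact that $F, G$ were in $[0,1]$ to start with), absorbing the mollification error into $\eps$. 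Finally, on the Gaussian side, Borell's noise stability inequality gives
\[
\E[\tilde F(g_1)\, \tilde G(g_2)] \leq \Gamma_\rho\bigl(\E[\tilde F], \E[\tilde G]\bigr),
\]
and expectations are preserved up to $O(\eps)$, yielding the stated bound.

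The main technical obstacle I anticipate is verifying that $\mu_t$ fits the hypotheses of Mossel's invariance principle for all $t \in [T]$, and in particular that the relevant correlation constant between the two marginals is bounded by $\rho = \sqrt{1/2}$ uniformly in $t$; this is where the construction of $\mu_t$ as a mixture placing weight $1/2$ on the independent product is used, and the bound is exactly what the previously cited Corollary 2.18 of~\cite{wenner2013circumventing} delivers. Once this is in place, the remainder of the argument is the standard template (decomposition, invariance, Borell), and the proof amounts to assembling these pieces and choosing $d, \tau$ to absorb all error terms into a single $\eps$.
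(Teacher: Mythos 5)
The paper does not prove this theorem; it cites it verbatim from Mossel~\cite{Mossel10} (it is one of the standard ``Gaussian bounds for noise correlation'' statements used as a black box in Unique Games reductions). There is therefore no internal proof to compare against. That said, your sketch does track the standard route by which this result is established in~\cite{Mossel10}: decompose via Efron--Stein, kill the degree-$> d$ tail using the correlation bound $\rho(\Omega,\Omega;\mu_t)\leq\sqrt{1/2}$ (which you correctly observe comes from the fact that $\mu_t$ puts weight $1/2$ on the independent product, so that the relevant bilinear form decays like $\rho^{|S|}$ on the $S$-th Efron--Stein component), apply the multilinear invariance principle to the low-degree truncation under the low-influence hypothesis, mollify back into $[0,1]$, and finish with Borell's Gaussian noise-stability inequality. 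This is exactly the template behind Theorem~1.14 / Proposition~1.15 of Mossel's paper, and the choice of $d$ then $\tau$ to absorb all error terms into a single $\eps$ is the usual bookkeeping. In short, your proposal is a faithful high-level reconstruction of the proof that the paper outsources to the reference; the only thing to flag is that the paper itself makes no attempt to reprove it, so you are reproving a cited result rather than matching an argument in the text.
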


The following concavity will be useful in our analysis. 

\begin{lemma}[\cite{lee2015hardness}]
For fixed $a$, $\Gamma_{\rho}(a, b)$ is concave in $b$.
\label{lem:gamma}
\end{lemma}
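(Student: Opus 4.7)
The plan is to express $\Gamma_\rho(a,b)$ as an iterated integral over a pair of correlated Gaussians, compute $\frac{d}{db}\Gamma_\rho(a,b)$ explicitly, and observe that the resulting expression is manifestly monotone non-increasing in $b$, which is equivalent to concavity.

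Concretely, let $\alpha=\Phi^{-1}(a)$ and $\beta=\Phi^{-1}(b)$, and decompose a $\rho$-correlated pair as $g_1=\rho g_2+\sqrt{1-\rho^2}\,h$ with $h\sim N(0,1)$ independent of $g_2$. Integrating first over $g_2=y$ and then over $h$, one obtains
\[
\Gamma_\rho(a,b)\;=\;\int_{-\infty}^{\beta}\phi(y)\,\Phi\!\left(\tfrac{\alpha-\rho y}{\sqrt{1-\rho^2}}\right)\,dy.
\]
Differentiating under the integral in $\beta$ and using $\frac{d\beta}{db}=\frac{1}{\phi(\beta)}$ (since $b=\Phi(\beta)$), the Jacobian factors $\phi(\beta)$ cancel and I obtain the clean expression
\[
\frac{d}{db}\Gamma_\rho(a,b)\;=\;\Phi\!\left(\tfrac{\alpha-\rho\beta}{\sqrt{1-\rho^2}}\right).
\]

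The conclusion is immediate: as $b$ increases, $\beta=\Phi^{-1}(b)$ increases, so for $\rho\in(0,1)$ the argument $\frac{\alpha-\rho\beta}{\sqrt{1-\rho^2}}$ decreases, and hence $\frac{d}{db}\Gamma_\rho(a,b)$ is non-increasing in $b$. This gives concavity in $b$. For the boundary cases, $\rho=0$ gives $\Gamma_0(a,b)=ab$ which is linear (hence concave) in $b$, and $\rho=1$ gives $\Gamma_1(a,b)=\min(a,b)$ which is the minimum of two concave (affine) functions and hence concave; so the lemma holds for all $\rho\in[0,1]$.

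There is no real obstacle here: the only thing one might worry about is justifying the exchange of differentiation and integration, but the integrand is dominated by $\phi(y)$ uniformly in $\beta$ so the standard dominated convergence argument applies. Since the conclusion only requires the sign of the second derivative, no additional analytic machinery is needed.
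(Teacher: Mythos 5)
Your proof is correct, and it fills a gap the paper itself leaves open: Lemma~\ref{lem:gamma} is quoted there with only a citation to an external reference and no proof is given. Your derivation is the standard one and it checks out: conditioning on $g_2$ via the decomposition $g_1=\rho g_2+\sqrt{1-\rho^2}\,h$ yields
\[
\Gamma_\rho(a,b)=\int_{-\infty}^{\Phi^{-1}(b)}\phi(y)\,\Phi\!\left(\frac{\Phi^{-1}(a)-\rho y}{\sqrt{1-\rho^2}}\right)dy,
\]
and differentiating the upper limit with respect to $b$ (chain rule on $\beta=\Phi^{-1}(b)$) gives exactly the cancellation you claim, so $\tfrac{d}{db}\Gamma_\rho(a,b)=\Phi\!\left(\frac{\Phi^{-1}(a)-\rho\Phi^{-1}(b)}{\sqrt{1-\rho^2}}\right)$, which is non-increasing in $b$ for $\rho\in(0,1)$. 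Your treatment of the endpoints $\rho\in\{0,1\}$ is also right ($ab$ and $\min(a,b)$ respectively). One small remark: the closing paragraph about dominated convergence is unnecessary here, since $\beta$ appears only as a limit of integration, not inside the integrand, so the fundamental theorem of calculus applies directly with no interchange of limits needed. Also worth noting (though not a gap, since the paper fixes $\rho=\sqrt{1/2}>0$ and defines $\Gamma_\sigma$ only for $\sigma\in[0,1]$): your derivative formula shows the statement would fail for negative $\rho$, so the implicit restriction to $\rho\ge 0$ is essential.
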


We also use the following estimate on $\Gamma_\rho$. 

\begin{lemma}[\cite{khot2007optimal}]
$\Gamma_{\rho}(a, a) \leq O(a^{1.1})$. 
\label{lem:gamma_same}
\end{lemma}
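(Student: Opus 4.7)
The plan is to recognize $\Gamma_\rho(a,a)$ as the orthant probability $\Pr[g_1 \le \tau,\, g_2 \le \tau]$ for $\rho$-correlated standard Gaussians, where $\tau = \Phi^{-1}(a)$. For $a \ge 1/2$ the claim is trivial since $\Gamma_\rho(a,a) \le 1 = O(a^{1.1})$, so I would fix attention on the regime $a < 1/2$ (hence $\tau < 0$) and drive an asymptotic estimate as $\tau \to -\infty$.

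The key step is a standard Laplace-type analysis of the bivariate orthant probability. Writing $g_2 = \rho g_1 + \sqrt{1-\rho^2}\,h$ with $h \perp g_1$ and conditioning on $g_1$,
\[
\Gamma_\rho(a,a) \;=\; \int_{-\infty}^{\tau} \varphi(u)\, \Phi\!\left(\frac{\tau - \rho u}{\sqrt{1-\rho^2}}\right) du.
\]
The exponent inside the integrand is maximized at $u = \tau$, where it evaluates to $-\tau^2/(1+\rho)$, and the usual bivariate Mills' ratio estimate (combined with the univariate tail bound for $\Phi$) gives
\[
\Gamma_\rho(a,a) \;\le\; C\,|\tau|^{-2}\, \exp\!\left(-\frac{\tau^2}{1+\rho}\right),
\]
for a constant $C = C(\rho)$. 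Coupling this with $a = \Phi(\tau) = \Theta\bigl(|\tau|^{-1} e^{-\tau^2/2}\bigr)$ turns the exponential into a power of $a$: we get $\Gamma_\rho(a,a) \le C'\, |\tau|^{O(1)}\, a^{2/(1+\rho)}$.

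Now I plug in $\rho = \sqrt{1/2}$ and compute the exponent: $\frac{2}{1+\rho} = \frac{2}{1 + 2^{-1/2}} = 4 - 2\sqrt{2} \approx 1.1716$. Since $|\tau| = \Theta(\sqrt{\log(1/a)})$, the prefactor $|\tau|^{O(1)}$ is only polylogarithmic in $1/a$ and can be absorbed into a tiny decrement of the exponent. Quantitatively, fix any $\alpha \in (1.1,\, 4 - 2\sqrt{2})$ (e.g.\ $\alpha = 1.15$); then $|\tau|^{O(1)} a^{2/(1+\rho) - \alpha} \to 0$ as $a \to 0$, so this quantity is bounded by some absolute constant on $(0, 1/2]$, yielding $\Gamma_\rho(a,a) \le O(a^{\alpha}) \subseteq O(a^{1.1})$.

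The only real obstacle is the bookkeeping of the polylogarithmic $|\tau|^{O(1)}$ factor, and the trick in the previous paragraph (trading a bit of exponent for a constant) handles it cleanly; the gap $4 - 2\sqrt{2} - 1.1 > 0.07$ leaves ample room. If one prefers to avoid the Mills' ratio computation, an equivalent route is to note that $\Gamma_\rho(a,a)$ is the noise stability of a threshold function at Gaussian bias $a$, and apply the standard noise stability estimate for half-spaces (as in~\cite{khot2007optimal}), which gives the same exponent $2/(1+\rho)$ and the same conclusion.
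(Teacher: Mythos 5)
The paper does not prove this lemma; it cites it directly from~\cite{khot2007optimal}, where it appears (up to reformulation of noise stability in terms of $\Gamma_\rho$) as the statement that $\Gamma_\rho(\mu,\mu)=\mu^{2/(1+\rho)-o(1)}$ as $\mu\to0$. Your derivation is a correct, self-contained reconstruction of that result by the standard route: bivariate Gaussian orthant asymptotics (Savage/Mills-ratio type bound giving the corner exponent $-\tau^2/(1+\rho)$), then translating $e^{-\tau^2/2}=\Theta(|\tau| a)$ to convert the exponential into the power $a^{2/(1+\rho)}$, and finally plugging in $\rho=\sqrt{1/2}$ to get exponent $4-2\sqrt 2\approx 1.172>1.1$.

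One small simplification you overlooked, which makes your bookkeeping step unnecessary: the prefactor in your bound is not generically $|\tau|^{O(1)}$ but specifically $|\tau|^{-2+2/(1+\rho)}$, and since $2/(1+\rho)<2$ for all $\rho>0$ this exponent is \emph{negative}, so the prefactor is already $O(1)$ on $|\tau|\geq 1$ without trading any of the exponent on $a$. (Your trade is still a valid fallback, and you have $\approx 0.07$ of slack, so either way the conclusion holds.) The only thing worth tightening for full rigor is the remark that the Mills-ratio estimate is a two-sided asymptotic as $\tau\to-\infty$ rather than a global inequality; but as you implicitly note, the function $a\mapsto\Gamma_\rho(a,a)/a^{1.1}$ is continuous on $(0,1]$ and tends to $0$ at the origin by the asymptotic, hence bounded, which is all that $O(\cdot)$ requires.
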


Finally, for $F$ with very small $\E[F]$, we use the following basic hypercontractivity lemma that does not incur an additive loss. 

\begin{lemma}[{\cite[Chapter~10]{o2014analysis}}]
There exists $C > 0$ that depends on $L$ such that for any $t \in [T]$ and $F : [L]^R \to [0, 1]$, 
\[
\E_{(y, z) \in \mu_t^{\otimes r}} [F(y) F(z)] \leq \| F \|_2 \|F\|_{2 - C}.
\]
\label{lem:hyper}
\end{lemma}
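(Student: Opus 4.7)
The plan is to obtain this inequality as a special case of the two-function hypercontractive inequality for correlated product spaces. The key inputs are already in hand: by the preceding lemma (Corollary~2.18 of~\cite{wenner2013circumventing}) the maximal correlation $\rho(\Omega, \Omega; \mu_t)$ is at most $\sqrt{1/2} < 1$; and every pair $(y, z) \in \Omega \times \Omega$ that lies in the support of $\mu_t$ receives mass at least $1/(2L^2)$, since the ``uniform half'' of the definition of $\mu_t$ already puts $1/(2L^2)$ on every pair. Thus $\mu_t$ satisfies both hypotheses needed to invoke a generalized (non-uniform) Bonami--Beckner / Wolff-style hypercontractive inequality on $\Omega^R \times \Omega^R$: bounded maximal correlation, and atoms bounded below by a constant depending only on $L$.

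I would then invoke the two-function hypercontractive inequality from Chapter~10 of~\cite{o2014analysis}: for non-negative $F, G : \Omega^R \to \R$ and exponents $p, q \geq 1$ satisfying $(p-1)(q-1) \geq \rho_\star^2$, where $\rho_\star \in [0,1)$ is an effective noise parameter depending only on $\rho(\Omega, \Omega; \mu_t)$ and the minimum atom probability of $\mu_t$, one has
\[
\E_{(y,z) \sim \mu_t^{\otimes R}}[F(y) G(z)] \leq \|F\|_p \|G\|_q.
\]
Both of those quantities are bounded purely in terms of $L$, so we may take $\rho_\star = 1 - c$ for some constant $c = c(L) \in (0, 1)$. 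Setting $G = F$, $p = 2$, and $q = 1 + \rho_\star^2$, the constraint $(p-1)(q-1) \geq \rho_\star^2$ is met with equality, and writing $C := 1 - \rho_\star^2 \in (0, 1)$ produces exactly
\[
\E_{(y,z) \sim \mu_t^{\otimes R}}[F(y) F(z)] \leq \|F\|_2 \|F\|_{2 - C},
\]
as claimed; the assumption $F \in [0, 1]$ guarantees that all of the $L^p$ norms involved are finite.

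The main obstacle is locating and stating correctly the appropriate quantitative form of hypercontractivity for non-uniform correlated product spaces. Unlike the symmetric hypercube, where the noise parameter can be read off directly from the correlation, for a distribution like $\mu_t$ the effective $\rho_\star$ also degrades with the reciprocal of the minimum atom mass---which is precisely why the constant $C$ in the conclusion is allowed to depend on $L$. Once one has committed to the generalized form given in~\cite[Chapter~10]{o2014analysis}, the remainder of the argument is a routine choice of parameters.
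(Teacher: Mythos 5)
The paper does not supply its own proof of this lemma; it cites it as a black-box fact from Chapter 10 of O'Donnell's textbook, so there is no internal argument to compare against. Your sketch identifies exactly the two ingredients that make the cited two-function hypercontractive inequality applicable here (the maximal-correlation bound $\rho(\Omega,\Omega;\mu_t)\le\sqrt{1/2}$ from the preceding lemma, and the atom-mass lower bound $\ge 1/(2L^2)$ coming from the uniform half of $\mu_t$), and your parameter choice $p=2$, $q=1+\rho_\star^2$, $C=1-\rho_\star^2$ is the standard way to extract the claimed $\|F\|_2\|F\|_{2-C}$ bound; the only imprecision is that you gesture at an ``effective $\rho_\star$'' without pinning down the exact quantitative dependence of $\rho_\star$ on the atom mass, which is precisely the content of the general (non-uniform) hypercontractivity theorem in O'Donnell and is why $C$ is allowed to depend on $L$.
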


\subsection{Soundness of Dictatorship Test}
Given $T$ and $L = 2^T$ from the construction of the dictatorship test, let $C > 0$ be a constant from Lemma~\ref{lem:hyper}, $\delta := 1/T, \eta := (1/T)^{2(2-C)/C}, \eps := \min(\eta, \delta)/T$ and use it to get $d, \tau > 0$ from Theorem~\ref{thm:invariance}.

Fix any solution to \umvd. Since the input distances are integers between $1$ and $T$, we can assume that the underlying tree $\calT$ for the solution is a level-$(T+1)$ tree (with possibly a polynomial number of nodes) where each edge has length $1/2$ and each vertex in $[L]^R$ is assigned to a leaf of $\calT$. 

For a node $u \in \calT$, let $V(u) \subseteq [L]^R$ be the set of vertices assigned to the descendants of $u$. For a set of nodes $U \subseteq \calT$, let $V(U) := \cup_{u \in U} V(u)$. For $V' \subseteq [L]^R$, let $\nu(V') = |V'| / L^R$ and for $u \in \calT$ and $U \subseteq \calT$, let $\nu(u) := \nu(V(u)), \nu(U) := \nu(V(U))$. 

For $V' \subseteq [L]^R$, let $\Ind(V') : [L]^R \to \{ 0, 1\}$ be the indicator function of $V'$. The following lemma shows the soundness guarantee of the dictatorship test. 

\begin{lemma}\label{lem:soundness}
Unless there exists $u \in \calT$ and $i \in [R]$ such that $\Inf^{\leq d}_i[\Ind(V(u))] > \tau$, the total weight of the satisfied edges is at most $O(T^{0.9})$.
\end{lemma}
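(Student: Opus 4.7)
The plan is the following. For each $u \in \calT$, set $F_u := \Ind(V(u))$ and $\phi_t(u) := \E_{(y,z) \sim \mu_t^{\otimes R}}[F_u(y) F_u(z)]$. Since a pair $(y,z)$ contributes to the satisfied weight at level $t$ iff $f(y)$ and $f(z)$ share a depth-$(T-t)$ ancestor in $\calT$ but split at depth $(T-t+1)$,
\[
S_t = \sum_{u: \mathrm{depth}(u) = T-t} \Bigl(\phi_t(u) - \sum_{u' \text{ child of } u} \phi_t(u')\Bigr).
\]

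Under the low-influence hypothesis, I would apply Theorem~\ref{thm:invariance} twice---once to $F_u$ and once to $1 - F_u$, which has the same influences---and combine with the symmetry identity $\Gamma_\rho(1-a, 1-b) = 1 - a - b + \Gamma_\rho(a, b)$ to obtain matching two-sided bounds
\[
\Gamma_\rho(\nu(u), \nu(u)) - \eps \;\leq\; \phi_t(u) \;\leq\; \Gamma_\rho(\nu(u), \nu(u)) + \eps
\]
for every $u$ with $\nu(u) \geq \delta$, where $\rho \leq \sqrt{1/2}$. For nodes with $\nu(u) < \delta$, the additive $\eps$ would accumulate uncontrollably over the (possibly exponentially) many tiny subtrees, so I would instead use Lemma~\ref{lem:hyper} to obtain $\phi_t(u) \leq \nu(u)^{1+\beta}$ for a constant $\beta > 0$ (one may take $\beta = 1/6$ via standard hypercontractivity for $\rho = \sqrt{1/2}$).

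Combining these per-node bounds and using that at most $1/\delta$ nodes at each depth are large while the total small-node mass is at most one, for each depth $s = T - t$
\[
S_t \leq \bigl(G_s^{\mathrm{big}} - G_{s+1}^{\mathrm{big}}\bigr) + O(\delta^\beta) + O(\eps/\delta),
\]
where $G_s^{\mathrm{big}} := \sum_{u: \mathrm{depth}(u) = s,\, \nu(u) \geq \delta} \Gamma_\rho(\nu(u), \nu(u))$. Summing over $t = 1, \ldots, T$, the $G_s^{\mathrm{big}}$ telescopes to at most $G_0^{\mathrm{big}} - G_T^{\mathrm{big}} \leq \Gamma_\rho(1,1) = 1$, giving
\[
\sum_{t=1}^T S_t \;\leq\; O(1) + O(T\, \delta^\beta) + O(T\, \eps/\delta),
\]
which, with the parameter choices $\delta = 1/T$ and $\eps = \min(\eta, \delta)/T$ from the start of the section and $\beta \geq 1/6 > 1/10$, is $O(T^{0.9})$.

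The hard part will be the two-sided invariance: applying Theorem~\ref{thm:invariance} to $1 - F_u$ is routine, but I need to invoke the symmetry identity carefully to convert the resulting upper bound on $\E[(1-F_u)(1-F_u)]$ into the desired lower bound on $\phi_t(u)$ with the same $\eps$-slack. A secondary subtlety is handling the interplay between the invariance-based lower bound at large children and the hypercontractivity-based upper bound at small children, so that boundary effects around the big/small threshold $\delta$ do not spoil the telescoping of $G_s^{\mathrm{big}}$ across depths.
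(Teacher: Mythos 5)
Your decomposition identity for $S_t$ is correct, and your treatment of small (light) nodes via hypercontractivity matches the paper. But the telescoping argument hinges on a two-sided bound
\[
\Gamma_\rho(\nu(u),\nu(u)) - \eps \;\leq\; \phi_t(u) \;\leq\; \Gamma_\rho(\nu(u),\nu(u)) + \eps,
\]
and the lower half does \emph{not} follow from the route you describe. Applying Theorem~\ref{thm:invariance} to $1-F_u$ gives
$\E[(1-F_u)(y)(1-F_u)(z)] \leq \Gamma_\rho(1-\nu(u),1-\nu(u)) + \eps$, and after substituting $\E[(1-F_u)(1-F_u)] = 1 - 2\nu(u) + \phi_t(u)$ and the symmetry identity $\Gamma_\rho(1-a,1-a) = 1 - 2a + \Gamma_\rho(a,a)$, the two sides of $1 - 2\nu(u)$ cancel and you recover exactly the \emph{same upper bound} $\phi_t(u) \leq \Gamma_\rho(\nu(u),\nu(u)) + \eps$. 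The genuine lower bound one can extract (by applying the invariance upper bound to the cross-term $\E[F_u(y)(1-F_u)(z)] = \nu(u) - \phi_t(u)$, or by Mossel's ``Majority is least stable'' direction) is $\phi_t(u) \geq \Gamma_{-\rho}(\nu(u),\nu(u)) - \eps$, with $-\rho$ instead of $\rho$. These do not match: for instance $\Gamma_{\sqrt{1/2}}(1/2,1/2) - \Gamma_{-\sqrt{1/2}}(1/2,1/2) = 1/4$. When you substitute the $\Gamma_{-\rho}$ lower bound for the big children into your telescope, the $G_s^{\mathrm{big}}$ terms do not cancel; you instead pick up a per-depth drift of $\sum_{\text{big } u} \left[\Gamma_\rho(\nu(u),\nu(u)) - \Gamma_{-\rho}(\nu(u),\nu(u))\right]$. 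If $\calT$ contains a chain of nodes each with $\nu = 1/2$ (say, all its descendants are funneled into one deep leaf), this drift is $\Omega(1)$ at each depth, so the telescoped upper bound becomes $\Omega(T)$, not $O(T^{0.9})$.

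The paper sidesteps this entirely by avoiding lower bounds. It classifies satisfied edges by the heaviness of the parent $u$ and the two children $u',u''$: light-heavy pairs are controlled combinatorially (each vertex is responsible for at most one distance level, contributing total weight $\leq 2$), light-light pairs use exactly your invariance/hypercontractivity dichotomy on barely-light nodes, and the heavy-heavy pairs are handled by applying only the \emph{upper} bound of Theorem~\ref{thm:invariance} to $\E[\Ind(V(u'))(y)\,\Ind(V(U_{u'}))(z)]$, where $U_{u'}$ is the set of strictly larger siblings of $u'$, and then using the concavity of $\Gamma_\rho$ (Lemma~\ref{lem:gamma}) in a charging scheme that spreads $\Gamma_\rho(\nu(u'),\nu(U_{u'}))$ over the vertices of $V(u')$. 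That responsible-vertex decomposition is what replaces the missing lower bound in your telescope, and it is the main idea you would need to recover.
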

\begin{proof}
Call a node $u \in \calT$ {\em heavy} if $\nu(u) \geq \delta$ and {\em light} otherwise. Let $C(u)$ be the set of children of $u$. 
For every edge $(y, z)_t$ satisfied by the solution, there exist $u, u', u'' \in \calT$ such that $u$ is on level $T - t$ and $u', u'' \in C(u)$, and $y, z \in V(u)$, $y \in V(u'), z \notin V(u')$, $z \in V(u''), y \notin V(u'')$.
We put $(y,z)_t$ into one of the following three categories. 
\begin{enumerate}
    \item Light edges: $u$ (thus $u'$, $u''$) is light. 
    \item Light-heavy: $u$ is heavy but at least one of $u'$, $u''$ is light. 
    \item Heavy edges: $u'$, $u''$ (thus $u$) are heavy. 
\end{enumerate}
Furthermore, let $y$ be {\em responsible} for $(y, z)_t$ if $\nu(u') < \nu(u'')$ (if $\nu(u') = \nu(u'')$, break the tie with the lexicographic ordering) and let $z$ be responsible otherwise. 

We upper bound the total weight of each category. For a subset $S$ of edges, let the {\em relative weight} of the set be simply the total weight of $S$ divided by $L^R$. Then for any $y \in [L]^R$, the total relative weight of the edges incident on $y$ is $2T$, which is $2$ for each distance.

\paragraph{Light-Heavy edges.}
For each vertex $y \in [L]^R$, let $u \in \calT$ be the lowest heavy node (i.e., at the largest level) such that $y \in V(u)$. Let $\ell$ be $u$'s level. Then the only heavy-light edges for which $y$ is responsible are distance-$(T-\ell)$ edges; if $(y, z)_t$ is a light-heavy edge $y$ is responsible for, we have $t \geq (T - \ell)$ since all strict descendants of $u$ are light, and we also have $t \leq (T - \ell)$, since all ancestors of $u$ are heavy so that if $(y, z)_{t'}$ with $t' > (T - \ell)$ becomes light-heavy, $z$ must be responsible for it. 
Therefore, their total relative weight is at most $2$. Summing over all vertices, the total (non-relative) weight of heavy-light edges is at most $2$. 

\paragraph{Light-light edges.}
Call a node $u \in \calT$ {\em barely light} if it is light but a child of a heavy node. Since any light-light edge is induced by $V(u)$ for some barely light $u$, to bound the total weight of light-light edges, it suffices to bound the total weight of the edges induced by $V(u)$ for some barely light $u$. Note that $\{ V(u) \}_{u :  \mbox{barely light}}$ is a collection of disjoint subsets of $[L]^R$.

Fix a barely light $u \in \calT$ and $t \in T$. The total weight of distance-$t$ edges induced by $V(u)$ is 
$\E_{(y, z) \in \mu_t^{\otimes r}} [\Ind(V(u))(y), \Ind(V(u))(z)]$. We upper bound this quantity in two methods depending on $\nu(u)$. 
\begin{enumerate}
    \item If $\nu(u) \geq \eta$, we apply Theorem~\ref{thm:invariance} so that unless $\Ind(V(u))$ has $i \in [R]$ with $\Inf_i^{\leq d}[\Ind(V(u))] > \tau$, 
    \[
    \E_{(y, z) \in \mu_t^{\otimes r}} [\Ind(V(u))(y), \Ind(V(u))(z)] \leq \Gamma_{\rho}(\nu(u), \nu(u)) + \eps \leq O(\nu(u)^{1.1}) + \eps \leq O(\delta^{0.1} \nu(u)) + \eps,
    \]
    where the second inequality follows from Lemma~\ref{lem:gamma_same}. Note that this case can apply for at most $1/\eta$ barely light nodes. 
    
    \item If $\nu(u) < \eta$, we simply apply Lemma~\ref{lem:hyper} so that regardless of the influences, we have 
    \[
    \E_{(y, z) \in \mu_t^{\otimes r}} [\Ind(V(u))(y), \Ind(V(u))(z)] \leq 
    \| \Ind(V(u)) \|_2 \| \Ind(V(u)) \|_{2 - C}.
    \]
    
    Since $\| \Ind(V(u)) \|_p = \nu(u)^{1/p}$, it can be further upper bounded as
    \[
    \nu(u)^{1/2} \cdot \nu(u)^{1/(2-C)}
    = 
    \nu(u)^{1 + C/(2(2-C))}
    \leq \eta^{C/(2(2-C))} \nu(u). 
    \]
\end{enumerate}

Summing over all barely light vertices and $t \in [T]$, the total weight of light-light edges is at most $T \cdot (\max(\delta^{0.1}, \eta^{C/(2(2-C))}) + \eps/\eta)$. Recalling $\delta = 1/T$, $\eta = (1/T)^{2(2-C)/C}$, and $\eps = \min(\eta, \delta)/T$, it is at most $O(T^{0.9})$ unless some $u \in \calT$ has an influential coordinate. 

\paragraph{Heavy-heavy edges.}
For each heavy $u \in \calT$ and its heavy child $u' \in \calT$, let $U_{u'} = \{ u'' : u'' \mbox{ is a sibling of }u'\mbox{ and } \nu(u'') > \nu(u') \}$, so that if $(y, z)_t$ is a heavy-heavy edge with $y \in V(u')$ and $z \in V(U_{u'})$, $y$ is responsible for it. Let $t$ be $T$ minus the level of $u$. Theorem~\ref{thm:invariance} and Lemma~\ref{lem:gamma} show that unless $\Ind(V(u'))$ has an coordinate $i$ with $\Inf^{\leq d}_i[\Ind(V(u'))] > \tau$, the total weight of such edges is at most 
\[
\E_{(y, z) \in \mu_t^{\otimes r}} [\Ind(V(u'))(y), \Ind(V(U_{u'}))(z)] \leq 
\Gamma_\rho(\nu(u'), \nu(U_{u'})) + \eps. 
\]
Charge this weight to each vertex of $V(u')$, so that each vertex $V(u')$ receives a charge of relative weight $\frac{\Gamma_\rho(\nu(u'), \nu(U_{u'})) + \eps}{\nu(u')} \leq \frac{\Gamma_\rho(\nu(u'), \nu(U_{u'}))}{\nu(u')} + \eps/\delta$.
After executing this process for every $u$ and $u'$, the total weight of heavy-heavy edges is charged to vertices, and it remains to upper bound the amount of the charge of each vertex $y \in [L]^R$. 

Fix $y \in [L]^R$, and $u$ be the lowest heavy node such that $y \in V(u)$. Let $\ell$ be the level of $u$ and $u_i$ be the ancestor of $u$ at level $i$ ($1 \leq i \leq \ell$). Note that $V(U_{u_1}), \dots, V(U_{u_\ell})$ are disjoint. 
Then the relative weight of the total charge on $y$ is at most 
\[
\sum_{i=1}^{\ell} \bigg( \frac{\Gamma_\rho(\nu(u_i), \nu(U_{u_i}))}{\nu(u_i)} + \eps/\delta \bigg) 
\leq 
\sum_{i=1}^{\ell} \bigg( \frac{\Gamma_\rho(\delta, \nu(U_{u_i}))}{\delta} \bigg) + T\eps/\delta
\leq 
T \cdot  \frac{\Gamma_\rho(\delta, 1/T)}{\delta} + T\eps/\delta. 
\]
where the first inequality comes from the concavity of $\Gamma_{\rho}(\cdot, \nu(U_{u_i}))$ (for fixed $i \leq \ell$, we have $\Gamma_\rho(\delta, \nu(U_{u_i})) \geq (\delta/\nu(u_i)) \cdot \Gamma_\rho(\nu(u_i), \nu(U_{u_i}))$ and the second inequality also comes from the concavity of $\Gamma_{\rho}(\delta, \cdot)$ together with the fact that $\sum_{i=1}^{\ell} \nu(U_{u_i}) \leq 1$. 
Recalling $\delta = 1/T$, $\eps = \min(\delta, \eta) / T$ and using Lemma~\ref{lem:gamma_same}, it is further upper bounded by $O(T^{0.9})$. 

\iffalse
where the last inequality follows from $\sum_i \nu(U_{u_i}) \leq 1$ and H\"older's inequality 
\[
\sum_i \nu(U_{u_i})^{0.1}
\leq \| (1, 1, \dots, 1) \|_{10/9} \cdot \| (\nu(U_{u_1}), \dots, \nu(U_{u_\ell})) \|_{10} \leq \ell^{9/10}.
\]
By recalling $\eps = \delta / T$, the relative weight of the total charge on $y$ is $O(T^{0.9})$, so the total weight of heavy-heavy edges is at most $O(T^{0.9})$. 
\fi

\paragraph{Combining.}
Therefore, one can conclude that unless there exists $u \in \calT$ and $i \in [R]$ such that $\Inf^{\leq d}_i[\Ind(V(u))] > \tau$, the total weight of the satisfied edges is at most $O(T^{0.9})$. 
\end{proof}

\subsection{Reduction from Unique Games}
In this subsection, we introduce the reduction from the Unique Games using the dictatorship test constructed.
We first introduce the Unique Games Conjecture~\cite{Khot02}, which is stated below.
\begin{definition} [Unique Games]
An instance $\mathcal{L} (G(V \cup W, E), [R], \left\{ \pi(v, w) \right\}_{(v, w) \in E})$ of Unique Games consists of a regular bipartite graph $G(V \cup W, E)$ and a set $[R]$ of labels. For each edge $(v, w) \in E$ there is a constraint specified by a permutation $\pi(v, w) : [R] \rightarrow [R]$. Given a labeling $l : V \cup W \rightarrow [R]$, let $\Valug(l)$ be the fraction of edges satisfied by $l$, where an edge $e = (v, w)$ is said to be satisfied if $l(v) = \pi(v, w)(l(w))$. 
Let $\Optug(\mathcal{L}) = \max_l (\Valug(l))$. 
\end{definition}
\begin{conjecture} [Unique Games Conjecture~\cite{Khot02}] 
\label{conj:ug}
For any constant $\alpha > 0$, there is $R = R(\alpha)$ such that, for a Unique Games instance $\mathcal{L}$ with label set $[R]$, it is NP-hard to distinguish between
\begin{itemize}
\item $\Optug(\mathcal{L}) \geq 1 - \alpha$. 
\item $\Optug(\mathcal{L}) \leq \alpha$. 
\end{itemize}
\end{conjecture}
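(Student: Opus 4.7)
The statement is the celebrated Unique Games Conjecture of Khot, which is a famous open problem rather than a result with a known proof; in the paper it is invoked as a working hypothesis for the hardness reduction that follows, not as something to be established. So any ``proof proposal'' here is really a sketch of what the state of the art suggests one would try, with the understanding that producing an actual proof would be a major breakthrough.

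The most promising line of attack is the one pursued by Khot, Minzer, and Safra, which goes through the 2-to-2 Games Conjecture. The plan would be to first design a family of PCP-style instances based on the Grassmann graph $\mathrm{Gr}_q(n,k)$, whose vertices are $k$-dimensional subspaces of $\mathbb{F}_q^n$, with edges given by pairs of subspaces intersecting in a codimension-one subspace. One then interprets assignments to a Label Cover instance as ``linear functions on subspaces'' and uses a Grassmann-based dictatorship test, where the completeness comes from consistent linear labelings and the soundness requires a Grassmann pseudorandomness theorem: any small set in $\mathrm{Gr}_q(n,k)$ with abnormally high density must be correlated with a ``zoom-in'' around a low-dimensional subspace. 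Establishing that pseudorandomness theorem (proven by Khot--Minzer--Safra and Dinur--Khot--Kindler--Minzer--Safra) yields NP-hardness of 2-to-2 Games with imperfect completeness, which in turn implies $1/2$ vs.\ $\varepsilon$ hardness for Unique Games, a ``halfway'' version of the conjecture.

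To close the gap to the full UGC, the crucial next step would be to boost the completeness from $1/2$ back to $1-\alpha$ while preserving soundness $\alpha$, for arbitrary constant $\alpha > 0$. A natural plan is to find a completeness-amplification gadget analogous to parallel repetition but tailored to the unique constraint structure, or alternatively to replace the Grassmann graph by an even more expanding host structure that supports a dictatorship test with perfect completeness. The main obstacle, and the reason the conjecture remains open, is precisely here: all known Grassmann-style soundness arguments intrinsically lose a constant in completeness because the ``zoom-in'' structures that cause large noise stability force the test to accept consistent-but-non-dictator strategies with probability bounded away from $1$. Any viable plan must either discover a fundamentally different pseudorandomness theorem on a graph supporting perfectly complete dictatorship tests, or identify a completeness-amplification procedure that is compatible with the existing Grassmann soundness, neither of which is known today.
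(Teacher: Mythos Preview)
Your assessment is correct: the statement is the Unique Games Conjecture itself, which the paper states explicitly as a \emph{conjecture} (in a \texttt{conjecture} environment) and uses only as a hypothesis for the hardness reductions in Section~\ref{sec:hardness}. The paper makes no attempt to prove it, so there is no ``paper's own proof'' to compare against; your observation that any genuine proof would be a major breakthrough, and your summary of the Khot--Minzer--Safra partial progress via 2-to-2 Games, are accurate context but go well beyond what the paper does or needs.
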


Fix an integer $T > 0$ that determines $L = 2^T$, $C$ from Lemma~\ref{lem:hyper}, $\delta = 1/T, \eta = (1/T)^{2(2-C)/C}$, $\eps = \min(\delta, \eta)/T$ and $d, \tau$ from Theorem~\ref{thm:invariance}. 
Given an instance of $\mathcal{L} (G(V \cup W, E), [R], \left\{ \pi(v, w) \right\}_{(v, w) \in E})$ of Unique Games, we construct an instance an \umvd. For $y \in [L]^R$ and a permutation $\pi : [R] \rightarrow [R]$, let $y \circ \pi \in [T]^R$ be defined by $(y \circ \pi)_i = (y)_{\pi^{-1}(i)}$.
\begin{itemize}
\item The set of vertices $\calV := V \times [L]^R$.
\item For each $t \in [T]$, the distance-$t$ edges are described by the following probabilistic procedure to sample a pair of vertices, where the weight of the edge $(u,v)_t$ is exactly the probability that it is sampled. 
\begin{itemize}
\item Sample $w \in W$ uniformly at random and its neighbors $v_1, v_2$ uniformly and independently. 
\item Sample $(y, z) \sim \mu_t^{\otimes R}$. Output the pair $((v_1, y \circ \pi_{v_1, w}), (v_2, z \circ \pi_{v_2, w}))$. 
\end{itemize}
\end{itemize}

\paragraph{Completeness.} Suppose that $\Valug(l) \geq 1 - \alpha$ for some labeling $l : V \cup W \rightarrow [R]$. Consider the \umvd solution based on the perfect $(T+1)$-level binary tree $B$ where each vertex $(v, y) \in \calV$ is mapped to the leaf $y_{l(v)}$ of $B$. For any $t \in [T]$, if we sample $w, v_1, v_2$ as above, with probability $1 - 2\alpha$, $\pi(v_1, w)^{-1}(l(v_1)) = \pi(v_2, w)^{-1}(l(v_2))$ and by the dictatorship test completeness in Lemma~\ref{lem:complete}, the solution satisfies at least half of the edges given $t, w, v_1, v_2$. Therefore, the total weight of the satisfied edges is at least $T(1-2\alpha)/2$. 

\paragraph{Soundness.}
Consider any solution with the underlying level-($T+1$) tree $\calT$, such a solution maps the vertices of $\calV$ to the leaves of $\calT$. In the soundness case, we start with a Unique Games instance for which no assignment can satisfy more than an $\alpha$ portion of the constraints.
%For each $v \in V$ and $t \in [T]$, let $F_{u, v, t} : [T]^R \rightarrow \left\{ 0, 1 \right\}$ be defined by
%\[
%F_{u, v, t}(x) = 1 \mbox { if and only if } l_{\calV}(u, v, x) = t.
%\]

As in the dictatorship test, for any $v \in V$ and $U \subseteq \calT$, let $F_{v, U} : [L]^R \to \{ 0, 1 \}$ be the indicator function of $V(U)$ for $v$; formally, $F_{v, U}(y) = 1$ if and only if $(v, y)$ is mapped to a descendant of $U$. Then
for each $w \in W$ and $U \subseteq \calT$, let $F_{w, U} : [L]^R \to [0, 1]$ be such that 
\[
F_{w, U}(y) := \E_{(v, w) \in E} [F_{v, U}(y \circ \pi_{v, w})].
\]

Then the soundness of the dictatorship test (Lemma~\ref{lem:soundness}) shows that unless there exists $u \in \calT$ and $i \in [R]$ with $\Inf^{\leq d}_i[F_{w, u}] > \tau$, the total weight of the satisfied edges given $w$ is at most $O(T^{0.9} / n)$. Call $w$ {\em good} if $w$ has $u \in \calT$ and $i \in [R]$ with $\Inf^{\leq d}_i[F_{w, u}] > \tau$. We define a labeling of the Unique Games instance as follows. First, for any good $w$, let $l(w) = i$. Let $\beta$ be the fraction of good $w$'s. From the representation of influences in terms of Fourier coefficients (see Definition~\ref{def:influence}), 
\[
\tau < \Inf_i^{\leq d}[F_{w, u}] \leq \E_{(v, w) \in E}[\Inf_{\pi_{v, w}(i)}^{\leq d}[F_{v, u}]]
\]
and we conclude that a $\tau / 2$ fraction of neighbors $v$ of $w$ have $\Inf_{\pi_{v, w}(i)}^{\leq d}(F_{v, u}) \geq \tau / 2$. 
We choose $l(v)$ uniformly from a candidate set
\[
\left\{ i : \Inf_{i}^{\leq d}[F_{v, u}] \geq \tau / 2 \mbox{ for some } u \right\}.
\]
If $v$ has no candidate, choose $l(v)$ arbitrarily.
Since $\Inf_i^{\leq d} [F_{v, u}] \leq \| F_{v, u } \|_2^2$, for each level $t \in [T]$, there are at most $O(1/\tau)$ nodes $u$ at level $t$ such that $\Inf_i^{\leq d} [F_{v, u}] \geq \tau/2$ for some $i$, and at most $O(T/\tau)$ nodes overall. For such a node $u$, since $\sum_i \Inf_{i}^{\leq d}[F_{v, u}] \leq d$, 
there are at most $O(d/\tau)$ possible $i$'s, so the total size of the above set is $O(\frac{Td}{\tau^2})$. So this labeling strategy satisfies at least $\Omega(\frac{\beta \tau^3}{Td})$ fraction of Unique Games constraints in expectation, and thus $\beta=O(\frac{\alpha Td}{\tau^3})$. 
Taking $\alpha$ (and thus $\beta$) small enough ensures that in the soundness case, the total weight of the satisfied edges is at most $O(T\beta + T^{0.9}) \leq O(T^{0.9})$, completing the proof of Theorem~\ref{thm:max-hard}.

\subsection{Reduction to the complete unweighted case and to \mvd}\label{S:reductions}

From Theorem~\ref{thm:max-hard}, we can apply standard gadgets (see~\cite{trevisan}) to obtain that the unweighted instances are also NP-hard to approximate within any constant factor.

\begin{corollary}\label{cor:unweighted}
Assuming the Unique Games Conjecture, it is NP-hard to approximate the unweighted maximization version of \umvd within any constant factor.
\end{corollary}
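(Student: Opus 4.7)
The plan is to convert the weighted instance produced by Theorem~\ref{thm:max-hard} into an unweighted one via a standard vertex-duplication gadget in the spirit of Trevisan~\cite{trevisan}. The gadget has two ingredients: cloning each vertex into many copies, and adding intra-cloud distance-$0$ ``consistency'' edges that force most copies of a vertex to sit at the same leaf in any near-optimal solution.

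Concretely, the weights produced in Theorem~\ref{thm:max-hard} arise as finite products of values of the distributions $\mu_t^{\otimes R}$, so they are rationals with polynomially bounded denominators; multiplying by a common denominator $N$ yields positive integer weights $w'(u,v,t)$. Replace each vertex $v$ with a cloud $V_v$ of $M$ copies $\{v^{(1)},\dots,v^{(M)}\}$. For each triple $(u,v,t)$ with $w'(u,v,t) > 0$, insert $w'(u,v,t)\cdot M$ unit-weight distance-$t$ edges between $V_u$ and $V_v$, forming a $w'(u,v,t)$-regular bipartite graph and arranging the placements so that no pair of copies is assigned two different distances (feasible by choosing $M \geq \sum_t w'(u,v,t)$). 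Finally, within each cloud $V_v$ add a controlled number of distance-$0$ consistency edges, with the count tuned so that the intra-cloud and inter-cloud contributions to the objective are of the same order.

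For the completeness direction, given a weighted solution $\sigma$ of value at least $c_Y$, the lift $\sigma'(v^{(i)}) := \sigma(v)$ simultaneously satisfies all intra-cloud consistency edges (since all copies of $v$ go to the same leaf, at ultrametric distance $0$) and exactly the same fraction of inter-cloud edges as $\sigma$ satisfied in the original. For the soundness direction, given an unweighted solution $\sigma'$ of large value, the number of violated intra-cloud edges is small, which forces most copies of each $v$ to coincide at some leaf $\bar\sigma(v)$; extracting $\bar\sigma$ then yields a weighted solution to the original whose value is comparable to that of $\sigma'$, contradicting the NO case of Theorem~\ref{thm:max-hard}.

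The principal technical balance to strike is the ratio of intra-cloud to inter-cloud edges: too many intra-cloud edges would trivialize the instance (every solution could achieve near-$1$ value just by placing each cloud at a single leaf), while too few would fail to propagate a disagreement inside a cloud to a large number of violated edges. A standard accounting argument shows that taking $M$ polynomial in the instance size and choosing the number of intra-cloud consistency edges per cloud polynomial in $M$, $N$, and the input size preserves the inapproximability gap up to a constant factor, yielding Unique Games-hardness of constant-factor approximation for the unweighted maximization version of \umvd.
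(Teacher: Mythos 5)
Your plan matches the standard Trevisan-style gadget the paper invokes with a one-line citation: clear denominators, clone each vertex into a cloud, encode the weight $w'(u,v,t)$ of an edge by a $w'(u,v,t)$-regular bipartite set of unit inter-cloud edges, and add intra-cloud distance-$0$ ``consistency'' edges so that a near-optimal solution must place most copies of a vertex at a single leaf. The completeness direction as you state it is fine.

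The soundness direction is where the real work lies, and your description of the balance is the weak point. You say the intra-cloud edge count should be tuned so that ``the intra-cloud and inter-cloud contributions to the objective are of the same order.'' Taken at face value, that is the wrong calibration. Recall that in the weighted instance (total weight $T$) the YES value is $\Theta(T)$ and the NO value is $O(T^{0.9})$. Scaling by $NM$, the inter-cloud edge count is $E_{\mathrm{inter}} \approx TNM$, the YES inter-cloud gain is $\approx E_{\mathrm{inter}}/2$, and the NO inter-cloud gain is $\approx T^{-0.1}E_{\mathrm{inter}}$. If you set $E_{\mathrm{intra}} \approx E_{\mathrm{inter}}$ (and an all-one-leaf solution, or the lift, satisfies all of $E_{\mathrm{intra}}$), then YES $\approx 1.5 E_{\mathrm{inter}}$ while NO $\geq E_{\mathrm{intra}} \approx E_{\mathrm{inter}}$, so the multiplicative gap collapses to a fixed constant and you do not get hardness for \emph{every} constant factor. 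The intra-cloud count has to sit around the NO-case inter-cloud contribution, $E_{\mathrm{intra}} = \Theta(T^{0.9}NM)$, which is a $T^{-0.1}$-fraction of $E_{\mathrm{inter}}$, not the same order. Further, with few intra edges you must specify their structure (an expander-type graph on each cloud) so that a small number of violations genuinely forces a $(1-o(1))$-plurality leaf, and you must check this is compatible with the constraint $M \geq \sum_t w'(u,v,t)$ that your bipartite packing requires (naively a complete intra-cloud graph would force $E_{\mathrm{intra}} = \Theta(|V|M^2)$, which is too large given $M \geq \Omega(N)$). Finally, distance-$0$ is an equivalence relation in any ultrametric, so the intra-cloud constraints are global, not local; this helps the propagation argument but has to be accounted for explicitly. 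None of these points is insurmountable, and the paper itself only gestures at ``standard gadgets,'' but the balance is exactly the nontrivial step and your proposal currently gets its scaling wrong.
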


%\begin{proof}
%We reduce a hard instance $G$ with $n$ vertices of weighted $\umvd$, as given by Theorem~\ref{thm:max-hard}, to an instance $G'$ of unweighted \umvd while approximately preserving the value of an optimal solution. First, we can assume that in $G$, the weights are polynomially bounded, see for example~\cite[Section~2]{trevisan}. Then, we define $G'$ as follows: \end{proof}

Then, by adding edges of high distance, one can show that \umvd is also NP-hard to approximate within any constant factor on complete graphs.

\begin{corollary}\label{cor:complete}
Assuming the Unique Games Conjecture, it is NP-hard to approximate the unweighted maximization version of \umvd on complete graphs within any constant factor.
\end{corollary}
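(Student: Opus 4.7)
My plan is to reduce from the non-complete unweighted hardness of Corollary~\ref{cor:unweighted}. Given an unweighted \maxumvd instance $G = (V, E, x)$ with $x : E \to [T]$ that is UG-hard to approximate within any constant factor, I would form a complete unweighted instance $G'$ by preserving $x$ on the original edges $E$ and assigning distance $T+1$ to every missing pair. Observing that any ultrametric on $G'$ satisfies at most $\opt(G)$ original edges and at most $\binom{|V|}{2} - |E|$ new edges, I would immediately obtain $\opt(G') \le \opt(G) + \binom{|V|}{2} - |E|$, while $\opt(G') \ge \opt(G)$ follows by transporting the optimal ultrametric of $G$ to $G'$.

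To ensure that the hardness transfers, the additive term $\binom{|V|}{2} - |E|$ must be negligible compared with the gap of Corollary~\ref{cor:unweighted}. I would handle this by arranging for the underlying hard instance to already be complete before the completion step. Inspecting the reduction behind Theorem~\ref{thm:max-hard}, a pair $((v_1, y_1), (v_2, y_2))$ receives positive weight precisely when $v_1$ and $v_2$ share a common neighbor in the UG bipartite graph, since each $\mu_t$ has full support on label pairs. A light pre-processing of the UG instance that adds a single auxiliary right-vertex adjacent to every left-vertex with the identity permutation ensures that every pair of left-vertices shares a common neighbor, while perturbing UG completeness and soundness by only $O(1/n)$. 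The resulting weighted instance is then complete, and the vertex-blowup used in the weighted-to-unweighted reduction preserves completeness since every pair of blown-up vertices receives exactly one edge.

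The hard part will be verifying that the UG pre-processing and the weighted-to-unweighted blowup compose cleanly with the dictatorship-test reduction of Theorem~\ref{thm:max-hard} without degrading its hardness parameters. Once that is handled, the instance produced by Corollary~\ref{cor:unweighted} can be assumed complete from the outset, the additive slack in the above bound vanishes, and the full multiplicative hardness gap for any constant factor transfers to the complete unweighted setting $G'$.
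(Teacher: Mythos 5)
Your proposal diverges from the paper's proof in a way that creates a genuine gap, and the workaround you sketch is considerably harder than the paper's actual argument.

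The paper's proof also reduces from Corollary~\ref{cor:unweighted}, but assigns \emph{pairwise distinct} large distances $N+1, N+2, \dots$ to the missing pairs (where $N$ is the maximum original distance). The point of making them distinct is that in any ultrametric, a cycle must have its two largest distances equal; so the set of satisfied new edges contains no cycle, i.e., it is a forest with at most $n-1$ edges. Combined with the observation that the hard instances are connected (so a spanning tree already gives $\opt \geq n-1$), the additive slack $n-1$ is absorbed into a constant factor, and the constant-factor hardness transfers directly. You instead assign the \emph{same} distance $T+1$ to all new edges, which only yields the trivial bound $\opt(G') \leq \opt(G) + \binom{|V|}{2} - |E|$, and as you note, this slack is far too large.

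Your proposed repair --- pre-process the Unique Games instance so the hardness construction is already complete --- is not a repair of the reduction but an attempt to re-derive Corollary~\ref{cor:complete} from scratch by reworking Theorem~\ref{thm:max-hard} and its weighted-to-unweighted step. This is both overkill and incompletely argued. In particular, it is not clear that the weighted-to-unweighted gadget (replicating constraints in proportion to scaled weights) can yield a \emph{complete simple} graph in the sense required here: \umvd instances assign a single distance to each pair $(i,j)$, and the dictatorship test naturally produces multiple weighted edges of different distances between the same pair. You would need to carefully combine or separate these, and you would need completeness of the UG graph to propagate through the label-cover vertex structure $V \times [L]^R$ --- none of which you actually verify. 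The paper's ``forest'' trick sidesteps all of this.
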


\begin{proof}
We reduce a hard instance $G$ with $n$ vertices of weighted \umvd, as given by Corollary~\ref{cor:unweighted}, to an instance $G'$ of \umvd on complete graphs while approximately preserving the value of an optimal solution as follows. The set of vertices $V'$ of $G'$ is identical to that of $G$. We denote by $N$ the largest distance of an edge in $G$. Then all the non-edges in $G$ are ordered arbitrarily with an integer $i \in n^2$, and are filled with an edge in $G'$ of distance $N+i$. Let \opt denote an optimal solution for \umvd on $G$, and \opt' an optimal solution for \umvd on $G'$. Then \opt yields a solution of exactly the same value on $G'$, since the new edges are not satisfied in \opt. Therefore $\opt' \geq \opt$. On the other hand, in the solution $\opt'$, the number of new edges which are satisfied is at most $n$: since their distances are all different, they induce an acyclic graph. Therefore $\opt' \leq \opt +n$. Finally, we observe that in the hard instances output by Theorem~\ref{cor:unweighted}, the graphs are connected,
%\note{Arnaud:are they?}, 
and thus $\opt\geq n$ since any spanning tree yields a satisfying assignment. Therefore, any constant factor approximation to $\umvd$ in $G'$ would yield a constant factor approximation to $\umvd$ in $G$, which would contradict Corollary~\ref{cor:unweighted}.
\end{proof}

Finally, we prove that \umvd can be encoded as a special instance of \mvd, yielding the following corollary.

\begin{corollary}\label{cor:mvd}
Assuming the Unique Games Conjecture, it is NP-hard to approximate the unweighted maximization version of \mvd on complete graphs within any constant factor.
\end{corollary}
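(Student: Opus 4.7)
The plan is to reduce \maxumvd on complete unweighted graphs, shown UGC-hard in Corollary~\ref{cor:complete}, to \maxmvd via an exponential blow-up of the distances. Given a hard instance $x$ of \maxumvd on $n$ vertices with integer distances, I would construct an \maxmvd instance $x'$ on the same complete graph by $x'(i,j) := B^{x(i,j)}$ with $B := n$. The encoding size of $x'$ is polynomial in $n$ and the reduction preserves completeness and unweightedness.

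I would first verify that the unbalanced cycles of the two instances correspond exactly. For any cycle $C = e_1, \ldots, e_k$ with $k \leq n$, let $e_1$ denote the longest edge and set $M := \max_{i \neq 1} x(e_i)$. If $C$ is ultrametric-unbalanced for $x$, i.e.\ $x(e_1) > M$, then integrality gives $x(e_1) \geq M+1$, so
\[
B^{x(e_1)} \geq B \cdot B^M \geq n \, B^M > (k-1) B^M \geq \sum_{i \neq 1} B^{x(e_i)},
\]
meaning $C$ is metric-unbalanced for $x'$. Conversely, $B^{x(e_1)} > \sum_{i \neq 1} B^{x(e_i)} \geq B^M$ immediately yields $x(e_1) > M$.

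I would then invoke Lemma~5.2 of \cite{fan2018metric} and its ultrametric analog, which state that the minimum number of edges one must modify to obtain an (ultra)metric equals the minimum hitting set of unbalanced cycles, and that such a hitting set can be algorithmically converted into a valid (ultra)metric solution. Combined with the cycle correspondence above, this implies that the optima of \maxmvd on $x'$ and \maxumvd on $x$ coincide. For approximation preservation, given an $\alpha$-approximate \maxmvd solution $y'$ matching $x'$ on a set $T$, its complement $H := \binom{[n]}{2} \setminus T$ hits every metric-unbalanced cycle in $x'$ (else $T$ would contain such a cycle, contradicting that $y'$ is a metric), and hence every ultrametric-unbalanced cycle in $x$ by the correspondence. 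The algorithmic hitting-set-to-solution conversion then produces an ultrametric $y$ with $y = x$ on $T$, yielding an $\alpha$-approximate \maxumvd solution on $x$; any constant-factor approximation for \maxmvd on complete unweighted inputs would thus contradict Corollary~\ref{cor:complete}.

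The main subtlety lies in justifying the ultrametric analog of Lemma~5.2: given a set $H$ that hits every ultrametric-unbalanced cycle of $x$, one must construct in polynomial time an ultrametric $y$ with $y = x$ on $\binom{[n]}{2} \setminus H$. This is possible because $x$ restricted to $\binom{[n]}{2} \setminus H$ satisfies the ultrametric triangle condition on every triple contained in the preserved set (no unbalanced triple remains), and any such partial ultrametric distance extends to a full ultrametric on $[n]$, for instance via a bottom-up agglomerative merging guided by the preserved edges, which fills in the missing distances without violating the ultrametric condition.
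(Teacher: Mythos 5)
Your reduction (exponential blow-up with base $n$, cycle correspondence, and then appeal to the hitting-set characterization of Lemma~5.2 of~\cite{fan2018metric} for both metric and ultrametric) is exactly the paper's argument; you are a bit more explicit about why the reduction is approximation-preserving, which the paper treats as immediate.

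One caution about your final paragraph, where you sketch the ultrametric analog of Lemma~5.2: the justification you give is off. You say that $x$ restricted to $T := \binom{[n]}{2}\setminus H$ ``satisfies the ultrametric triangle condition on every triple contained in the preserved set,'' and conclude it extends; but having no unbalanced \emph{triple} in $T$ is strictly weaker than what is needed, and a partial assignment with no violated triples can still fail to extend (this is precisely the phenomenon the paper illustrates in Figure~\ref{F:badexamples}). What you actually have, and should use, is that $H$ hits every ultrametric-unbalanced \emph{cycle} — a fact you correctly establish earlier. With that in hand, the clean extension is not an unspecified ``bottom-up agglomerative merging'' but the minimax (bottleneck) path distance $y(i,j) := \min_{P : i\leadsto j \text{ in } T} \max_{e\in P} x(e)$: this $y$ is always an ultrametric, and $y(i,j)=x(i,j)$ for all $(i,j)\in T$ because any path witnessing $y(i,j)<x(i,j)$ would close an ultrametric-unbalanced cycle lying entirely in $T$, contradicting the hitting-set property. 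That fills the gap you flagged (which the paper also leaves implicit).
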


\begin{proof}
We reduce a hard instance $G$ with $n$ vertices of unweighted \umvd on complete graphs, as given by Corollary~\ref{cor:complete}, to an instance $G'$ of \mvd on complete graphs while preserving the value of the optimal solution as follows. The set of vertices of $G'$ is identical to that of $G$, but the distances are blown up: an edge of distance $t$ in $G$ is changed to have distance $n^t$ in $G'$. We claim that this encode the \umvd problem into an \mvd problem as follows. Let \opt denote an optimal solution for \umvd on $G$, and \opt' an optimal solution for \mvd on $G'$. If a cycle $C=e_1,\ldots ,e_k$ is unbalanced in \opt, then without loss of generality, $x(e_1)> max(e_2,\ldots ,e_k)$. Then $n^{x(e_1)}> n^{max(x(e_2),\ldots, x(e_k))}\geq n*n^{max(x(e_2),\ldots x(e_k)}>n^{x(e_2)}+ \ldots + n^{x(e_k)}$ and thus the new cycle is also unbalanced for \mvd. In the other direction, if $C$ is balanced in \opt, we have $x(e_1)\leq max(x(e_2),\ldots, x(e_k))$ and all the permutations thereof. This implies that $n^{x(e_1)}\leq n^{x(e_2)}+ \ldots + n^{x(e_k)}$ and thus the resulting cycle in balanced in the $G'$. Since both \mvd and \umvd are equivalent to finding a hitting set for all the unbalanced cycles (see~\cite[Lemma~5.2]{fan2018metric}), this proves that the \mvd problem in $G'$ is equivalent to the \umvd problem in $G$, completing the proof.
\end{proof}

Corollaries~\ref{cor:complete} and~\ref{cor:mvd} complete the proof of Theorem~\ref{thm:max-hard-complete}.

\paragraph*{Acknowledgements.} We are grateful to Alantha Newman for helpful discussions. The work of the first, second and fourth authors is partially supported by the French ANR project ANR-18-CE40-0004-01 (FOCAL).

\bibliographystyle{alpha}
\bibliography{main}

\appendix

\section{Performance of Algorithm~\ref{algo:umvd} for \umvd}
\label{appendix:umvd}
In this section, we give a proof of the following theorem. The proof is a simplification of the proof for Algorithm~\ref{algo:mvd} for \mvd. 
\begin{theorem}
Algorithm~\ref{algo:umvd} achieves an $O(\min(T, \log n))$-approximation for \umvd, where $T$ is the number of different distance values of the input. 
\label{thm:umvd}
\end{theorem}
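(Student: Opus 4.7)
The plan is to mirror the proof of Theorem~\ref{thm:mvd} with only minor modifications reflecting the ultrametric pivoting rule, and then to add a short deterministic argument for the $O(T)$ bound. First I would establish the ultrametric analogues of Observation~\ref{O:balance}, Lemma~\ref{L:goodstaysgood}, and Lemma~\ref{L:betterlemma}. The analogue of Observation~\ref{O:balance} is immediate from the algorithm: after pivoting at $i$, every triangle $ijk$ satisfies the ultrametric inequality, since either $x(i,j)=x(i,k)$ and $x(j,k)$ is clamped below this common value, or $x(i,j)\neq x(i,k)$ and $x(j,k)$ is set to $\max(x(i,j),x(i,k))$. The analogue of Lemma~\ref{L:goodstaysgood} follows from a short case analysis that is considerably simpler than the metric one: when a pivot at $i$ changes $x(j,k)$ to the $\min$ or $\max$ of two values incident to $i$, checking this against an arbitrary fourth vertex $m$ (using the inductive hypothesis on already-frozen vertices) confirms that $jkm$ remains balanced. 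The analogue of Lemma~\ref{L:betterlemma} adapts in the same way: unbalanced triangles $uvj$ adjacent to an edge $uv$ can be ordered by the value that pivoting at $j$ would assign to $x(uv)$, and pivoting early in this ordering repairs all subsequent ones.

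With these structural lemmas in hand, the $O(\log n)$ bound follows by exactly the argument of Theorem~\ref{thm:mvd}. Define $A_{ijk}$ as the indicator that one vertex of triangle $ijk\in\mathcal{T}'$ (where $\mathcal{T}'$ now denotes triangles initially violating the ultrametric inequality) is chosen as a pivot and the opposite edge is modified, set $p_{ijk}=\mathbb{E}[A_{ijk}]$, and observe $\mathbb{E}[\mathrm{ALG}]\le\sum_{t\in\mathcal{T}'}p_t$. The LP duality of Lemma~\ref{L:LP} transfers verbatim (any solution of \umvd must hit every unbalanced triangle, and the LP dual is a fractional packing). Bounding $q_e:=\sum_{t\ni e,\,t\in\mathcal{T}'}p_t=O(\log n)$ then reduces to the identical induction on the number of unbalanced triangles adjacent to $e$, yielding the same logarithmic recurrence.

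The $O(T)$ bound follows instead from a deterministic count of modifications per edge. Every distance value ever assigned by the algorithm lies in the initial set $\{w_1,\dots,w_T\}$, since both rules output a $\min$ or $\max$ of two existing incident values. Moreover, a modification of $x(j,k)$ triggered by pivoting at $i$ falls in one of two cases: if $x(i,j)\neq x(i,k)$, the new value $\max(x(i,j),x(i,k))$ together with the frozen status of the edges incident to $i$ and the Lemma~\ref{L:goodstaysgood}-analogue permanently pins $x(j,k)$ to this value, so no further modifications occur; if $x(i,j)=x(i,k)$, the rule $x(j,k)\leftarrow\min(x(j,k),x(i,j))$ strictly decreases $x(j,k)$ whenever it changes it. Hence the sequence of strict modifications of a given edge is a strictly decreasing chain in $\{w_1,\dots,w_T\}$ possibly followed by one ``freezing'' modification, giving at most $T$ modifications in total. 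Since $q_e=3\,\mathbb{E}[\#\text{modifications of }e]\le 3T$ deterministically, combining with the LP lower bound gives the $O(T)$-approximation, and thus the $O(\min(T,\log n))$ ratio overall.

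The main technical point I anticipate is verifying the ``freezing'' claim in the $O(T)$ argument, namely that a Case~1 modification truly pins $x(j,k)$ forever. This depends on showing that once $x(i,j)$ and $x(i,k)$ are frozen and distinct, the ultrametric condition on $ijk$ (guaranteed to persist by the Lemma~\ref{L:goodstaysgood}-analogue) forces $x(j,k)=\max(x(i,j),x(i,k))$ at every later step. This step is short but is the crux of the $O(T)$ improvement; the rest of the proof is a straightforward transposition of the metric arguments.
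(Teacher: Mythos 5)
Your proposal is correct and tracks the paper's argument closely, but the route you take for the $O(\log n)$ bound is more elaborate than the one the paper actually uses. You propose to transplant the full machinery of Lemma~\ref{L:betterlemma} and the two-sided excess/deficit recurrence from the MVD proof. The paper's Appendix proof does \emph{not} do this: it never states an analogue of Lemma~\ref{L:betterlemma}. Instead, it exploits a structural simplification specific to the ultrametric setting: once a pivot $i$ satisfies $x(1i)\neq x(2i)$, the two endpoints of $e=(1,2)$ land in different clusters, and by the analogue of Observation~\ref{O:balance} (Observation~\ref{obs:umvd}) different clusters never interact again, so $e$ is permanently frozen after at most one such ``separating'' modification. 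Thus the only modifications worth counting are \emph{bad pivots}, namely vertices $i$ with $x(1i)=x(2i)$ strictly smaller than the current $x(12)$; these all strictly decrease $x(12)$. Choosing a bad pivot at distance $a$ kills all bad pivots at distance $\geq a$, yielding the clean one-parameter recurrence $c_e(n')\leq 1+\frac1{n'}\sum_{i=1}^{n'}c_e(i-1)$, which solves to $O(\log n')$ without the $n_1,n_2$ bookkeeping of the MVD case. Your Lemma~\ref{L:betterlemma}-analogue would likely also work, but it re-derives bookkeeping that the ultrametric structure renders unnecessary (in particular a ``deficit'' pivot in UMVD does not leave $n_1+k-1$ triangles behind as in MVD; it freezes the edge and leaves zero, so the symmetric treatment is wasted effort). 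For the $O(T)$ bound, your argument is essentially identical to the paper's: you phrase the terminal event as ``$x(ij)\neq x(ik)$ frozen forces $x(jk)$ by the Lemma~\ref{L:goodstaysgood}-analogue,'' and your worry about verifying the freezing claim is resolved even more directly by the cluster observation above -- the paper simply notes that a separating modification is the last one and that bad modifications strictly decrease through the $T$ values. One small point of care that the paper makes explicit and you should too: $A_{ijk}$ must count only \emph{strict} value changes, since the ultrametric pivot rule $x(j,k)\leftarrow\min(x(j,k),x(i,j))$ can syntactically assign the current value, which is not a cost-bearing modification.
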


We recall the high-level intuition of 
Algorithm~\ref{algo:umvd} as a hierarchical clustering algorithm. Consider an iteration where $i$ is chosen as the pivot. Let $x_1 < \dots < x_t$ be the distinct distances of the edges incident on $i$ (i.e. $\{ x_1, \dots, x_t \} = \{ x(i j): j \in [n] \setminus \{ i \} \}$), and for $r \in [t]$, $V_r = \{ j \in [n] \setminus \{ i \} : x(i j) = x_r \}$. Call each $V_r$ a {\em cluster}.  
Note that the algorithm ensures that, after the {\bf for} loop, 
$x(j k) = \max(x(i j), x(i k))$ if $j$ and $k$ belong to different clusters and $x(j k) \leq x(i j) = x(i k)$ if $j$ and $k$ belong to the same cluster. 

\begin{observation}\label{obs:umvd}
Any triangle that is not entirely contained in one cluster becomes balanced. 
Therefore, in the subsequent iterations, different clusters do not interact at all (e.g., having a pivot in one cluster only changes the distances within the cluster).
\end{observation}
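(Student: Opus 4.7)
The observation has two parts: (i) every triangle not entirely contained in one cluster is balanced at the end of the \textbf{for} loop, and (ii) in subsequent recursive calls, pivots inside one cluster leave all ``external'' distances (cross-cluster distances and within-cluster distances of other clusters) untouched. For (i), I will rely on the two properties the paper already extracts from the \textbf{for} loop: for $j,k \in [n]\setminus\{i\}$, (a) if $j,k$ lie in different clusters $V_r, V_p$ then $x(j,k) = \max(x(i,j), x(i,k)) = \max(x_r,x_p)$ after the loop, and (b) if they lie in the same cluster $V_r$, then $x(j,k) \leq x_r = x(i,j) = x(i,k)$. These are immediate from the two \textbf{if} branches of the loop. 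I would then do a short case analysis on a triangle $\{i',j',k'\}$ spanning at least two clusters: if exactly two of $\{i',j',k'\}$ lie in the same cluster $V_p$ and the third in $V_r$, then the two cross-cluster sides both equal $\max(x_r,x_p)$ while the intra-$V_p$ side is at most $x_p \leq \max(x_r,x_p)$, so the two largest sides are equal; if all three vertices lie in distinct clusters $V_r,V_p,V_q$, the three distances are the three pairwise maxes of $\{x_r,x_p,x_q\}$, and these are easily seen to equal $m_2, M, M$ (in some order) where $M$ and $m_2$ are the first and second largest of $x_r,x_p,x_q$, so again the two largest sides agree. In both cases the ultrametric triangle inequality is satisfied.

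For (ii), I will argue that once (i) holds, any subsequent pivot $j \in V_r$ leaves external edges unchanged. Consider an edge $x(k,m)$ that the loop for pivot $j$ would touch. After pivot $i$, the weights $x(j,k)$ and $x(j,m)$ are determined by which cluster $k$ and $m$ belong to, and it suffices to show that for each arrangement of $k,m$ the triangle $\{j,k,m\}$ is already ultrametric, so that neither update rule in Algorithm~\ref{algo:umvd} rewrites $x(k,m)$. If $k,m$ both lie outside $V_r$: when they are in distinct clusters $V_p, V_q$, the three distances are pairwise maxes from $\{x_r,x_p,x_q\}$, already ultrametric; when $k,m \in V_p$ with $p\neq r$, then $x(j,k) = x(j,m) = \max(x_r,x_p)$ and $x(k,m) \leq x_p \leq \max(x_r,x_p)$, so the ``$x(i,j)=x(i,k)$'' branch triggers the assignment $x(k,m) \gets \min(x(k,m),\max(x_r,x_p))$ which does not decrease $x(k,m)$. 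If exactly one of $k,m$ lies in $V_r$, a symmetric check using $x(j,k)=\max(x_r,x_p)$ and $x(j,m) \leq x_r$ shows the triangle is balanced so no update occurs. Only the case $k,m \in V_r$ yields potential modifications, and these are purely internal to $V_r$.

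Inductively iterating this argument shows that after the first pivot at $i$, the set $V_r$ evolves in complete isolation from the rest of the graph, so the recursive call on $[n]\setminus\{i\}$ is behaviorally equivalent to independent recursive calls on each $V_r$. The main (minor) obstacle is being careful that we use exactly the update rules written in Algorithm~\ref{algo:umvd} --- in particular distinguishing the ``$x(i,j)=x(i,k)$'' branch from the other branch --- rather than implicitly assuming a single uniform fixing rule, since the equality case is the one that governs distances within a cluster and could in principle overwrite external distances if the triangle were not already ultrametric.
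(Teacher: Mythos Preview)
Your proposal is correct and follows essentially the same approach as the paper's proof: a case analysis on how the three vertices of a triangle are distributed among the clusters $V_r$, using the two post-loop properties (a) cross-cluster distances equal $\max(x_r,x_p)$ and (b) intra-cluster distances are at most $x_r$. Your write-up is in fact more complete than the paper's, which only spells out the ``two-in-one-cluster'' case and leaves both the three-distinct-clusters case and the non-interaction claim (your part (ii)) implicit.
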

\begin{proof}
Suppose that $i$ is chosen as the pivot in the current iteration. 
 For any triangle $i'j'k'$ such that $i' \in V_r$ and $j', k' \in V_p$ with $p \neq r$, $x(i' j') = x(i' k') = \max(x(i i'), x(i j')) \geq x(j' k')$. 
\end{proof}

We denote by $\mathcal{T}$ the set of all triangles, and by $\mathcal{T}'$ the set of all unbalanced triangles (i.e., a triangle such that the largest length is strictly bigger than the other two lengths). We first prove that Algorithm~\ref{algo:umvd} makes progress, i.e., the set of unbalanced triangles shrinks as the algorithm progresses.

%The algorithm MVD-PIVOT is the following:

%\begin{enumerate}
%\item Choose a pivot $p$ at random.
%\item Fix the lengths of all the edges adjacent to $p$.
%\item For each triangle $pab$ which is unbalanced: if $\ell(ab)$ is too big, change it to $\ell(pa)+\ell(pb)$. If $\ell(ab)$ is too small, change it to $|\ell(pa)-\ell(pb)|$.
%\item Remove $p$, all the adjacent edges and recurse.
%\end{enumerate}

  \begin{lemma}\label{lem:umvd}
At each execution of a pivoting step in Algorithm~\ref{algo:umvd}, no new unbalanced triangle is created.
    \end{lemma}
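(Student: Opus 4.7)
The plan is to exploit the post-pivot structure captured by Observation~\ref{obs:umvd}. Fix a pivoting step at vertex $i$, and let $V_1,\dots,V_t$ be the clusters of the remaining vertices grouped by their distance $x_r$ to $i$. Any triangle containing $i$ is balanced after the pivot by direct inspection of the two branches of the algorithm: the \textbf{if} branch yields $x(jk)\le x(ij)=x(ik)$, and the \textbf{elif} branch yields $x(jk)=\max(x(ij),x(ik))$; in either case the two largest of $\{x(ij),x(ik),x(jk)\}$ coincide, which is the characterization of a balanced (ultrametric) triangle. Hence it suffices to examine triangles $jkm$ with $j,k,m\neq i$, which I would handle by a case analysis on the cluster membership of $j,k,m$.

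When $j,k,m$ do not all lie in the same cluster, the values of the cross-cluster edges are completely pinned down by Observation~\ref{obs:umvd}: an edge between $V_r$ and $V_p$ with $r\neq p$ equals $\max(x_r,x_p)$. A direct check in the two sub-cases (all three in distinct clusters, or two in some $V_r$ and one in some $V_p$) shows that the two largest edges of $jkm$ coincide after the pivot, so the triangle ends up balanced regardless of its previous status. For instance, if $j,k\in V_r$ and $m\in V_p$ with $r\neq p$, the two edges incident to $m$ both equal $\max(x_r,x_p)$, while the edge $jk$ is at most $x_r\le\max(x_r,x_p)$.

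The delicate case, and the main obstacle, is when $j,k,m$ all lie in one cluster $V_r$: the pivot modifies each of the three edges by replacing it with its minimum with the common threshold $x_r$, so I have to argue that balancedness is preserved under this simultaneous truncation. Assume the triangle was balanced before the pivot, with edges $a\le b\le c$; the ultrametric inequality then forces $c\le\max(a,b)=b$, hence $c=b$. After pivoting, the edges become $a'=\min(a,x_r)\le b'=\min(b,x_r)$ and $c'=\min(c,x_r)$, and since $b=c$ we have $b'=c'$, so the two largest still agree and the triangle remains balanced. The key insight is that truncating with a common threshold preserves the ``two largest edges equal'' characterization of ultrametric triangles, which is precisely what rescues the same-cluster case despite the lack of direct control over the new edge values.
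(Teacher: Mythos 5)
Your proof is correct and takes essentially the same approach as the paper: appeal to Observation~\ref{obs:umvd} (and the pinned-down form of cross-cluster distances) to dispose of triangles not fully contained in one cluster, and then argue that for a triangle fully contained in some $V_r$ the simultaneous truncation $x\mapsto\min(x,x_r)$ preserves the ``two largest edges equal'' characterization of a balanced ultrametric triangle. The paper's proof is terser but makes exactly these two moves.
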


  \begin{proof}
    Let $i,j,k,m$ be a $4$-tuple of vertices, and say that we pivot at $i$ during the algorithm. 
    By Observation~\ref{obs:umvd}, $jkl$ can possibly become unbalanced if all $j, k, l$ belong to the same cluster $V_r$ for some $r$. (I.e., $x(i j) = x(i k) = x(i l) = x_r$ before the modification according to $i$.) In this case, the algorithm states that each of $x(j k), x(k l), x(l j)$ takes the minimum of its own value and $x_r$. If $jkl$ was balanced before the modification, it means that there are at two edges that had the same largest distance. The modification preserves this property. 
      \end{proof}

We are now ready to prove Theorem~\ref{thm:umvd}. 

%\begin{theorem}
%MVD-PIVOT is an $O(\log n)$ approximation of the optimal $\ell_0$ MVD solution on complete graphs.
%\end{theorem}

\begin{proof}[Proof of Theorem~\ref{thm:umvd}]
We first observe that the solution output by Algorithm~\ref{algo:umvd} indeed forms an ultrametric. By Observation~\ref{obs:umvd}, at each round the algorithm repairs the triangles incident to the pivot, and by Lemma~\ref{lem:umvd}, these triangles stay repaired as the algorithm progresses. Since at the end of the algorithm, every vertex has been chosen once as a pivot, all the triangles are repaired.

We denote by ALG the output of the algorithm and by OPT the optimal solution and recall that $\mathcal{T}$ denotes the set of all triples and $\mathcal{T}'$ the set of all unbalanced triangles with respect to the input distances. For a triangle $ijk \in \binom{n}{3}$, let $A_{ijk}$ be the indicator of the event that one of them is chosen as a pivot and one of the edges (i.e., the edge not incident to the pivot) was modified as a result. (By modification we only consider the event when the value strictly changes.) 
Note that in the execution of the algorithm one edge can be modified many times but this event happens at most once for each triangle since after pivoting at a vertex its adjacent edges are frozen. Let $p_{ijk}=\mathbb{E}[A_{ijk}]$, where the expectation is taken over the algorithm's randomness. Then $\mathbb{E}[ALG] \leq \sum_{t \in \mathcal{T}}p_t$. By Lemma~\ref{lem:umvd}, a triangle never becomes unbalanced in the course of the algorithm, therefore $p_t=0$ for $t \not\in \mathcal{T}'$. Thus $\mathbb{E}[ALG] \leq \sum_{t \in \mathcal{T'}}p_t$.

  To show that the algorithm is an $\alpha$-approximation, we prove that for every edge $e$, $q_e:=\sum_{t\in\mathcal{T}',t\ni e}p_t \leq \alpha$. This shows that $\frac{p_t}{\alpha}$ is a fractional packing for the unbalanced triangles, and therefore, we have \[\sum_{t\in \mathcal{T'}}\frac{p_t}{\alpha} \leq OPT,\] and thus \[\mathbb{E}[ALG]\leq \sum_{t\in \mathcal{T}'}p_t\leq \alpha OPT.\] 
  
  Hence, for the rest of the proof, we show that $q_e = O(\log n)$ for every $e$. 
    Fix an edge $e$ that belongs to at least one unbalanced triangle. Without loss of generality, assume $e = (1, 2)$. Note that, for any unbalanced triangle $t$, when the event indicated by $A_t$ happens, one of the three vertices of $t$ gets chosen uniformly at random, and the opposite edge gets modified. Therefore, we have that $\sum_{t\in\mathcal{T}',t\ni e}p_t/3= \mathbb{E}[\# \textrm{times that $e$ is modified}]$, and thus it suffices to bound the latter expectation by $O(\log n)$.
    
    In fact, we count the expected number of {\em bad modifications}, where bad modifications are the ones that still put the two endpoints of $e$ in the same cluster. Since $e$ cannot modified further if the endpoints belong to different clusters, the number of bad modifications differs the total number of modifications by at most $1$. Note that in a bad modification, the distance value strictly decreases, so $T$ is an upper bound for the number of bad modifications. It only remains the show another upper bound $O(\log n)$. 

%We proceed by induction on the number of vertices $n$. Of course, $c_e(2) = 0$. 
Assume without loss of generality that the input distance values are integers between $1$ and $n^2$; by the design of the algorithm, only those values will ever appear during the algorithm. 
Bad modifications happen when we choose $i \geq 3$ as a pivot in the current cluster with $x(1i) = x(2i) < x(12)$. 
(Let $b = x(12)$ in the current cluster.)
We call such $i$ a {\em bad pivot at distance $x(1i)$}. 
Note that a bad pivot cannot be created by the design of the algorithm.

For $a \in \{ 1, \dots, b - 1 \}$, let $n'_a$ be the number of $i$ such that $x(1 i ) =x(2i) = a$, and $n' = \sum_{a = 1}^{b-1} n'_a$. 
If a bad pivot $i$ at distance $a$ is chosen, then in the next cluster of $\{ 1,2 \}$, $x(12)$ becomes $a$ and every pairwise distance is at most $a$, so there cannot be bad pivots at distance $\geq a$. 

We bound the expected number of bad modifications by induction on the initial number of bad pivots $n'$.
Let $c_e(n')$ be an upper bound on the expected number of bad modifications when the number of bad pivots is $n'$. 
We will prove that $c_e(i) \leq 2 \ln(i + 1)$. 
For the base case, $c_e(1) \leq 1/3 \leq 2 \ln 2$. 
Also, let $s'_a = \sum_{i=1}^{a-1} n'_a$. 
For $n' \geq 2$, 
\begin{align*}
 c_e(n') &= \Pr[\mbox{bad pivot is chosen}] + \E[c_e(n'') | \mbox{bad pivot is chosen}] 
 && (n'' = \mbox {(number of remaining bad pivots)}) \\
&\leq 1 + \bigg( \sum_{a = 1}^{b-1} \frac{n'_a}{n'} \cdot c_e(s'_a) \bigg) && \\
&\leq 1 +  \frac{1}{n'} \cdot \bigg( \sum_{i = 1}^{n'} c_e(i - 1) \bigg)&& \\
&\leq 1 + \frac{1}{n'} \cdot \int_{i=1}^{n'} (2 \ln i )di  &&\\
&= 1 + \frac{1}{n'} \cdot \bigg(2 n' \ln n' - 2n' + 2 \bigg) \leq 2 \ln (n' + 1), &&
\end{align*}
finishing the proof.
\end{proof}

\end{document}